\newcommand{\sac}[1]{%
  \begingroup
  \begingroup\lccode`~=`, \lowercase{\endgroup
    \edef~{\mathchar\the\mathcode`, \penalty0 \noexpand\hspace{0pt plus 1em}}%
  }\mathcode`,="8000 #1%
  \endgroup
}
\DeclareMathOperator{\lcm}{lcm}
\newcommand{\ignore}[1]{}
\newtheorem{theorem}{Theorem}
\newtheorem{lemma}[theorem]{Lemma}
\renewcommand{\Pr}{{\bf Pr}}
\newcommand{\D}{{\cal D}}
\newcommand{\N}{{\mathbb{N}}}
\newcommand{\sLinear}{{\mbox{-Linear}}}
\newcommand{\Linear}{{\mbox{Linear}}}
\begin{document}

\title{An Optimal Tester for $k$-Linear}
\author{{\bf Nader H. Bshouty}\\ Dept. of Computer Science\\ Technion,  Haifa, 32000\\
}

\maketitle
\begin{abstract}
A Boolean function $f:\{0,1\}^n\to \{0,1\}$ is $k$-linear if it returns the sum (over the binary field $F_2$) of $k$ coordinates of the input. In this paper, we study property testing of the classes $k$-Linear, the class of all $k$-linear functions, and $k$-Linear$^*$, the class $\cup_{j=0}^kj$-Linear.
We give a non-adaptive distribution-free two-sided $\epsilon$-tester for $k$-Linear that makes
$$O\left(k\log k+\frac{1}{\epsilon}\right)$$ queries.
This matches the lower bound known from the literature.

We then give a non-adaptive distribution-free one-sided $\epsilon$-tester for $k$-Linear$^*$ that makes the same number of queries and show that any non-adaptive uniform-distribution one-sided $\epsilon$-tester for $k$-Linear must make at least $ \tilde\Omega(k)\log n+\Omega(1/\epsilon)$ queries. The latter bound, almost matches the upper bound $O(k\log n+1/\epsilon)$ known from the literature. We then show that any adaptive uniform-distribution one-sided $\epsilon$-tester for $k$-Linear must make at least $\tilde\Omega(\sqrt{k})\log n+\Omega(1/\epsilon)$ queries.
\end{abstract}


\section{Inroduction}

Property testing of Boolean function was first considered in the seminal works of Blum, Luby and Rubinfeld~\cite{BlumLR93} and Rubinfeld and Sudan~\cite{RubinfeldS96} and has recently become a very active research area. See for example,~\cite{AlonKKLR05,BaleshzarMPR16,BelovsB16,BhattacharyyaKSSZ10,BlaisBM11,BlaisK12,Bshouty19,ChakrabartyS13,ChakrabartyS16a,ChakrabortyGM11,ChenDST15,ChenST14,ChenWX17,ChenWX17b,DiakonikolasLMORSW07,FischerKRSS02,
GoldreichGLRS00,GopalanOSSW11,KhotMS15,KhotS16,MatulefORS10,MatulefORS09,ParnasRS02,Saglam18} and other works referenced in the surveys and books~\cite{GoldreichSurvey10,Goldreich17,Ron08,Ron09}.

A Boolean function $f:\{0,1\}^n\to \{0,1\}$ is said to be linear if it returns the sum (over the binary field $F_2$) of some coordinates of the input, $k$-linear if it returns the sum of $k$ coordinates, and, $k$-linear$^*$ if it returns the sum of at most $k$ coordinates. The class Linear (resp. $k$-Linear and $k\sLinear^*$) is the classes of all linear functions (resp. all $k$-linear functions and $\cup_{i=0}^k k\sLinear$). Those classes has been of particular interest to the property testing community~\cite{BlaisBM11,BlaisK12,BlumLR93,Bshouty19,BuhrmanGMW13,FischerKRSS02,Goldreich10,
Goldreich17,HalevyK07,Ron08,Ron09,RubinfeldS11,Saglam18}.

\subsection{The Model}

Let $f$ and $g$ be two Boolean functions $\{0,1\}^n\to \{0,1\}$ and let $\D$ be a distribution on $\{0,1\}^n$. We say that $f$ is {\it $\epsilon$-far} from $g$ with respect to (w.r.t.) $\D$ if $\Pr_\D[f(x)\not= g(x)]\ge \epsilon$ and {\it $\epsilon$-close} to $g$ w.r.t. $\D$ if $\Pr_\D[f(x)\not= g(x)]\le \epsilon$.

In the uniform-distribution and distribution-free property testing model, we consider the problem of testing a class of Boolean function $C$. In the distribution-free testing model (resp. uniform-distribution testing model), the {\it tester} is a randomized algorithm that has access to a Boolean function $f:\{0,1\}^n\to\{0,1\}$ via a black-box oracle that returns $f(x)$ when a string $x$ is queried. The tester also has access to unknown distribution $\D$ (resp. uniform distribution) via an oracle that returns $x\in\{0,1\}^n$ chosen randomly according to the distribution~$\D$ (resp. according to the uniform distribution).
A {\it distribution-free tester},~\cite{GoldreichGR98}, (resp. {\it uniform-distribution tester}) ${\cal A}$ for $C$ is an tester that, given as input a distance parameter $\epsilon$ and the above two oracles to a Boolean function $f$,
\begin{enumerate}
\item if $f\in C$ then ${\cal A}$ accepts with probability at least $2/3$.
\item if $f$ is $\epsilon$-far from every $g\in C$ w.r.t. $\D$ (resp. uniform distribution) then ${\cal A}$ rejects with probability at least $2/3$.
\end{enumerate}

We will also call ${\cal A}$ an {\it $\epsilon$-tester for the class $C$} or an algorithm for {\it $\epsilon$-testing $C$}. We say that ${\cal A}$ is {\it one-sided} if it always accepts when $f\in C$; otherwise, it is called {\it two-sided} tester. The {\it query complexity of ${\cal A}$} is the maximum number of queries ${\cal A}$ makes on any Boolean function~$f$. If the query complexity is $q$ then we call the tester a {\it $q$-query tester} or a tester with {\it query complexity $q$}.

In the {\it adaptive testing} (uniform-distribution or distribution-free) the queries can depend on the answers of the previous queries where in the {\it non-adaptive testing} all the queries are fixed in advance by the tester.

\ignore{In the {\it non-adaptive distribution-free testing} the tester first
gets a labeled sample $(x^{(1)}, f(x^{(1)})), . . . , $ $(x^{(q')}, f(x^{(q')}))$ from the sample oracle. Based on them, it asks black box queries on strings $y^{(1)}, . . . , y^{(q)}$. The $y^{(i)}$’s may depend on the random pairs $(x^{(j)}, f(x^{(j)}))$ received from the sampling oracle, but the $i$-th black-box query string $y^{(i)}$ may not
depend on the responses $f(y^{(1)}), . . . , f(y^{(i-1)})$ to any of the $i-1$ earlier queries.}

In this paper we study testers for the classes $k$-Linear and $k$-Linear$^*$.

\subsection{Prior Results}
Throughout this paper we assume that $k<\sqrt{n}$. Blum et al. \cite{BlumLR93} gave an $O(1/\epsilon)$-query non-adaptive uniform-distribution one-sided $\epsilon$-tester (called BLR tester) for Linear. Halevy and Kushilevitz,~\cite{HalevyK07}, used a self-corrector (an algorithm that computes $g(x)$ from a black box query to $f$ that is $\epsilon$-close to $g$) to reduce distribution-free testability to uniform-distribution testability. This reduction gives an $O(1/\epsilon)$-query non-adaptive distribution-free one-sided $\epsilon$-tester for Linear. The reduction can be applied to any subclass of Linear. In particular, any $q$-query uniform-distribution $\epsilon$-tester for $k$-Linear ($k\sLinear^*$) gives a $O(q)$-query distribution-free $\epsilon$-tester.

It is well known that if there is a $q_1$-query uniform-distribution $\epsilon$-tester for Linear and a $q_2$-query uniform-distribution $\epsilon$-tester for the class $k$-Junta\footnote{The class of boolean functions that depends on at most $k$ coordinates} then there is an $O(q_1+q_2)$-query uniform-distribution $O(\epsilon)$-tester for $k\sLinear^*$. Since $k\sLinear=k\sLinear^*\backslash (k-1)\sLinear^*$, if there is a $q$-query uniform-distribution $\epsilon$-tester for $k\sLinear^*$ then there is an $O(q)$-query uniform-distribution two-sided $\epsilon$-tester for $k\sLinear$. Therefore, all the results for testing $k$-Junta are also true for $k\sLinear^*$ and $k$-Linear in the uniform-distribution model.

For lower bounds on the number queries for two-sided uniform-distribution testing $k$-Linear (see the table in Figure~\ref{Table}): For non-adaptive testers
Fisher, et al.~\cite{FischerKRSS02} gave the lower bound $\Omega(\sqrt{k})$. Goldreich~\cite{Goldreich10}, gave the lower bound $\Omega(k)$. In~\cite{BlaisK12}, Blais and Kane gave the lower bound $2k-o(k)$. Then in~\cite{BlaisBM11}, Blais et al. gave the lower bound $\Omega(k\log k)$. For adaptive testers, Goldreich~\cite{Goldreich10}, gave the lower bound $\Omega(\sqrt{k})$. Then Blais et al.~\cite{BlaisBM11} gave the lower bound $\Omega(k)$ and in~\cite{BlaisK12}, Blais and Kane gave the lower bound $k-o(k)$. Then in~\cite{Saglam18}, Saglam gave the lower bound $\Omega(k\log k)$. This bound with the trivial $\Omega(1/\epsilon)$ lower bound gives the lower bound
\begin{eqnarray}\label{Lower}
\Omega\left(k\log k+\frac{1}{\epsilon}\right)
\end{eqnarray}
for the query complexity of any adaptive uniform-distribution (and distribution-free) two-sided testers.

For upper bounds for uniform-distribution two-sided $\epsilon$-testing $k$-Linear, Fisher, et al.~\cite{FischerKRSS02} gave the first adaptive tester that makes $O(k^2/\epsilon)$ queries. In \cite{BuhrmanGMW13}, Buhrman et al. gave a non-adaptive tester that makes $O(k\log k)$ queries for any constant~$\epsilon$.
As is mentioned above, testing $k$-Linear can be done by first testing if the function is $k$-Junta and then testing if it is Linear. Therefore, using Blais~\cite{Blais08,Blais09} adaptive and non-adaptive testers for $k$-Junta we get adaptive and non-adaptive uniform-distribution testers for $k$-Linear that makes $ O(k\log k+k/\epsilon)$ and $\tilde O(k^{1.5}/\epsilon)$ queries, respectively.

For upper bounds for two-sided distribution-free testing $k$-Linear, as is mentioned above, from Halevy et al. reduction in~\cite{HalevyK07}, an adaptive and non-adaptive distribution-free $\epsilon$-tester can be constructed from adaptive and non-adaptive uniform-distribution $\epsilon$-testers. This gives an adaptive and non-adaptive distribution-free two-sided testers for $k$-Linear that makes  $O(k\log k+k/\epsilon)$ and $\tilde O(k^{1.5}/\epsilon)$ queries, respectively. See the table in Figure~\ref{Table}.

\subsection{Our Results}
In this paper we prove
\begin{theorem} For any $\epsilon>0$, there is a polynomial time non-adaptive distribution-free one-sided $\epsilon$-tester for $k$-Linear$^*$ that makes $$O\left(k\log k+\frac{1}{\epsilon}\right)$$ queries.
\end{theorem}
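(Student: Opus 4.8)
The plan is to reduce, via the BLR linearity test and the Halevy--Kushilevitz self-corrector, to one clean sub-problem: one-sidedly test, under the uniform distribution and with essentially exact query access, whether a linear $g=\sum_{i\in S}x_i$ has $|S|\le k$. Concretely: (a) run BLR with $O(k\log k+1/\epsilon)$ queries --- it always accepts a linear $f$, rejects any $f$ that is $\Omega(\min(1/(k\log k),\epsilon))$-far from $\Linear$ under the uniform distribution, and otherwise guarantees $f$ is close enough (under uniform) to a unique linear $g$ that the self-corrector returns $g(x)$ reliably over a batch of $O(k\log k)$ calls --- and exactly, with probability $1$, when $f$ is itself linear, which is what will protect one-sidedness; (b) draw $O(1/\epsilon)$ samples $x\sim\D$ and reject if $f(x)\neq g(x)$ on any of them (a linear $f$ never triggers this). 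If $f$ is $\epsilon$-far from $k\sLinear^*$ with respect to $\D$, then either $\Pr_\D[f\ne g]\ge\epsilon/2$ --- caught by (b) with high probability --- or $g$ itself is $\ge\epsilon/2$-far from $k\sLinear^*$ with respect to $\D$, so $g\notin k\sLinear^*$, i.e.\ $|S|>k$. It remains to reject in that last case while never rejecting when $|S|\le k$.

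For the core test I would exploit that, for a linear $g$, a single query pair $(x,\,x\oplus 1_R)$ with $g(x)\ne g(x\oplus 1_R)$ is an incontestable witness that $S\cap R\ne\emptyset$, since $g(x)\oplus g(x\oplus 1_R)=|S\cap R|\bmod 2$; hence any $k+1$ pairwise disjoint ``hit'' sets certify $|S|\ge k+1$, and no such family can ever be produced when $|S|\le k$, which makes one-sidedness automatic. The basic move is to take a random partition of $[n]$ into $\Theta(k)$ blocks and, inside each block, test $\Theta(\log k)$ random sub-blocks for being hit (random sub-blocks defeat parity cancellation when an even number of relevant coordinates land in one block); by a balls-in-bins argument, once $|S|$ exceeds $Ck$ for a suitable constant, at least $k+1$ distinct blocks are hit with high probability. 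This costs $O(k\log k)$ queries; all its self-correction calls are reliable by part (a), a union bound over the $O(k\log k)$ calls keeps the error below $1/3$, and the resulting non-adaptive, polynomial-time algorithm rejects only when it holds an explicit witness --- a BLR violation, a $\D$-sample on which $f$ disagrees with a genuine linear function, or $k+1$ separately-witnessed blocks --- none of which can occur for $f\in k\sLinear^*$; soundness follows by combining the three rejection mechanisms.

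The step I expect to be the main obstacle --- and exactly where $k\sLinear^*$ must become easier to test one-sidedly than $k\sLinear$ --- is the near-boundary regime $k<|S|\le Ck$. With only $\Theta(k)$ blocks the excess relevant coordinates collide, so one cannot directly expose $k+1$ separately-witnessed blocks, and the two obvious repairs each blow the budget: refining to $\Theta(k^2)$ blocks (the birthday bound needed to separate $k+1$ coordinates) costs $\Theta(k^2)$, while isolating individual relevant coordinates by binary search costs $\Theta(\log n)$ each and reintroduces precisely the $\log n$ factor that the theorem must avoid --- and that provably cannot be avoided for $k\sLinear$. A workable argument must therefore read off ``$|S|>k$'' from $O(k\log k)$ flips of $1_S$ without ever pinning a relevant coordinate down to fewer than $\mathrm{poly}(k)$ candidates; I would look for it along the lines of a randomized, non-uniform partition scheme combined with a sparse-recovery-flavoured analysis that uses the promise ``$S$ is either of size $\le k$ or dense relative to the chosen granularity'', with the real work being to keep that argument strictly one-sided. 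Granting that step, the remaining pieces --- the Halevy--Kushilevitz reduction in the first paragraph and the balls-in-bins counting in the second --- are routine.
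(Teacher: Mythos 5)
Your first paragraph (BLR, self-correction, $\D$-sample consistency check) and the coarse witness-based test in your second paragraph track the paper's Stages 1, 2.1, and 3 closely. But in the third paragraph you explicitly identify the near-boundary regime $k<|S|\le Ck$ as the step you cannot close, and this is exactly where your proposal has a genuine gap: you reject the correct fix for the wrong reason. You claim that refining to $\Theta(k^2)$ blocks ``costs $\Theta(k^2)$'' queries, but that conflates the number of blocks with the number of queries. After projecting $g$ onto $r=\Theta(k^2)$ super-coordinates (i.e., the parity within each block), the induced function is a linear function over $r$ variables that, conditional on the coarse test having passed, has at most $O(k)$ relevant super-coordinates. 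Learning an $O(k)$-sparse parity over $r$ variables non-adaptively and deterministically costs only $O(k\log r)=O(k\log k)$ queries (Hofmeister's algorithm; this is a compressed-sensing/group-testing-type bound, not one query per block). With $r=\Theta(k^2)$ the birthday bound separates all $\le 8k$ relevant coordinates with probability $\ge 5/6$, so the learned projection has the same sparsity as $g$; if it has $>k$ relevant super-coordinates, reject. Your worry about a $\log n$ factor is also a red herring once this projection is in place: the ambient dimension for the learning step is $\mathrm{poly}(k)$, not $n$, so the logarithm is $\log k$, not $\log n$.

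One-sidedness of this step is exactly the place where $k\sLinear^*$ and $k\sLinear$ diverge, and is worth stating explicitly since you flag it as what ``the real work'' must preserve: collisions of relevant coordinates in a block can only make the projected sparsity drop (an even number of colliding coordinates cancels), never grow. So if $f\in k\sLinear^*$ and $g=f$, the projected function is in $k\sLinear^*$ with probability $1$, and the reject condition ``learned function $\notin k\sLinear^*$'' can never fire --- this is the incontestable-witness structure you were looking for, obtained for free from the projection rather than from isolating individual coordinates. (For $k\sLinear$ it fails, because a collision can make a genuinely $k$-linear $f$ project to $(k-2)\sLinear$, which is why the paper's exact-$k$ tester is two-sided.) In short: you correctly located the hard case, but the repair you dismissed as too expensive is in fact the $O(k\log k)$ solution once you replace ``one query per block'' with ``sparse-parity learning over the $\mathrm{poly}(k)$ blocks''.
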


By the reduction from $k\sLinear$ to $k\sLinear^*$, we get
\begin{theorem} For any $\epsilon>0$, there is a polynomial time non-adaptive distribution-free two-sided $\epsilon$-tester for $k$-Linear that makes $$O\left(k\log k+\frac{1}{\epsilon}\right)$$ queries.
\end{theorem}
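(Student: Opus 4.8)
The plan is to derive the two-sided $k\sLinear$ tester from the one-sided $k\sLinear^*$ tester of Theorem~1 through the identity $k\sLinear = k\sLinear^*\setminus(k-1)\sLinear^*$: a linear function has exactly $k$ relevant variables iff it has at most $k$ and does not have at most $k-1$. Thus I would run, in parallel, the Theorem~1 tester at level $k$ (to certify ``at most $k$ variables'') and at level $k-1$ (to certify ``not at most $k-1$'', i.e.\ rejecting precisely when that tester would accept). Applying Theorem~1 with parameters $k$ and $k-1$ yields non-adaptive distribution-free one-sided $\epsilon$-testers of query complexity $O(k\log k+1/\epsilon)$, so the combined tester is non-adaptive and makes $O(k\log k+1/\epsilon)$ queries as required; what remains is correctness.

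In the distribution-free model this needs care, since ``$f$ is $\epsilon$-close to $k\sLinear$ with respect to $\D$'' means that $f$ agrees on $\mathrm{supp}(\D)$ with \emph{some} $k$-linear function, a subtler condition than ``the linear function that $f$ resembles has $k$ variables''. Following the Halevy--Kushilevitz reduction from distribution-free to uniform-distribution testing for subclasses of Linear, I would (i)~run a BLR linearity test under the uniform distribution with a small constant parameter, so that passing it certifies that $f$ is uniformly close to a unique linear function $\hat g$, to which we then have reliable oracle access by self-correction; (ii)~run the variable-count parts of the Theorem~1 testers at levels $k$ and $k-1$ against $\hat g$, accepting only if the level-$k$ test accepts and the level-$(k-1)$ test rejects; and (iii)~add a consistency check that draws $O(1/\epsilon)$ samples $x\sim\D$ and verifies $f(x)=\hat g(x)$ by self-correction.

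Correctness then follows by cases. If $f\in k\sLinear$, the BLR test passes, self-correction returns $\hat g=f$, the level-$k$ test accepts, the level-$(k-1)$ test rejects (a linear function with exactly $k$ variables is $1/2$-far from every $(k-1)$-junta, hence from $(k-1)\sLinear^*$, under the uniform distribution), and the consistency check passes, so we accept with probability at least $2/3$. If $f$ is $\epsilon$-far from $k\sLinear$ with respect to $\D$, then either the BLR test already rejects; or $\hat g$ has more than $k$ variables and the level-$k$ test rejects (a linear function with more than $k$ variables is $1/2$-far from every $k$-junta); or $\hat g$ has at most $k-1$ variables, so the level-$(k-1)$ test accepts and we reject; or $\hat g$ has exactly $k$ variables, in which case $\hat g\in k\sLinear$, so $f$ must be $\epsilon$-far from $\hat g$ with respect to $\D$ and the consistency check rejects with probability at least $2/3$. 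The main difficulty I expect is the interface between the two distributions: the self-correction error must be kept small enough for the level-$k$ and level-$(k-1)$ subroutines to behave correctly, yet spending $\Theta(\log(k\log k+1/\epsilon))$ queries per self-corrected value would cost an extra logarithmic factor; the way around this is to invoke those subroutines in a form that tolerates a small constant fraction of faulty oracle answers, keeping the whole tester within $O(k\log k+1/\epsilon)$ queries.
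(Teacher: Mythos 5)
Your refined plan matches the paper's tester \textbf{Test-Linear}$_k$ in all the essential steps: a BLR linearity test, self-correction to a unique linear $\hat g$, a uniform-distribution variable-count phase on $\hat g$ (random bins plus Hofmeister learning of the projected function), and a final $\D$-consistency check via self-correction. You also correctly back off the naive black-box reduction---the paper explicitly warns (Section~2.2) that for general $\D$ a function $f$ can be $\D$-close to a $k$-linear and a $(k-1)$-linear function simultaneously, so running the Theorem~1 tester at levels $k$ and $k-1$ directly on $(f,\D)$ is unsound. The only structural difference from the paper is that you run the variable-count machinery twice (at levels $k$ and $k-1$), whereas the paper's stage~2.2 simply checks that the \emph{learned} projected function lies in $k\sLinear$ exactly rather than in $k\sLinear^*$; this single learn-and-count folds ``$\in k\sLinear^*$ and $\notin (k-1)\sLinear^*$'' into one pass and avoids the duplicated level-$(k-1)$ run. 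The one genuine gap is the self-correction calibration that you yourself flag as the main difficulty but do not resolve: a ``small constant'' BLR parameter is not sufficient, and the two fixes you sketch (per-query majority vote, or fault-tolerant subroutines) are respectively a log-factor too costly and unnecessary. The paper's fix is to run BLR at $\epsilon'=\Theta(1/(k\log k))$, costing $O(1/\epsilon')=O(k\log k)$ queries; then each self-corrected value is correct with probability $1-O(1/(k\log k))$, so a union bound over the $O(k\log k)$ stage-2 queries keeps the aggregate self-correction failure below a small constant with no change to the subroutines. With that $\epsilon'$, your completeness and soundness case analysis goes through.
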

\ignore{
When the success of the tester is amplified to $1-\delta$ by the standard majority vote, the query complexity becomes $O(k(\log k)\log (1/\delta)+(1/\epsilon)\log(1/\delta))$. In this paper we show

\begin{theorem} For any $\epsilon>0$ and $k^2\le \delta n$, there is a polynomial time non-adaptive distribution-free two-sided $\epsilon$-tester for $k$-Linear that has success probability at least $1-\delta$ and makes $$O\left(k\log \frac{k}{\delta}+\frac{1}{\epsilon}\log\frac{1}{\delta}\right)$$ queries.
\end{theorem}
By Saglam's lower bound in~\cite{Saglam18} and the trivial lower bound $\Omega((1/\epsilon)\log(1/\delta))$, our tester is optimal.}

For one-sided testers for $k$-Linear we prove

\begin{theorem} Any non-adaptive uniform-distribution one-sided $\epsilon$-tester for $k$-Linear must make at least  $\tilde\Omega(k)\log n+\Omega(1/\epsilon)$ queries.
\end{theorem}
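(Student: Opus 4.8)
The plan is to prove the two terms $\tilde\Omega(k)\log n$ and $\Omega(1/\epsilon)$ separately, each by Yao's principle, exploiting throughout the defining feature of a one-sided tester: on input $f$ it may output \textsf{reject} only when the transcript of query--answer pairs it has collected is \emph{inconsistent with every} $k$-linear function, since otherwise it cannot distinguish $f$ from a $g\in k\sLinear$ with the same transcript and is forced to accept. Consequently the optimal one-sided non-adaptive tester rejects exactly on inconsistent transcripts, and for a fixed $f$ that is $\epsilon$-far from $k\sLinear$ it fails precisely when, with probability more than $1/3$ over its (randomized, non-adaptive) choice of query set $Q$, the restriction $f|_Q$ extends to a $k$-linear function. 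Averaging this over a hard distribution of far functions reduces each lower bound to a combinatorial statement about single fixed query sets.

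For the $\Omega(1/\epsilon)$ term, fix any $k$-linear $f_0$, let $B\subseteq\{0,1\}^n$ be a uniformly random set of density $\epsilon\le 1/4$, and take the hard function to be $g=f_0\oplus \mathbf 1_B$; a one-line computation gives $d(g,f_0)=\epsilon$ and $d(g,h)\ge 1/2-\epsilon\ge\epsilon$ for every other linear $h$, so $g$ is $\epsilon$-far from $k\sLinear$. If a query set $Q$ avoids $B$ then $g|_Q=f_0|_Q$ is consistent with $k\sLinear$; since $B$ is random of density $\epsilon$, any fixed set of $q$ queries misses $B$ with probability at least $1-q\epsilon$, so for $q<1/(3\epsilon)$ the tester accepts $g$ with probability more than $1/3$, a contradiction.

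For the $\tilde\Omega(k)\log n$ term take $\epsilon$ to be a small constant, and let the hard distribution be $g=g_W$ with $W\subseteq[n]$ a uniformly random set of size $k+2$ (every such $g_W$ is $1/2$-far from $k\sLinear$). The parity of $|W|$ is chosen equal to that of $k$ deliberately: a $(k+1)$-linear function is refuted by the single query $1^n$, because every $k$-linear function takes value $k\bmod 2$ there while a $(k+1)$-linear function takes the opposite value, and passing to size $k+2$ removes this and the other ``symmetric'' obstructions. By Yao's principle it then suffices to show that for every $Q$ with $|Q|=q<\tilde\Omega(k)\log n$, a random $W$ of size $k+2$ satisfies $\Pr_W\big[g_W|_Q\text{ extends to a }k\text{-linear function}\big]\ge 1/3$. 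Identifying linear functions with their coefficient vectors, $g_W|_Q$ extends to a $k$-linear function iff the coset $\mathbf 1_W+C$, where $C:=\mathrm{span}(Q)^{\perp}$ is a code of dimension $\ge n-q$, contains a weight-$k$ vector --- equivalently, iff $C$ has a codeword at Hamming distance exactly $k$ from $\mathbf 1_W$.

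Thus the whole bound reduces to a coding-theoretic claim: a binary linear code $C$ of length $n$ and dimension $\ge n-q$ with $q<\tilde\Omega(k)\log n$ has a codeword at distance exactly $k$ from at least a constant fraction of all weight-$(k+2)$ vectors. This is the sphere-packing (Hamming) bound run in reverse. If $C$ had no nonzero codeword of weight $\le 2k+2$ it would correct $k+1$ errors and hence need redundancy $q=\Omega\!\big(k\log(n/k)\big)=\Omega(k\log n)$ (using $k<\sqrt n$); so a code of small redundancy must possess low-weight codewords, and a convexity/pigeonhole count over its $\le 2^{q}$ cosets shows it must possess many of them. Each even-weight codeword $c$ of weight $w\le 2k+2$ ``covers'' those weight-$(k+2)$ vectors $\mathbf 1_W$ with $|W\cap\mathrm{supp}(c)|=w/2+1$ (for which $\mathbf 1_W\oplus c$ has weight $k$), and the plan is to show that as $c$ ranges over the many low-weight codewords of $C$ these covered sets exhaust a constant fraction of the weight-$(k+2)$ slice. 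I expect this last step to be the main obstacle: a single low-weight codeword covers only a polynomially small fraction of the slice, so one needs a robust lower bound on the \emph{number} of low-weight codewords together with a careful bound on the pairwise overlaps of the covered sets (a second-moment / inclusion--exclusion estimate). Controlling those overlaps --- and the requirement of distance \emph{exactly} $k$ rather than at most $k$ --- is where the polylogarithmic slack absorbed into the $\tilde\Omega$ enters; the clean sphere-packing bound already disposes of the case of codes with no low-weight codewords and yields $\Omega(k\log n)$ there.
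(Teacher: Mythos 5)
Your $\Omega(1/\epsilon)$ term is the standard random bit-flip argument against a one-sided tester and is correct (modulo the usual care about whether $|B|$ is exactly $\epsilon 2^n$; this is routine). The $\tilde\Omega(k)\log n$ term is where your route genuinely departs from the paper's, and where the gap is.

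You apply Yao's principle against a hard distribution that is uniform over weight-$(k+2)$ parities $g_W$, which reduces the bound to the following coverage claim: for every fixed query set $Q$ with $|Q|=q<\tilde\Omega(k)\log n$, the code $C=\mathrm{span}(Q)^{\perp}$ has a codeword at Hamming distance exactly $k$ from at least a constant fraction of weight-$(k+2)$ vectors. You correctly identify this as the crux and acknowledge you do not close it. The difficulty is real, and it is not just a technical headache. The Hamming bound applied to $C$ (or to a shortening of $C$) does force low-weight codewords to exist and even forces their supports to cover most of $[n]$ once $q$ is small, but it does not force them to distribute their support patterns in a way that a random $(k+2)$-set $W$ hits with constant probability. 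A single even-weight codeword $c$ with $\mathrm{wt}(c)=w$ covers only the $W$ with $|W\cap\mathrm{supp}(c)|=w/2+1$, a $\Theta\bigl((k/n)^{w/2+1}\bigr)$-fraction of the slice, which for $k<\sqrt n$ is already $O(k^2/n^2)=o(1)$ even at $w=2$; and a collection of disjoint-support weight-$2$ codewords covering all of $[n]$ still only yields coverage probability $\Theta(k^2/n)=o(1)$. So the claim must rest on having \emph{many, heavily overlapping} low-weight codewords, and controlling the resulting second moment is exactly what you flag as the main obstacle. I do not see how to finish it, and I would not be surprised if a code tailored to your threshold defeats the fixed-weight-$(k+2)$ hard distribution outright.

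The paper sidesteps this entirely, and the contrast is instructive. Instead of a random single-weight hard distribution, it uses the \emph{small explicit nested family} $C=\{0,\;x_n,\;x_n+x_{n-1},\;\ldots,\;x_n+\cdots+x_{n-k+2}\}$ of $k$ parities of weights $0,1,\ldots,k-1$. Because $|C|=k$, it can amplify the tester's soundness to error $1/(2k)$ with an $O(\log k)$ factor in queries and then take a union bound over $C$, extracting by an averaging argument a single deterministic seed $s_0$, hence a single query matrix $M$, that separates \emph{every} weight-$k$ parity from \emph{every} $g^{(\ell)}\in C$ simultaneously. The nesting is the decisive trick: from $Mx^f\neq Mx^{g^{(\ell)}}$ one cancels the shared last-$\ell$ columns and gets that every sum of at most $k$ columns among the first $n-k+1$ columns of $M$ is nonzero; Hamming's bound then gives $\Omega(k\log n)$ rows immediately and the $\log k$ amplification cost is the source of the polylog in $\tilde\Omega$. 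In effect the paper converts a probabilistic coverage problem into a worst-case condition (the matrix must $(k+1)$-error-correct on a prefix of coordinates) for which the standard sphere-packing bound is exactly the right tool. Your single-weight approach, by contrast, can at best control sums of \emph{exactly} $k+O(1)$ columns, and the paper's own adaptive analysis — where only a single $g^{(\ell)}$ can be used and the resulting quantity $\Pi(n,\ell)$ must be maximized over $\ell$ to reach only $\tilde\Omega(\sqrt k)\log n$ — is evidence that a single fixed weight class is genuinely weaker than the full nested family.
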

This almost matches the upper bound $O(k\log n+1/\epsilon)$ that follows from the reduction of Goldreich et. al~\cite{GoldreichGR98} and the non-adaptive deterministic exact learning algorithm of Hofmeister~\cite{Hofmeister99} that learns $k$-Linear with $O(k\log n)$ queries.

For adaptive testers we prove
\begin{theorem} Any adaptive uniform-distribution one-sided $\epsilon$-tester for $k$-Linear must make at least  $\tilde\Omega(\sqrt{k})\log n+\Omega(1/\epsilon)$ queries.
\end{theorem}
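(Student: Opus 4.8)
The plan is to argue by Yao's minimax principle, treating the two summands separately. The $\Omega(1/\epsilon)$ term is obtained in the standard way: fix a single $k$-linear function $\chi_T$ and compare it, under the uniform distribution, with the function $g$ obtained from $\chi_T$ by flipping its value on a uniformly random set of $\lceil 2\epsilon\cdot 2^n\rceil$ inputs. Then $g$ is $\epsilon$-far from $k$-Linear (for $\epsilon$ below an absolute constant: $\chi_T$ moves to distance $2\epsilon\ge\epsilon$, while every other $k$-linear function, being at distance $1/2$ from $\chi_T$, stays at distance $\ge 1/2-2\epsilon\ge\epsilon$), whereas an algorithm making $o(1/\epsilon)$ queries hits the corrupted set with probability $o(1)$ and therefore behaves on $g$ exactly as on the $k$-linear function $\chi_T$, which a one-sided tester must accept. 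Mixing this pair with equal weight into the hard distribution below yields the stated sum, since the query complexity must be $\Omega$ of the maximum of the two bounds.

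For the main term it suffices, by Yao, to produce a distribution $\D$ supported on functions that are $(1/2)$-far from $k$-Linear on which every deterministic adaptive decision tree of depth $q=\tilde o(\sqrt k)\log n$ rejects with probability less than $2/3$. I would take $\D$ to be the law of $\chi_S$ for a uniformly random $S\subseteq[n]$ with $|S|=k+2$, a small variant of the instance behind the preceding theorem; since $|S|\neq k$, each such $\chi_S$ is at distance exactly $1/2$ from every $k$-linear function, so far-ness is free. Now comes the use of one-sidedness. When the tree is run on $\chi_S$ and reaches a leaf $\ell$, let $U_\ell\subseteq\FF_2^n$ be the span of the queries along the root-to-$\ell$ path, so $\dim U_\ell\le q$. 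The linear functions (of arbitrary support size) consistent with the observed transcript are exactly the affine coset $A_\ell=\mathbf{1}_S+U_\ell^{\perp}$, which always contains $\mathbf{1}_S$, a vector of Hamming weight $k+2$; and a one-sided tester may label $\ell$ a rejecting leaf only if $A_\ell$ contains no vector of Hamming weight $k$, since otherwise the corresponding $k$-linear function would reach $\ell$ as well and would have to be accepted. Writing $S_m$ for the Hamming sphere of radius $m$ in $\FF_2^n$, the set of functions on which the tree rejects is thus contained in $R:=\bigcup_{\ell\ \text{rejecting}}A_\ell$, a disjoint union of at most $2^q$ affine cosets, each of codimension at most $q$, none of which meets $S_k$; and the rejection probability equals $\sum_{\ell\ \text{rejecting}}|A_\ell\cap S_{k+2}|\,/\,\binom{n}{k+2}$. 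So the theorem reduces to the inequality
\[
\sum_{\ell\ \text{rejecting}}\ \big|\,A_\ell\cap S_{k+2}\,\big|\ <\ \tfrac23\binom{n}{k+2}.
\]

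The remaining, and hardest, ingredient is a purely combinatorial bound on one coset: a codimension-$r$ affine subspace $A\subseteq\FF_2^n$ that avoids $S_k$ must avoid all but a small fraction of $S_{k+2}$, say $|A\cap S_{k+2}|\le\binom{n}{k+2}\cdot\beta(r)$, with $\beta$ decaying fast enough that $2^{q}\beta(q)<2/3$ whenever $q=\tilde o(\sqrt k)\log n$. The structural reason one should expect such a bound is that, for $r$ well below the non-adaptive threshold $\Theta(k)\log n$, the only way a low-codimension coset can miss the sphere $S_k$ is through a parity-type obstruction: the row space of its defining system must contain a vector that agrees with the all-ones vector on all but a bounded number of coordinates. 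A weight-$(k+2)$ vector lying in such a coset then has all but a controlled number of its $k+2$ support coordinates forced by the remaining equations, and counting the admissible supports yields the estimate on $\beta(r)$. Plugging this per-coset bound into the displayed inequality and summing over the at most $2^q$ rejecting leaves gives the result; the fact that the resulting threshold is $\tilde\Theta(\sqrt k)\log n$ rather than the $\Theta(k)\log n$ of the preceding (non-adaptive) theorem is precisely the cost of this union over $2^q$ leaves---non-adaptively all rejecting cosets are cosets of one common subspace $U^{\perp}$, which constrains both their number and their geometry, whereas an adaptive tree is free to use $2^q$ cosets of $2^q$ different subspaces, so one can only trade the full $2^q$ against the per-coset gain $\beta(q)$.

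I expect the main obstacle to be exactly the estimate on $\beta(r)$: pinning down the structure of low-codimension cosets that avoid a Hamming sphere precisely enough to get the correct exponential rate. A first-moment argument already shows that $|A\cap S_{k+2}|$ is small once $r$ is moderately large, but extracting the rate needed to beat the $2^q$ branching seems to require the structural description above (an almost-all-ones vector in the row space) followed by a careful count of the supports of the weight-$(k+2)$ vectors compatible with the system. By contrast, once everything has been reduced to ``at most $2^q$ cosets, each of codimension at most $q$,'' the adaptive structure of the tree plays no further role, so this reduction should be the only place where adaptivity needs to be handled.
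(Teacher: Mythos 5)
Your framing of the problem — Yao, the coset decomposition at each leaf, and the observation that one-sidedness forces every rejecting leaf's coset $A_\ell$ to avoid the Hamming sphere $S_k$ — is sound, and the $\Omega(1/\epsilon)$ piece is the standard argument. But the proof as written has a genuine gap, and you flag it yourself: the entire weight of the $\tilde\Omega(\sqrt k)\log n$ term rests on the unproven per-coset bound $\beta(r)$, namely that a codimension-$r$ coset avoiding $S_k$ meets $S_{k+2}$ in at most a $\beta(r)$-fraction with $2^q\beta(q)<2/3$. This is not a routine estimate: a generic codimension-$q$ coset already has density $\approx 2^{-q}$ in $S_{k+2}$, so $2^q\beta(q)\approx 1$ with no slack. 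You therefore need a genuinely structural ``depletion'' phenomenon (row space forced to contain a near-all-ones vector, which then kills most of $S_{k+2}$ too), and nothing in the write-up establishes it. Until that lemma is proved with the stated rate, the argument does not close.

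There is also a more structural concern about the choice of hard distribution. You fix the decoy weight to $k+2$, a single value. The paper instead fixes a \emph{single} linear decoy $g^{(\ell)}$ of weight $\ell<k$, but crucially retains the freedom to pick $\ell$: after determinizing, the tree's query matrix $M$ must have no $(k-\ell)$-subset of its first $n-\ell$ columns summing to $0$, which gives $q\ge\Pi(n-\ell,k-\ell)$, and then Lemma~\ref{Lower2} takes $\max_{1\le m\le k}\Pi(n-k,m)$. That maximum is essential: for any \emph{fixed} $m$, $\Pi(n,m)$ can collapse (for odd $m$ the all-ones $1\times n$ matrix already certifies $\Pi(n,m)=1$), so no single decoy weight suffices. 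The paper's number-theoretic argument (Lemmas~\ref{N1}--\ref{recur}) builds a $[K]$-good matrix by concatenating $k_i$-good matrices whose weights $k_i$ are chosen via a gcd/lcm doubling scheme, then invokes the Hamming bound to force $\sum_i\Pi(n,k_i)=\Omega(K\log n)$, and hence $\Pi(n,k_i)=\tilde\Omega(\sqrt k\log n)$ for some $i$. Your distribution over $S_{k+2}$ gives up exactly this adversarial freedom, so even granting a per-coset lemma, it is not clear it would deliver the $\sqrt k$ rate rather than something weaker; the mechanism by which $\sqrt k$ (as opposed to $k$ or a constant) emerges is absent from the proposal, whereas in the paper it comes out explicitly from the $O(\log k)$ rounds of lcm-doubling.

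In short: the high-level reduction to ``$\le 2^q$ disjoint cosets, each avoiding $S_k$'' is a legitimate and different way to handle adaptivity, but it is not a proof until the $\beta(r)$ estimate is established, and it is not clear that estimate holds at the rate you need for the fixed decoy weight $k+2$. The paper avoids this entirely by reducing to the combinatorial quantity $\Pi(n,\ell)$ for an adversarially chosen $\ell$ and proving the lower bound there.
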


The table in~\ref{Table} summarizes all the results in the literature and our results for the class $k\sLinear$.

\begin{figure}[h!]
\begin{center}
\begin{tabular}{|l|c|l|l|c|c|}
 Upper/ &One-Sided/ & Adaptive/ &Uniform/& &  \\
 Lower &Two-Sided & Non-Adap. &Dist. Free&Result $O/\Omega$ & Reference \\
 \hline
 \hline
Upper &Two-Sided  &Adaptive&Uniform &$k^2/\epsilon$&\cite{FischerKRSS02}\\
\hline
 Upper &Two-Sided & Adaptive &Uniform &$k\log k+k/\epsilon$&\cite{Blais09}\\
  \hline
 Upper &Two-Sided & Adaptive &Dist. Free &$k\log k+k/\epsilon$&\cite{HalevyK07}\\
  \hline
  Upper &Two-Sided & Non-Adap.&Uniform &$k\log k$ ($\epsilon$ Const.)&\cite{BuhrmanGMW13}\\
 \hline
 Upper &Two-Sided& Non-Adap. &Uniform &$k^{1.5}/\epsilon$&\cite{Blais08}\\
\hline
 Upper &Two-Sided & Non-Adap. &Dist. Free &$k^{1.5}/\epsilon$&\cite{HalevyK07}\\
 \hline
 Upper &Two-Sided & Non-Adap. &Dist. Free &$k\log k+1/\epsilon$&Ours\\
\hline\hline
 Lower &Two-Sided &Non-Adap. &Uniform &$1/\epsilon$&Trivial \\
  \hline
 Lower &Two-Sided &Non-Adap. &Uniform &$\sqrt{k}+1/\epsilon$&\cite{FischerKRSS02} \\
 \hline
 Lower &Two-Sided &Non-Adap. &Uniform &${k}+1/\epsilon$&\cite{Goldreich10} \\
  \hline
 Lower &Two-Sided & Non-Adap. &Uniform &${k\log k}+1/\epsilon$&\cite{BlaisBM11} \\
 \hline
 Lower &Two-Sided &Adaptive &Uniform &$\sqrt{k}+1/\epsilon$&\cite{Goldreich10} \\
 \hline
 Lower &Two-Sided & Adaptive &Uniform &${k}+1/\epsilon$&\cite{BlaisBM11,BlaisK12} \\
 \hline
 Lower &Two-Sided & Adaptive &Uniform &${k\log k}+1/\epsilon$&\cite{Saglam18} \\
 \hline\hline
 Upper &One-Sided & Non-Adaptive &Dist. Free &${k\log n}+1/\epsilon$&\cite{GoldreichGR98} \\
 \hline
 Lower &One-Sided & Non-Adaptive &Uniform &$\tilde\Omega(k){\log n}+1/\epsilon$&Ours \\
 \hline
 Lower &One-Sided & Adaptive &Uniform &$\tilde\Omega(\sqrt{k}){\log n}+1/\epsilon$&Ours \\
 \hline
\end{tabular}
\end{center}
	\caption{A table of results for the testability of the class $k$-Linear.}
	\label{Table}
\end{figure}

\section{Overview of the Testers and Lower Bounds}
In this section we give overview of the techniques used for proving the results in this paper.

\subsection{One-sided Tester for $k$-Linear$^*$}
The tester for $k$-Linear$^*$ first runs the tester BLR of Blum et al.~\cite{BlumLR93} to test if the function $f$ is $\epsilon'$-close to Linear w.r.t. the uniform distribution, where $\epsilon'=\Theta(1/(k\log k))$. BLR is one-sided tester and therefore, if $f$ is $k$-linear then BRG accepts with probability 1. If $f$ is $\epsilon'$-far from Linear w.r.t. the uniform distribution then, with probability at least $2/3$, BLR rejects. Therefore, if the tester BLR accepts, we may assume that $f$ is $\epsilon'$-close to Linear w.r.t. the uniform distribution. Let $g\in$Linear be the function that is $\epsilon'$-close to $f$. If $f$ is $k$-linear$^*$ then $f=g$. This is because $\epsilon'<1/8$ and the distance (w.r.t. the uniform distribution) between every two linear functions is $1/2$. BLR makes $O(1/\epsilon')=O(k\log k)$ queries.

In the second stage, the tester tests if $g$ (not $f$) is $k$-linear$^*$. Let us assume for now that we can query $g$ in every string. Since $g\in$Linear, we need to distinguish between functions in $k$-Linear$^*$ and functions in Linear$\backslash k$-Linear$^*$.
We do that with two tests. We first test if $g\in  8k$-Linear$^*$  and then test if it is in $k$-Linear$^*$ assuming that it is in $8k$-Linear${}^*$. In the first test, the tester ``throws'', uniformly at random, the variables of $g$ into $16k$ bins and tests if there is more than $k$ non-empty bins. If $g$ is $k$-linear$^*$ then the number of non-empty bins is always less than $k$. If it is $k'$-linear for some $k'>8k$ then with high probability (w.h.p.) the number of non-empty bins is greater than $k$. Notice that if $f$ is $k$-linear$^*$ then the test always accepts and therefore it is one-sided. This tests makes $O(k)$ queries to $g$.

The second test is testing if $g$ is in $k$-Linear$^*$ assuming that it is in $8k$-Linear${}^*$. This is done by projecting the variables of $g$ into $r=O(k^2)$ coordinates uniformly at random and learning (finding exactly) the projected function using the non-adaptive deterministic Hofmeister's algorithm,~\cite{Hofmeister99}, that makes $O(k\log r)=O(k\log k)$ queries. Since $g\in 8k$-Linear${}^*$, w.h.p., the relevant coordinates of the function are projected to different coordinates, and therefore, w.h.p., the learning gives a linear function that has exactly the same number of relevant coordinates as $g$. The tester accepts if the number of relevant coordinates in the projected function is at most $k$.
If $g\in k$-Linear$^*$, then the projected function is in $k$-Linear$^*$ with probability~1 and therefore this test is one-sided. This test makes $O(k\log k)$ queries.

We assumed that we can query $g$. We now show how to query $g$ in $O(k\log k)$ strings so we can apply the above two tests. For this, the tester uses self-corrector,~\cite{BlumLR93}. To compute $g(z)$, the self-corrector chooses a uniform random string $a\in\{0,1\}^n$ and computes $f(z+a)+f(a)$. Since $f$ is $O(1/(k\log k))$-close to $g$ w.r.t. the uniform distribution, we have that for any string $z\in\{0,1\}^n$ and an $a\in\{0,1\}^n$ chosen uniformly at random, with probability at least $1-O(1/(k\log k))$,  $f(z+a)+f(a)=g(z+a)+g(a)=g(z)$. Therefore, w.h.p., the self-corrector computes correctly the values of $g$ in $O(k\log k)$ strings. If $f\in k$-Linear then $g=f$ and $f(z+a)+f(z)=f(z)=g(z)$, i.e., the self-corrector gives the value of $g$ with probability $1$. This shows that the above two tests are one-sided.

Now, if $f$ is $k$-linear$^*$ then $f=g$. If $f$ is $\epsilon$-far from every function in $k$-Linear$^*$ w.r.t. $\D$ then it is $\epsilon$-far from $g$ w.r.t.~$\D$.

In the final stage the tester tests whether $f$ is equal to $g$ or $\epsilon$-far from $g$ w.r.t. $\D$. Here again the tester uses self-corrector. It asks for a sample $\{(z^{(i)},f(z_i))|i\in [t]\}$ according to the distribution $\D$ of size $t=O(1/\epsilon)$ and tests if $f(z^{(i)})=f(z^{(i)}+a^{(i)})+f(a^{(i)})$ for every $i\in [t]$, where $a^{(i)}$ are i.i.d. uniform random strings. If $f(z^{(i)})=f(z^{(i)}+a^{(i)})+f(a^{(i)})$ for all $i$ then it accepts, otherwise, it rejects. If $f$ is $k$-linear then $f(z^{(i)})=f(z^{(i)}+a^{(i)})+f(a^{(i)})$ for all $i$ and the tester accepts with probability $1$. Now suppose $f$ is $\epsilon$-far from $g$ w.r.t. $\D$. Since $f$ is $\epsilon'$-close to $g$ w.r.t. the uniform distribution and $\epsilon'\le 1/8$ we have that, with probability at least $7/8$,  $f(z^{(i)}+a^{(i)})+f(a^{(i)})=g(z^{(i)}+a^{(i)})+g(a^{(i)})=g(z^{(i)})$. Therefore, assuming the latter happens, then, with probability at least $1-\epsilon$ we have $f(z^{(i)})\not=g(z^{(i)})=f(z^{(i)}+a^{(i)})+f(a^{(i)})$. Thus, w.h.p, there is $i$ such that $f(z^{(i)})\not=f(z^{(i)}+a^{(i)})+f(a^{(i)})$ and the tester rejects. This stage is one-sided and makes $O(1/\epsilon)$ queries.

\subsection{Two-sided Testers for $k$-Linear}

As we mentioned in the introduction, the one-sided $q$-query uniform-distribution $\epsilon$-tester for $k\sLinear^*$ gives a two-sided uniform-distribution $O(q)$-query $\epsilon$-tester for $k\sLinear$. This is because, in the uniform distribution, the linear functions are $1/2$-far from each other and therefore, for any $\epsilon<1/4$, if $f$ is $\epsilon$-close to a $k$-linear function $g$ then it is $(1/2-\epsilon)$-far from $(k-1)$-Linear$^*$. This is not true for any distribution $\D$, and therefore, cannot be applied here.

The algorithm in the previous subsection can be changed to a two-sided tester for $k$-Linear as follows. The only part that should be changed is the test that $g$ is in $k$-Linear$^*$ assuming that it is in $8k$-Linear${}^*$. We replace it with a test that $g$ is in $k$-Linear assuming that it is in $8k$-Linear${}^*$. The tester rejects if the number of relevant coordinates in the function that is learned is not {\it equal} to $k$. This time the test is two-sided. The reason is that the projection to $O(k^2)$ variables does not guarantee (with probability $1$) that all the variables of $f$ are projected to different variables. Therefore, it may happen that $f$ is $k$-linear and the projection gives a $(k-1)$-linear$^*$ function.

\subsection{The Lower Bound for One-sided Testers}
We first show the result for non-adaptive testers.
Suppose there is a one-sided non-adaptive uniform distribution $1/8$-tester $A(s,f)$ for $k$-Linear that makes $q$ queries, where $s$ is the random seed of the tester and $f$ is the function that is tested. The algorithm has access to $f$ through a black box queries.

Consider the set of linear functions $C=\{g^{(0)}\}\cup\{g^{(\ell)}=x_n+\cdots+x_{n-\ell+1}|\ell=1,\ldots,k-1\}\subseteq (k-1)$-Linear$^*$ where $g^{(0)}=0$. Any $k$-linear function is $1/2$-far from every function in $C$ w.r.t. the uniform distribution. Therefore, using the tester $A$, with probability at least $2/3$, we can distinguish between any $k$-linear and any function in $C$. By running the tester $A$ $O(\log k)$ times, and accept if and only if all accept, we get a tester $A'$ that asks $O(q\log k)$ queries and satisfies
\begin{enumerate}
\item If $f\in k$-Linear then with probability $1$, $A'(s,f)$ accepts.
\item If $f\in C$ then, with probability at least $1-1/(2k)$, $A'(s,f)$ rejects.
\end{enumerate}
By an averaging argument (i.e., fixing coins for $A'$) and since $|C|=k$, there exists a deterministic non-adaptive algorithm $B$ that makes $q'=O(q\log k)$ queries such that
\begin{enumerate}
\item If $f\in k$-Linear then $B(f)$ accepts.
\item If $f=C$ then $B(f)$ rejects.
\end{enumerate}

Let $a^{(i)}$, $i=1,\ldots,q'$ be the queries that $B$ makes. Let $M$ be a $q'\times n$ binary matrix where the $i$-th row of $M$ is $a^{(i)}$ and $x^f\in \{0,1\}^n$ where $x^f_i=1$ if $i$ is a relevant coordinate in $f$. Then the vector of answers to the queries of $B(f)$ is $Mx^f$. If $Mx^f=Mx^g$ for some $g\in C$, that is, the answers of the queries to $f$ are the same as the answer of the queries to $g$, then $B(f)$ rejects. Therefore, for every $f\in k$-Linear and every $g\in C$ we have $Mx^f\not=Mx^g$. Now since $\{x^f|f\in k{\rm -Linear}\}$ is the set of all strings of weight $k$, the sum (over the field $F_2$) of every $k$ columns of $M$ is not equal to $0$ and not equal to the sum of the last $\ell$ columns of $M$, for all $\ell=1,\ldots,k-1$. In particular, if $M_i$ is the $i$th column of $M$, for every $i_1,\ldots,i_{k-\ell}\le n-k+1$,
$M_{i_1}+\cdots+M_{i_{k-\ell}}+M_{n-\ell+1}+\cdots+M_n\not= M_{n-\ell+1}+\cdots+M_n$ and therefore $M_{i_1}+\cdots+M_{i_{k-\ell}}\not=0$. That is, the sum of every less or equal $k-1$ columns of the first $n-k+1$ columns of $M$ is not equal to zero. We then show that such matrix has at least $q'=\Omega(k\log n)$ rows. This implies that $q=\Omega((k/\log k)\log n)$.

For the lower bound for adaptive testers we take $C=\{g^{(\ell)}\}$ for some $\ell\in \{0,1,\ldots,k-1\}$ and get a $q\times n$ matrix $M$ that the sum of every $k-\ell$ columns of $M$ is not zero. We then show that there exists $\ell\le k-1$ where such a matrix must have at least $q=\tilde \Omega(\sqrt{k}\log n)$ rows.

\section{The Testers for $k$-Linear$^*$ and $k$-Linear}
In this section we give the non-adaptive distribution-free one-sided tester for $k$-Linear$^*$ and the non-adaptive distribution-free two-sided tester for $k$-Linear.

\subsection{Notations}
In this subsection, we give some notations that we use throughout the paper.

Denote $[n]=\{1,2,\ldots,n\}$. For $S\subseteq [n]$ and $x=(x_1,\ldots,x_n)$. For $X\subset [n]$ we denote by $\{0,1\}^X$
the set of all binary strings of
length $|X|$ with coordinates indexed by $i\in X$. For $x\in \{0,1\}^n$ and $X\subseteq [n]$ we write $x_X\in\{0,1\}^{X}$ to denote the projection of $x$ over coordinates in $X$. We denote by $1_X$ and $0_X$ the all-one and all-zero strings in $\{0,1\}^{X}$, respectively. For a variable $x_i$ and a set $X$, we denote by $(x_i)_X$ the string $x'$ over coordinates in $X$ where for every $j\in X$, $x'_j=x_i$.
For $X_1,X_2\subseteq [n]$ where $X_1\cap X_2=\emptyset$ and $x\in \{0,1\}^{X_1}, y\in \{0,1\}^{X_2}$ we write $x\circ y$  to denote their concatenation, i.e.,
the string in $\{0,1\}^{X_1\cup X_2}$ that agrees with $x$ over coordinates in $X_1$ and agrees with $y$ over coordinates in~$X_2$. For $X\subseteq [n]$ we denote $\overline{X}=[n]\backslash X=\{x\in [n]|x\not\in X\}$.

For example, if $n=7$, $X_1=\{1,3,5\}$, $X_2=\{2,7\}$, $y_2$ is a variable and $z=(z_1,z_2,z_3,z_4,z_5,z_6,z_7)$ $\in\{0,1\}^7$ then $(y_2)_{X_1}\circ z_{X_2}\circ 0_{\overline{X_1\cup X_2}}=(y_2,z_2,y_2,0,y_2,0,z_7)$.

\subsection{The Tester}
Consider the tester {\bf Test-Linear$^*_k$} for $k$-Linear$^*$ in Figure~\ref{ALg01}. The tester uses three procedures. The first is {\bf Self-corrector} that for an input $x\in\{0,1\}^n$ chooses a uniform random $z\in\{0,1\}^n$ and returns $f(x+z)+f(z)$. The procedure {\bf BLR} that is a non-adaptive uniform-distribution one-sided $\epsilon$-tester for Linear. BLR makes $c_1/\epsilon$ queries for some constant $c_1$,~\cite{BlumLR93}. The third procedure is {\bf Hoffmeister's Algorithm} $(N,K)$, a deterministic non-adaptive algorithm that exactly learns $K$-Linear$^*$ over $N$ coordinates from black box queries. Hoffmeister's Algorithm makes $c_2K\log N$ queries for some constant $c_2$,~\cite{Hofmeister99}.
\newcounter{ALg}
\setcounter{ALg}{0}
\newcommand{\stepal}{\stepcounter{ALg}$\arabic{ALg}.\ $\>}
\newcommand{\steplabelal}[1]{\addtocounter{ALg}{-1}\refstepcounter{ALg}\label{#1}}
\begin{figure}[h!]
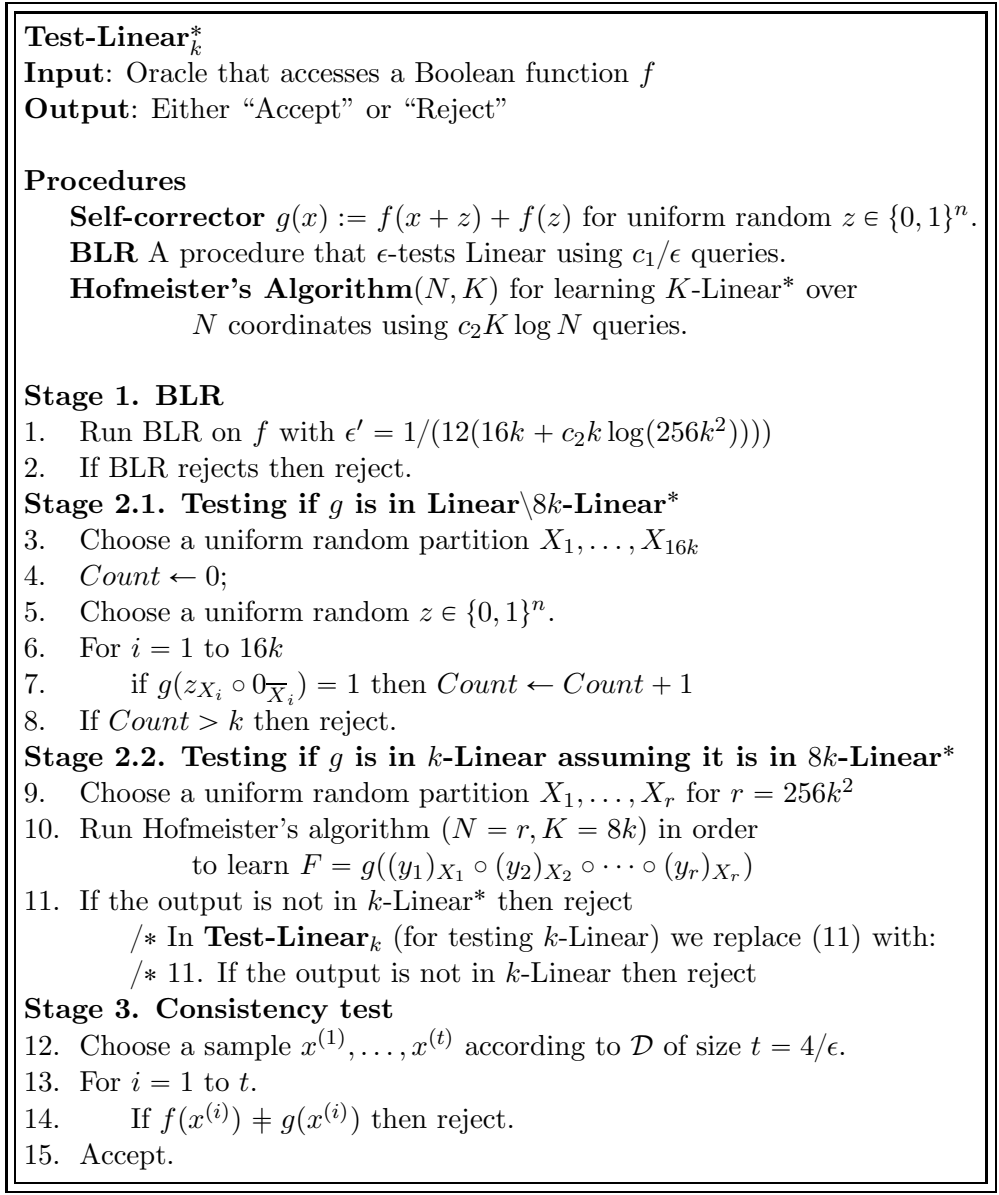

  \begin{center}
  \fbox{\fbox{\begin{minipage}{28em}
  \begin{tabbing}
  xxx\=xxxx\=xxxx\=xxxx\=xxxx\=xxxx\= \kill
  {{\bf Test-Linear$_k^*$}}\\
{\bf Input}: Oracle that accesses a Boolean function $f$  \\

  {\bf Output}: Either ``Accept'' or ``Reject''\\ \\
    {\bf Procedures}\\
\> {\bf Self-corrector} $g(x):=f(x+z)+f(z)$ for uniform random $z\in\{0,1\}^n$.\\
\> {\bf BLR} A procedure that $\epsilon$-tests Linear using $c_1/\epsilon$ queries.\\
\> {\bf Hofmeister's Algorithm}$(N,K)$ for learning $K$-Linear$^*$ over\\
\>\>\>  $N$ coordinates using $c_2K\log N$ queries.\\ \\

{\bf Stage 1. BLR}\\
\stepal\steplabelal{ALg09}
Run BLR on $f$ with $\epsilon'=1/(12(16k+c_2k\log (256k^2))))$ \\
\stepal\steplabelal{ALg10}
If BLR rejects then reject.\\
{\bf Stage 2.1. Testing if $g$ is in Linear$\backslash 8k$-Linear$^*$}\\
\stepal\steplabelal{ALg20}
Choose a uniform random partition $X_1,\ldots,X_{16k}$\\
\stepal\steplabelal{ALg21}
$Count\gets 0$; \\
\stepal\steplabelal{ALg26}
Choose a uniform random $z\in \{0,1\}^n$.\\
\stepal\steplabelal{ALg22}
For $i=1$ to $16k$\\
\stepal\steplabelal{ALg23}
\>   if $g(z_{X_i}\circ 0_{\overline X_i})=1$ then $Count\gets Count +1$ \\
\stepal\steplabelal{ALg24}
If $Count>k$ then reject. \\
{\bf Stage 2.2. Testing if $g$ is in $k$-Linear assuming it is in $8k$-Linear$^*$}\\
\stepal\steplabelal{ALg30}
Choose a uniform random partition $X_1,\ldots,X_{r}$ for $r=256k^2$\\
\stepal\steplabelal{ALg40}
Run Hofmeister's algorithm $(N=r,K=8k)$ in order \\
\>\>\> to learn $F=g((y_1)_{X_1}\circ (y_2)_{X_2}\circ \cdots \circ(y_r)_{X_r})$\\
\stepal\steplabelal{ALg50}
If the output is not in $k$-Linear$^*$ then reject \\
\>\> $/*$ In {\bf Test-Linear$_k$} (for testing $k$-Linear) we replace (11) with: \\
\>\> $/*$ \ref{ALg50}. If the output is not in $k$-Linear then reject\\
{\bf Stage 3. Consistency test}\\
\stepal\steplabelal{ALg65}
Choose a sample $x^{(1)},\ldots, x^{(t)}$ according to $\D$ of size $t=4/\epsilon$.\\
\stepal\steplabelal{ALg70}
For $i=1$ to $t$.\\
\stepal\steplabelal{ALg80}
\> If $f(x^{(i)})\not= g(x^{(i)})$ then reject.\\
\stepal\steplabelal{ALg81}
Accept.
  \end{tabbing}
  \end{minipage}}}
  \end{center}
	\caption{An optimal two-sided tester for $k$-Linear.}
	\label{ALg01}
\end{figure}

To test $k$-Linear we use the same tester but change step \ref{ALg50} to:

\noindent
(\ref{ALg50}) If the output is not in $k$-Linear then reject

We call this tester {\bf Test-Linear$_k$}.

\subsection{Correctness of the Tester}
In this section we prove
\begin{theorem} {\bf Test-Linear$_k$} is a non-adaptive distribution-free two-sided $\epsilon$-tester for $k$-Linear that makes
$$O\left(k\log k +\frac{1}{\epsilon}\right)$$ queries.
\end{theorem}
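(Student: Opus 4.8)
The plan is to verify the three ingredients of a distribution-free two-sided $\epsilon$-tester---query complexity, completeness, and soundness---following the four-stage structure of {\bf Test-Linear}$_k$. I would dispose of the query bound first, since it is immediate from the stated costs of the subroutines: Stage~1 makes $c_1/\epsilon'=O(k\log k)$ queries because $\epsilon'=\Theta(1/(k\log k))$; Stage~2.1 makes $16k$ {\bf Self-corrector} calls, i.e.\ $O(k)$ queries; Stage~2.2 makes $c_2\cdot 8k\log(256k^2)=O(k\log k)$ {\bf Self-corrector} calls; and Stage~3 makes $O(1/\epsilon)$ queries, for a total of $O(k\log k+1/\epsilon)$. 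The heart of the argument is the correctness analysis, whose organizing idea is Halevy--Kushilevitz-style self-correction: once $f$ survives Stage~1 I will treat it as agreeing, off an $\epsilon'$-fraction of the cube, with a \emph{unique} linear function $g^*$; Stages~2.1--2.2 then test (through self-corrected queries) the only structural parameter of a linear function that matters here---its number of relevant variables---while Stage~3 separately checks that $f$ actually equals $g^*$ under $\D$.

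For completeness I would take $f\in k$-Linear and use that $f$ is linear, so {\bf Self-corrector} is error-free ($f(x+w)+f(w)=f(x)$ for \emph{every} $w$), whence $g=f$ throughout; thus BLR accepts with probability $1$ and Stage~3 never rejects. In Stage~2.1, $g(z_{X_i}\circ 0_{\overline{X_i}})$ is the parity of $z$ restricted to the relevant coordinates in $X_i$, so $Count$ is at most the number of nonempty bins, which is at most the number $k$ of relevant variables of $f$; hence Stage~2.1 accepts. In Stage~2.2 the projection sends the $k$ relevant variables to distinct parts except with probability $\le\binom{k}{2}/(256k^2)<1/512$, and off this event the projected function lies in $k$-Linear, hence in $8k$-Linear${}^*$, so Hofmeister's algorithm recovers it exactly and the (modified) Stage~2.2 test does not reject. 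Therefore the tester accepts with probability at least $1-1/512>2/3$; this single small one-sided error is exactly what makes the tester two-sided rather than one-sided.

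For soundness I would take $f$ to be $\epsilon$-far from $k$-Linear with respect to $\D$. If $f$ is $\epsilon'$-far from Linear under the uniform distribution, BLR rejects with probability $\ge 2/3$ and we are done; otherwise $f$ is uniformly $\epsilon'$-close to a linear function $g^*$, which is unique since distinct linear functions are $1/2$-far and $2\epsilon'<1/2$. The basic self-correction estimate is that for each fixed $x$ and uniform $w$ one has $f(x+w)+f(w)=g^*(x)$ with probability $\ge 1-2\epsilon'$; by the choice of $\epsilon'$ this produces an event $\mathcal E$ on which all {\bf Self-corrector} calls of Stages~2.1--2.2 return $g^*$, with $\Pr[\mathcal E\mid \text{partitions},z]\ge 11/12$ for every choice of the partitions and of $z$. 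I then split on $g^*$. If $g^*\in k$-Linear, then $f$ is $\epsilon$-far from $g^*$ under $\D$, so each Stage-3 sample triggers a rejection with probability $\ge\Pr_\D[f\ne g^*]\cdot(1-2\epsilon')\ge\epsilon/2$, and by independence over the $t=4/\epsilon$ samples Stage~3 rejects with probability $\ge 1-(1-\epsilon/2)^{4/\epsilon}\ge 1-e^{-2}>2/3$. If instead $g^*$ has $k'\ne k$ relevant variables, I condition on $\mathcal E$ so that Stages~2.1--2.2 effectively query $g^*$: when $k'\le 8k$, the Stage~2.2 projection sends the $k'$ variables to distinct parts except with probability $\le\binom{8k}{2}/(256k^2)<1/8$, so the learned function has exactly $k'\ne k$ relevant variables and the test rejects; when $k'>8k$, the expected number of nonempty bins in Stage~2.1 is $16k(1-(1-1/(16k))^{k'})\ge 16k(1-e^{-1/2})>6k$, a balls-into-bins concentration bound yields $\ge 5k$ nonempty bins with high probability, and conditioned on that, $Count$ is a sum of at least $5k$ independent fair coins (the parities over the disjoint parts), so $Count>k$ except with probability $e^{-\Omega(k)}$, making Stage~2.1 reject with probability $\ge 11/12$ (for all but finitely many $k$, which I would treat by inspection). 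In either subcase, conditioned on $\mathcal E$ the tester rejects with probability $\ge 11/12$, and since $\Pr[\mathcal E\mid\text{partitions},z]\ge 11/12$ holds uniformly, it rejects with probability $\ge (11/12)^2>2/3$.

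The step I expect to be the main obstacle is the soundness subcase $g^*\notin k$-Linear with $k'>8k$, namely the Stage~2.1 analysis: it requires a two-level concentration argument---first that more than $8k$ relevant variables force many nonempty bins, and then that the random mask $z$ makes more than $k$ of those bins ``active''---together with a calibration of the numerical constants ($16k$ bins with threshold $k$, $r=256k^2$, and the precise value of $\epsilon'$) so that these events, the self-correction failure event $\mathcal E$, and the Stage~2.2 collision events all compose to a total failure probability below $1/3$. The uniqueness of $g^*$, the collision bounds in Stage~2.2, and the distribution-free consistency check of Stage~3 are by comparison routine.
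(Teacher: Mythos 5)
Your proposal is correct and follows essentially the same route as the paper's proof: survive BLR, self-correct to the unique nearby linear function $g^*$, use Stage~2.1 to rule out more than $8k$ relevant variables, use Stage~2.2 (Hofmeister on a random projection into $256k^2$ parts) to verify that the count equals $k$, and use Stage~3 for the distribution-free consistency check. The one non-cosmetic divergence is your Stage~2.1 soundness estimate when $g^*$ has more than $8k$ relevant variables: you compute the expected number of nonempty bins ($>6k$) and then invoke a balls-into-bins concentration bound to get at least $5k$ nonempty bins with high probability, which requires a genuine occupancy-concentration argument (bounded differences or negative association) that you leave implicit. The paper avoids this entirely with a direct counting bound: the probability that every relevant coordinate lands in some fixed set of $4k$ bins is at most $4^{-8k}$, so a union bound over the $\binom{16k}{4k}$ choices gives probability at most $\binom{16k}{4k}4^{-8k}\le 1/12$ that at most $4k$ bins are occupied, with no concentration machinery at all; it then runs Chernoff only on the fair coins, as you do. Your conclusion is the same. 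One small calibration mismatch worth fixing: with the algorithm's $\epsilon'=1/(12(16k+c_2k\log(256k^2)))$, the self-correction event holds with probability at least $5/6$ for each choice of partitions and $z$, not the $11/12$ you state; this still closes your multiplicative composition, and the paper instead just adds the failure probabilities ($1/6+1/12+1/12\le 1/3$), which is the cleaner bookkeeping here.
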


\begin{theorem} {\bf Test-Linear$^*_k$} is a non-adaptive distribution-free one-sided $\epsilon$-tester for $k$-Linear$^*$ that makes
$$O\left(k\log k +\frac{1}{\epsilon}\right)$$ queries.
\end{theorem}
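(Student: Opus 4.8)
The plan is to verify, stage by stage, the two requirements of a one-sided distribution-free $\epsilon$-tester (always accept when $f\in k\sLinear^*$; reject with probability $\ge 2/3$ when $f$ is $\epsilon$-far from $k\sLinear^*$ w.r.t. $\D$), and then to count queries. The one-sided direction is the easier one and I would handle it first: if $f\in k\sLinear^*$ then $f$ is linear, so BLR accepts with probability $1$; the \textbf{Self-corrector} satisfies $f(x+z)+f(z)=f(x)$ for \emph{every} $z$, so $g(x)=f(x)$ deterministically in Stages 2.1, 2.2, 3. Then in Stage 2.1 the value $g(z_{X_i}\circ 0_{\overline{X_i}})$ equals $1$ only for bins $X_i$ containing at least one of the $\le k$ relevant variables, so $Count\le k$ always and we never reject. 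In Stage 2.2, $F$ is a restriction of a $\le k$-linear function obtained by identifying variables, hence $F\in k\sLinear^*$ (identification can only decrease the number of relevant variables, never increase it), so Hofmeister's algorithm — which is exact — outputs a function in $k\sLinear^*$ and step~\ref{ALg50} does not reject. In Stage 3, $f(x^{(i)})=g(x^{(i)})$ for all $i$, so we accept. Thus the tester is one-sided.

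For the soundness direction I would argue the contrapositive: assume the tester accepts with probability $>1/3$ and show $f$ is $\epsilon$-close to $k\sLinear^*$ w.r.t. $\D$. Set $\epsilon'=1/(12(16k+c_2k\log(256k^2)))$. Since BLR accepted (else we'd reject at step~\ref{ALg10}), with the required probability $f$ is $\epsilon'$-close to Linear w.r.t. the uniform distribution; let $g^\ast$ be that linear function, which is unique because distinct linear functions are at uniform distance $1/2>2\epsilon'$. The key lemma I need is that the \textbf{Self-corrector}, on any fixed input $x$, returns $g^\ast(x)$ with probability $\ge 1-2\epsilon'$ over its internal randomness $z$ (union bound: $f(x+z)\ne g^\ast(x+z)$ or $f(z)\ne g^\ast(z)$ each has probability $\le\epsilon'$ since $x+z$ and $z$ are each uniform). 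The Self-corrector is invoked on $O(16k+c_2k\log(256k^2))$ distinct strings across Stages 2.1 and 2.2, so by a union bound, with probability $\ge 1-1/6$ all of those calls return the true values of $g^\ast$; condition on this event. Now Stage 2.1 is a bin-collision test on the relevant set $R$ of $g^\ast$: each of the $16k$ bins is nonempty (i.e.\ $g^\ast(z_{X_i}\circ 0)=1$ for the random $z$) with probability $\approx |R\cap X_i|$-dependent, and a standard calculation shows that if $|R|>8k$ then $Count>k$ with probability $\ge 5/6$, whereas if $|R|\le 8k$ this test cannot cause an erroneous rejection of a close function. Hence, conditioned on acceptance, $|R|\le 8k$ w.h.p.; then in Stage 2.2 the random partition into $r=256k^2$ parts separates all $\le 8k$ relevant variables with probability $\ge 1-\binom{8k}{2}/r\ge 5/6$, so $F$ has exactly $|R|$ relevant variables and Hofmeister learns it exactly; surviving step~\ref{ALg50} then forces $|R|\le k$, i.e.\ $g^\ast\in k\sLinear^*$. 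Finally Stage 3: conditioned on $g^\ast\in k\sLinear^*$, if $f$ were $\epsilon$-far from $g^\ast$ w.r.t.\ $\D$, then for each sampled $x^{(i)}$ the Self-corrector returns $g^\ast(x^{(i)})\ne f(x^{(i)})$ with probability $\ge \epsilon(1-2\epsilon')\ge \epsilon/2$ independently; with $t=4/\epsilon$ samples we reject with probability $\ge 1-(1-\epsilon/2)^{4/\epsilon}\ge 1-e^{-2}\ge 5/6$. Combining the failure probabilities of the Self-corrector event, Stage 2.1, Stage 2.2, and Stage 3 by a union bound keeps the total error below $1/3$, contradicting acceptance probability $>1/3$.

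The query count is a short tally: Stage 1 costs $c_1/\epsilon'=O(k\log k)$; Stage 2.1 makes $16k$ Self-corrector calls, each $2$ queries, so $O(k)$; Stage 2.2 runs Hofmeister's algorithm with $N=256k^2,K=8k$, which is $c_2\cdot 8k\log(256k^2)=O(k\log k)$ Self-corrector calls, i.e.\ $O(k\log k)$ queries; Stage 3 makes $t=4/\epsilon$ Self-corrector calls, $O(1/\epsilon)$ queries. The total is $O(k\log k+1/\epsilon)$, as claimed. Note Stage 3 also draws $t=O(1/\epsilon)$ samples from $\D$, which is permitted.

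The main obstacle I anticipate is the probabilistic analysis of Stage 2.1 — making precise that a $(>8k)$-linear function produces, with high probability, more than $k$ occupied bins out of $16k$ while controlling the parity-cancellation subtlety (a bin with an even number of relevant variables reads as "empty" for a fixed $z$, but over a \emph{random} $z$ a bin with $m\ge 1$ relevant variables reads $1$ with probability $1/2$, so in fact $\E[Count]=(\text{number of nonempty bins})/2$, and one must track the number of nonempty bins, which is at least roughly $8k$ with high probability when $|R|>8k$ by a balls-into-bins argument, and then apply a concentration bound to $Count$). Pinning down constants so that each stage fails with probability at most $1/6$ (or suitably small) and the thresholds ($16k$ bins, threshold $k$; $8k$ vs $k$; $r=256k^2$) are mutually consistent is the fiddly part; everything else is routine given BLR, the Self-corrector guarantee, and the exactness of Hofmeister's algorithm stated earlier.
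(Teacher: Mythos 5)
Your proposal is correct and follows essentially the same approach as the paper: BLR plus self-corrector to get virtual access to a nearby linear $g$, a $16k$-bin test to rule out more than $8k$ relevant variables, Hofmeister learning on a $256k^2$-way partition to pin down the exact number of relevant variables, and a final $\mathcal{D}$-sampled consistency check. The paper organizes soundness directly as four exhaustive cases (far from Linear; $g$ with $>8k$ relevants; $g$ with between $k+1$ and $8k$ relevants; $g\in k\sLinear$ but $f$ far from $g$ over $\mathcal{D}$), each rejected with probability $\ge 2/3$, which is slightly cleaner than your contrapositive framing but amounts to the same argument; the Stage~2.1 balls-in-bins computation you flag as the fiddly step is carried out in the paper exactly as you sketch (at least $4k$ occupied bins with probability $\ge 11/12$, then Chernoff on the fair-coin indicator per occupied bin).
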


\begin{proof} Since there is no stage in the tester that uses the answers of the queries asked in previous ones, the tester is non-adaptive.

In Stage 1 the tester makes $O(1/\epsilon')=O(k\log k)$ queries. In stage~2.1, $O(k)$ queries. In stage~2.2, $O(k\log r)=O(k\log k)$ queries and in stage~3, $O(1/\epsilon)$ queries. Therefore, the query complexity of the tester is $O(k\log k+1/\epsilon)$.

We will assume that $k\ge 12$. For $k<12$, (see the introduction and Table~\ref{Table}) the non-adaptive tester of $k$-Junta with the BLR tester and the self-corrector gives a non-adaptive testers that makes $O(1/\epsilon)=O(k\log k+1/\epsilon)$ queries.

\vspace{10px}
\noindent
{\bf Completeness}: We first show the completeness for {\bf Test-Linear$_k$} that tests $k$-Linear. Suppose $f\in k$-Linear. Then for every $x$ we have $g(x)=f(x+z)+f(z)=f(x)+f(z)+f(z)=f(x)$. Therefore, $g=f$. In stage 1, BLR is one-sided and therefore it does not reject. In stage 2.1, since $X_1,\ldots,X_{16k}$ are pairwise disjoint, the number of functions $g(x_{X_i}\circ 0_{\overline{X_i}})$, $i=1,2,\ldots, 16k$, that are not identically zero is at most $k$ and therefore stage 2.1 does not reject. In stage 2.2, with probability at least $1-{k\choose 2}/(256 k^2)\ge2/3$, the relevant coordinates of $f$ fall into different $X_i$ and then $F=g((y_1)_{X_1}\circ (y_2)_{X_2}\circ \cdots \circ(y_r)_{X_r})=f((y_1)_{X_1}\circ (y_2)_{X_2}\circ \cdots \circ(y_r)_{X_r})$ is $k$-linear. Then, Hofmeister's algorithm returns a $k$-linear function. Therefore, with probability at least $2/3$ the tester does not reject. Stage 3 does not reject since $f=g$.

Now for the tester {\bf Test-Linear$^*_k$}, in stage 2.2, with probability $1$ the function $F$ is in $k\sLinear^*$. In fact, if $t$ relevant coordinates falls into the set $X_i$ then the coordinate $i$ (that correspond to the variable $y_i$) will be relevant in $F$ if and only if $t$ is odd. Therefore, the tester does not reject.

Notice that {\bf Test-Linear$^*_k$} is one-sided and {\bf Test-Linear$_k$} is two-sided.

\vspace{10px}
\noindent
{\bf Soundness}: We prove the soundness for {\bf Test-Linear$_k$}. The same proof also works for {\bf Test-Linear$^*_k$}.
Suppose $f$ is $\epsilon$-far from $k$-Linear w.r.t. the distribution $\D$. We have four cases
\begin{description}
\item[Case 1]: $f$ is $\epsilon'$-far from Linear w.r.t. the uniform distribution.
\item[Case 2]: $f$ is $\epsilon'$-close to $g\in$Linear and $g$ is in $\Linear\backslash 8k\sLinear^*$.
\item[Case 3]: $f$ is $\epsilon'$-close to $g\in$Linear and $g$ is in $8k\sLinear^*\backslash k\sLinear$.
\item[Case 4]: $f$ is $\epsilon'$-close to $g\in$Linear, $g$ is in $k\sLinear$ and $f$ is $\epsilon$-far from $k\sLinear$ w.r.t. $\D$.
\end{description}

For Case 1, if $f$ is $\epsilon'$-far from Linear then, in stage 1, BLR rejects with probability $2/3$.

For Cases 2 and 3, since $f$ is $\epsilon'$-close to $g$, for any fixed $x\in\{0,1\}^n$ with probability at least $1-2\epsilon'$ (over a uniform random $z$), $f(x+z)+f(z)=g(x+z)+g(z)=g(x)$. Since stages~2.1 and~2.2 makes $(16k+c_2k\log r)$ queries (to $g$), with probability at least $1-(16k+c_2k\log r)2\epsilon'\ge 5/6$, $g(x)$ is computed correctly for all the queries in stages 2.1 and 2.2.

For Case 2, consider stage 2.1 of the tester. If $g$ is in $\Linear\backslash 8k\sLinear^*$ then $g$ has more than $8k$ relevant coordinates. The probability that less than or equal to $4k$ of $X_1,\ldots,X_{16k}$ contains relevant coordinates of $g$ is at most
$${16k\choose 4k}\frac{1}{4^{8k}}\le \left(\frac{e16k}{4k}\right)^{4k}\frac{1}{4^{8k}}\le \frac{1}{12}.$$

If $X_i$ contains the relevant coordinates $i_1,\ldots,i_\ell$ then $g(x_{X_i}\circ 0_{\overline X_i})=x_{i_1}+\cdots+x_{i_\ell}$ and therefore, for a uniform random $z\in\{0,1\}^n$, with probability at least $1/2$, $g(z_{X_i}\circ 0_{\overline X_i})=1$. Therefore, if at least $4k$ of $X_1,\ldots,X_{16k}$ contains relevant coordinates then, by Chernoff bound, with probability at least $1-e^{-k/4}\ge 11/12$, the counter ``{\it Count}'' is greater than $k$. Therefore, for Case 2, if $g$ is in $\Linear\backslash 8k\sLinear^*$ then, with probability at least $1-(1/6+1/12+1/12)=2/3$, the tester rejects.

For Case 3, consider stage 2.2. If $g$ is in $8k\sLinear^*\backslash k\sLinear$ then $g$ has at most $8k$ relevant coordinates. Then with probability at least $1-{8k\choose 2}/(256 k^2)\ge 5/6$, the relevant coordinates of $g$ fall into different $X_i$ and then Hofmeister's algorithm returns a linear function with the same number of relevant coordinates as $g$. Therefore stage 2.2 rejects with probability at least $2/3$.

For Case 4, if $g$ is in $k$-Linear and $f$ is $\epsilon$-far from $k$-Linear w.r.t. $\D$, then $f$ is $\epsilon$-far from $g$ w.r.t.~$\D$. Then for uniform random $z$ and $x\sim \D$,
\begin{eqnarray*}
\Pr_{\D,z}[f(x)\not=g(x)]&\ge& \Pr_{\D,z}[f(x)\not=g(x)|g(x)=f(x+z)+f(z)]\Pr_{\D,z}[g(x)=f(x+z)+f(z)]\\
&=& \Pr_{\D}[f(x)\not=g(x)]\Pr_{z}[g(x)=f(x+z)+f(z)]\\
&\ge& \epsilon (1-\epsilon')\ge \epsilon/2.
\end{eqnarray*}
Therefore, with probability at most $(1-\epsilon/2)^{t}=(1-\epsilon/2)^{4/\epsilon}\le 1/3$, stage 3 does not reject.
\end{proof}

\section{Lower Bound}
In this section we prove
\begin{theorem} Any non-adaptive uniform-distribution one-sided $1/8$-tester for $k$-Linear must make at least  $\tilde\Omega(k\log n)$ queries.
\end{theorem}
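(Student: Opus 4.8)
The plan is to follow the reduction sketched in the overview: from a hypothetical $q$-query one-sided non-adaptive uniform-distribution $1/8$-tester $A$ for $k$-Linear, extract a deterministic non-adaptive query algorithm, and translate its acceptance behavior into a combinatorial condition on a $0/1$ matrix, from which a lower bound on the number of rows follows. First I would fix the family $C=\{g^{(0)}=0\}\cup\{g^{(\ell)}=x_n+x_{n-1}+\cdots+x_{n-\ell+1}\mid \ell=1,\dots,k-1\}\subseteq (k-1)\text{-Linear}^*$, and note every $f\in k\text{-Linear}$ is exactly $1/2$-far from each $g\in C$ under the uniform distribution (two distinct linear functions differ on exactly half the inputs), so $A$ must accept any $k$-linear $f$ with probability $1$ and reject any $g\in C$ with probability $\ge 2/3$. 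Amplifying by running $A$ independently $O(\log k)$ times and accepting iff all runs accept boosts the rejection probability on each $g\in C$ to $\ge 1-1/(2k)$ while preserving the probability-$1$ acceptance of $k$-linear $f$; a union bound over the $k$ functions in $C$ plus an averaging argument over the random seed then yields a \emph{deterministic} non-adaptive algorithm $B$ making $q'=O(q\log k)$ queries that accepts every $f\in k\text{-Linear}$ and rejects every $g\in C$.

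Next I would encode $B$'s fixed queries $a^{(1)},\dots,a^{(q')}\in\{0,1\}^n$ as the rows of a $q'\times n$ matrix $M$ over $\FF_2$. For a linear function $h$ with relevant-coordinate indicator $x^h\in\{0,1\}^n$, the answer vector of $B$ on $h$ is exactly $Mx^h$, so $B$'s output is a function of $Mx^h$ alone. Since $B$ accepts all weight-$k$ indicators $x^f$ and rejects all $x^{g^{(\ell)}}$ (which include $0$ and the sums of the last $\ell$ columns), we must have $Mx^f\neq Mx^{g^{(\ell)}}$ for every weight-$k$ $f$ and every $\ell\in\{0,\dots,k-1\}$. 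Writing $M_i$ for the $i$-th column and specializing to weight-$k$ sets that contain $\{n-\ell+1,\dots,n\}$, the $\ell=0$ case gives that no sum of $k$ columns is $0$, and the general case, after cancelling the common tail $M_{n-\ell+1}+\cdots+M_n$, gives that no sum of $k-\ell$ columns drawn from the first $n-k+1$ columns is $0$, for each $\ell$; letting $\ell$ range over $1,\dots,k-1$ this says: \emph{every nonempty sum of at most $k-1$ columns among the first $n-k+1$ columns of $M$ is nonzero}. Equivalently, restricting to the submatrix $M'$ of the first $n-k+1$ columns, every set of at most $k-1$ of its columns is linearly independent over $\FF_2$, i.e. $M'$ is (the parity-check matrix of) a binary code of minimum distance $\ge k$ with $n-k+1 = \Theta(n)$ columns.

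The final step is the coding-theoretic lower bound: a binary matrix with $N$ columns such that every $\le k-1$ columns are $\FF_2$-independent must have $\Omega(k\log(N/k))$ rows. This is the standard ``Singleton-type / volume'' bound — the $q'$ rows define a code in $\FF_2^{q'}$ of length $N$ and minimum distance $\ge k$ (columns are distinct nonzero codewords of a related code, or one invokes the sphere-packing/Plotkin-free regime), and a Hamming-ball counting argument (or the Gilbert–Varshamov converse form stated as a lower bound on redundancy, e.g. $q'\ge (k-1)\log_2(N/(k-1))$ up to constants for fixed-weight arguments) forces $q'=\Omega(k\log n)$, using $k<\sqrt n$ so that $N=n-k+1=\Theta(n)$ and $\log(N/k)=\Theta(\log n)$. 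Chaining the bounds, $O(q\log k)=q'=\Omega(k\log n)$, hence $q=\tilde\Omega(k)\log n$ as claimed; the additive $\Omega(1/\epsilon)$ is the trivial lower bound.

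I expect the main obstacle to be the last step: one must be careful that the extracted column condition is exactly ``every $\le k-1$ columns independent'' and then cite or prove the right redundancy bound for such matrices (essentially the statement that a linear code of length $\Theta(n)$ and distance $k$ has redundancy $\Omega(k\log n)$), making sure the $\log k$ loss from amplification is genuinely absorbed — this is why the theorem is stated as $\tilde\Omega(k)\log n$ rather than $\Omega(k\log n)$. A secondary subtlety is verifying that distinct $g^{(\ell)}$ and $g^{(\ell')}$, and the all-zero function, are indeed pairwise $1/2$-far and $\ge\epsilon$-far from $k$-Linear so that $A$'s soundness guarantee actually applies with $\epsilon=1/8$.
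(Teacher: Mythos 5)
Your proposal follows essentially the same route as the paper: the same hard family $C=\{g^{(\ell)}\}$, the same amplification by $O(\log k)$ repetitions followed by a union bound and averaging to extract a deterministic non-adaptive $B$, the same encoding of $B$'s queries as a matrix $M$ with $Mx^f\neq Mx^{g^{(\ell)}}$, the same cancellation of the common tail $M_{n-\ell+1}+\cdots+M_n$ to reduce to a ``no small subset of the first $n-k+1$ columns sums to zero'' condition, and the same Hamming-ball counting (the paper's Lemma~\ref{Hamming}) to conclude $q'=\Omega(k\log n)$. The only cosmetic difference is that you record the column condition for sums of size at most $k-1$ where the paper also folds in the $\ell=0$ case to get size at most $k$; this does not change the asymptotics, and your worry about ``absorbing the $\log k$ loss'' is exactly why the final bound is $\tilde\Omega(k)\log n$ rather than $\Omega(k\log n)$.
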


\begin{theorem} Any adaptive uniform-distribution one-sided $1/8$-tester for $k$-Linear must make at least  $\tilde\Omega(\sqrt{k}\log n)$ queries.
\end{theorem}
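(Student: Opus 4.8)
The plan is to follow the blueprint of Section~2: convert the randomized one-sided tester into a deterministic adaptive decision tree, read a binary matrix off its computation on a single hard function, and reduce to a combinatorial statement about such matrices.

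\textbf{From tester to decision tree.} For $\ell\in\{0,\ldots,k-1\}$ set $g^{(\ell)}=x_n+x_{n-1}+\cdots+x_{n-\ell+1}$ (so $g^{(0)}\equiv 0$) and $L_\ell=\{n,n-1,\ldots,n-\ell+1\}$. Each $g^{(\ell)}$ is $\ell$-linear with $\ell<k$, hence disagrees with every $k$-linear function on exactly half of $\{0,1\}^n$ and is $1/8$-far from $k$-Linear under the uniform distribution. Let $A$ be an adaptive one-sided $1/8$-tester for $k$-Linear using $q$ queries, and fix $\ell$. Since $A$ rejects $g^{(\ell)}$ with probability at least $2/3>0$, some seed $s_\ell$ makes $A(s_\ell,g^{(\ell)})$ reject; by one-sidedness $B:=A(s_\ell,\cdot)$ accepts every $k$-linear function with probability $1$. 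Thus $B$ is a deterministic adaptive decision tree of depth at most $q$ that accepts all of $k$-Linear and rejects $g^{(\ell)}$. (As $C=\{g^{(\ell)}\}$ is a singleton, no amplification is needed, which is why no extra $\log k$ factor appears, unlike in the non-adaptive bound.)

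\textbf{Reading off a matrix.} Let $a^{(1)},\ldots,a^{(m)}$, $m\le q$, be the queries $B$ makes along the reject-terminating path it follows on $g^{(\ell)}$. Any function agreeing with $g^{(\ell)}$ on these queries follows the same path and is rejected; since $B$ accepts all $k$-linear functions, no $k$-linear function agrees with $g^{(\ell)}$ on $a^{(1)},\ldots,a^{(m)}$. Let $M$ be the $m\times n$ matrix with these rows and $M_j$ its $j$-th column. The answer vector of $g^{(\ell)}$ is $\sum_{j\in L_\ell}M_j$ and that of the $k$-linear function with relevant set $S$ is $\sum_{j\in S}M_j$; taking $S=L_\ell\cup T$ for an arbitrary $(k-\ell)$-subset $T\subseteq[n]\setminus L_\ell$ forces $\sum_{j\in T}M_j\neq 0$ over $\mathbb{F}_2$. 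Hence $M$ has at most $q$ rows, and the $\mathbb{F}_2$-sum of every $k-\ell$ columns among its first $n-\ell$ columns is nonzero.

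\textbf{The combinatorial core and the obstacle.} It remains to prove: there is an $\ell\in\{0,\ldots,k-1\}$ such that every binary matrix with at least $n-k$ columns the $\mathbb{F}_2$-sum of every $k-\ell$ of which is nonzero has $\tilde\Omega(\sqrt k\log n)$ rows. The choice of $\ell$ is forced by parity: take $w:=k-\ell$ even and maximal, i.e.\ $\ell=0$ if $k$ is even, $\ell=1$ if $k$ is odd. Evenness is essential, since for $w$ odd the constant column $\mathbf 1$ already satisfies the condition (equivalently, a single parity query already refutes membership of $g^{(\ell)}$ in $k$-Linear), so no row lower bound can hold. For $w$ even I would argue: (i) a pigeonhole bound on column multiplicities --- no value may repeat $w$ times, and at most $w/2-1$ values may repeat at all, else one exhibits $w$ columns summing to zero --- leaves all but $O(w^2)$ columns pairwise distinct; (ii) on the remaining $n-O(w^2)$ distinct columns, ``no $w$ of them sum to zero'' is a Sidon-type ($B_{w/2}$-flavoured) constraint, and a counting/entropy argument lower-bounds the number of rows by $\tilde\Omega(\sqrt w\log n)=\tilde\Omega(\sqrt k\log n)$. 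The hard part is step~(ii): ``no $w$ columns sum to zero'' only constrains \emph{disjoint} pairs of $(w/2)$-subsets, which is genuinely weaker than requiring all $(w/2)$-subset sums to be distinct, so one cannot simply count $\binom{n}{w/2}$ against $2^m$; one must extract true disjointness (e.g.\ by partitioning the columns and comparing across the partition, or using a constant-weight-code family of test subsets) while absorbing the $O(w^2)$ repeated columns, and it is exactly this slack that limits the bound to $\sqrt k$ rather than the $k$ of the non-adaptive case. Combining (i)--(ii) with $m\le q$ gives $q=\tilde\Omega(\sqrt k\log n)$.
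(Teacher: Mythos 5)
Your reduction from the adaptive tester to the matrix property is correct and mirrors the paper exactly: fix the single hard instance $g^{(\ell)}$, extract a seed $s_\ell$ that rejects it, walk the deterministic tree $B=A(s_\ell,\cdot)$ along the $g^{(\ell)}$-path, and read off an $m\times n$ matrix ($m\le q$) such that the $\mathbb{F}_2$-sum of every $k-\ell$ of its first $n-\ell$ columns is nonzero. Your remark that no $\log k$ amplification is needed because $C$ is a singleton is also precisely the reason the paper gets $\tilde\Omega(\sqrt{k}\log n)$ without an extra $1/\log k$ loss.

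The combinatorial core, however, goes a genuinely different route from the paper and has a real gap. You propose to fix $\ell$ in advance by parity alone (so $w=k-\ell$ is $k$ or $k-1$) and then argue that any matrix with no $w$ columns summing to zero has $\tilde\Omega(\sqrt{w}\log n)$ rows via a Sidon/$B_{w/2}$ counting. Two problems. First, the choice of $\ell$ cannot be fixed a priori: denoting by $\Pi(n,w)$ the minimum number of rows of a matrix with no $w$ columns summing to zero, a $w$-good matrix may freely have many $j$-subsets summing to zero for every $j$ with $\gcd(j,w)\nmid w$ --- the paper's Lemma~\ref{kkk} shows the only obstruction is via divisors of $w$ --- so for an individual $w$ the adversary can "cheat'' and $\Pi(n,w)$ need not be $\tilde\Omega(\sqrt{k}\log n)$. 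This is exactly why the paper proves only $\max_{1\le\ell'\le k}\Pi(n-k,\ell')=\tilde\Omega(\sqrt{k}\log n)$, with the witnessing $\ell'$ determined adaptively by the gcd-structure of the bad column sets of the adversary's matrices, not by parity. Second, even granting the fixed $\ell$, your step~(ii) is asserted, not proved, and you yourself flag the obstruction (disjointness of the $(w/2)$-subsets weakens the counting). The paper sidesteps this entirely: it works with the relaxed notion of $(j,\kappa)$-goodness (at most $\kappa$ disjoint $j$-subsets summing to zero), shows via Lemmas~\ref{N1}--\ref{kkk} that a $k'$-good matrix is $(\N_{d},\kappa)$-good for some $d\nmid k'$, iterates this (Lemma~\ref{recur}) over $O(\log k)$ choices $k_1,\dots,k_t$ so that the gcd's $\lcm$ doubles each round, concatenates the corresponding small matrices to get a $([K],\kappa)$-good matrix with $K=\lfloor\sqrt{k}/(2\log k)\rfloor$, deletes $O(\kappa K^2)$ columns to make it fully $[K]$-good, and finally applies the Hamming bound (Lemma~\ref{Hamming}) to the concatenation to force $\sum_i\Pi(n,k_i)=\Omega(K\log n)$, whence some single $\Pi(n,k_i)=\tilde\Omega(\sqrt k\log n)$. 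That recursion, not a Sidon count at a parity-chosen $\ell$, is the substance of the proof, and it is what your proposal is missing.
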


\subsection{Lower Bound for Non-Adaptive Testers}

We first show the result for non-adaptive testers.

Suppose there is a non-adaptive uniform-distribution one-sided $1/8$-tester $A(s,f)$ for $k$-Linear that makes $q$ queries, where $s$ is the random seed of the tester and $f$ is the function that is tested. The algorithm has access to $f$ through a black box queries.

Consider the set of linear functions $C=\{g^{(0)}\}\cup\{g^{(\ell)}=x_n+\cdots+x_{n-\ell+1}|\ell=1,\ldots,k-1\}\subseteq (k-1)$-Linear$^*$ where $g^{(0)}=0$. Any $k$-linear function is $1/2$-far from every function in $C$ w.r.t. the uniform distribution. Therefore, using the tester $A$, with probability at least $2/3$, $A$ can distinguish between any $k$-linear function and functions in $C$.  We boost the success probability to $1-1/(2k)$ by running $A$, $\log (2k)/\log 3$ times, and accept if and only if all accept. We get a tester $A'$ that asks $O(q\log k)$ queries and satisfies
\begin{enumerate}
\item If $f\in k$-Linear then with probability $1$, $A'(s,f)$ accepts.
\item If $f\in C$ then, with probability at least $1-1/(2k)$, $A'(s,f)$ rejects.
\end{enumerate}

Therefore, the probability that for a uniform random $s$, $A'(s,f)$ accepts for some $f\in C$ is at most $1/2$. Thus, there is a seed $s_0$ such that $A'(s_0,f)$ rejects for all $f\in C$ (and accept for all $f\in k$-Linear).
This implies that there exists a deterministic non-adaptive algorithm $B(=A'(s_0,*))$ that makes $q'=O(q\log k)$ queries such that
\begin{enumerate}
\item If $f\in k$-Linear then $B(f)$ accepts.
\item If $f\in C$ then $B(f)$ rejects.
\end{enumerate}

Let $a^{(i)}$, $i=1,\ldots,q'$ be the queries that $B$ makes. Let $M$ be a $q'\times n$ binary matrix that it's $i$-th row is $a^{(i)}$. Let $x^f\in \{0,1\}^n$ where $x^f_i=1$ iff $i$ is relevant coordinate in $f$. Then the vector of answers to the queries of $B(f)$ is $Mx^f$. If $Mx^f=Mx^g$ for some $g\in C$, that is, the answers of the queries to $f$ are the same as the answers of the queries to $g$, then $B(f)$ rejects. Therefore, for every $f\in k$-Linear and every $g\in C$ we have $Mx^f\not=Mx^g$. Now since $\{x^f|f\in k{\rm -Linear}\}$ is the set of all strings of weight $k$, the sum (over the field $F_2$) of every $k$ columns of $M$ is not equal to $0$ (zero string) and not equal to the sum of the last $\ell$ columns of $M$, for all $\ell=1,\ldots,k-1$. In particular, if $M_i$ is the $i$th column of $M$, for every $i_1,\ldots,i_{k-\ell}\le n-k+1$,
$M_{i_1}+\cdots+M_{i_{k-\ell}}+M_{n-\ell+1}+\cdots+M_n\not= M_{n-\ell+1}+\cdots+M_n$ and therefore $M_{i_1}+\cdots+M_{i_{k-\ell}}\not=0$. That is, the sum of every less or equal $k$ columns of the first $n-k+1$ columns of $M$ is not equal to zero. We then show in Lemma~\ref{Hamming} that such matrix has at least $q'=\Omega(k\log n)$ rows. This implies that $q=\Omega((k/\log k)\log n)$.

Let $\pi(n,k)$ be the minimum integer $q$ such that there exists a $q\times n$ matrix over $F_2$ that the sum of any of its less than or equal $k$ columns is not $0$. We have proved

\begin{lemma}\label{Lower1} Any non-adaptive uniform-distribution one-sided $1/8$-tester for $k$-Linear must make at least  $\Omega(\pi(n-k+1,k)/\log k)$ queries.
\end{lemma}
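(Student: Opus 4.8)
The plan is to compress a hypothetical tester into a single deterministic non-adaptive decision procedure whose fixed set of query strings, read as an $F_2$-matrix, is forced to separate every weight-$k$ vector from the indicator vectors of the functions in $C$; counting columns of that matrix then yields the bound in terms of $\pi$.

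First I would amplify and derandomize. Suppose $A(s,f)$ is a one-sided non-adaptive uniform-distribution $1/8$-tester for $k\sLinear$ using $q$ queries. Every $g\in C$ is a linear function whose number of relevant variables differs from $k$, hence is $1/2$-far (in particular $1/8$-far) from every $k$-linear function under the uniform distribution; so $A$ rejects each such $g$ with probability at least $2/3$, while by one-sidedness it accepts every $f\in k\sLinear$ with probability exactly $1$. Repeating $A$ independently $\Theta(\log k)$ times and accepting iff all runs accept gives $A'$ with $q'=O(q\log k)$ queries that still accepts every $f\in k\sLinear$ with probability $1$ and rejects each fixed $g\in C$ with probability at least $1-1/(2k)$. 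Since $|C|=k$, a union bound over $C$ shows that for at least half of the seeds $s$ the run $A'(s,\cdot)$ rejects every member of $C$; I fix one such seed $s_0$ and set $B(\cdot):=A'(s_0,\cdot)$, a deterministic non-adaptive $q'$-query algorithm that accepts every $f\in k\sLinear$ and rejects every $g\in C$.

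Next I would pass to linear algebra. Let $a^{(1)},\dots,a^{(q')}$ be the now-fixed query strings of $B$ and let $M$ be the $q'\times n$ matrix over $F_2$ with these rows. For any linear function $h$ with relevant-coordinate indicator $x^h\in\{0,1\}^n$, the answer vector $B$ observes on $h$ is exactly $Mx^h$. Since $B$ is deterministic and non-adaptive, its verdict depends only on this answer vector, so $Mx^f=Mx^g$ with $f\in k\sLinear$ and $g\in C$ would force $B(f)=B(g)$, contradicting that $B$ accepts $f$ and rejects $g$; hence $Mx^f\ne Mx^g$ for all such $f,g$. Taking $g=g^{(0)}=0$ shows the sum of every $k$ columns of $M$ is nonzero. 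Taking $g=g^{(\ell)}$ and choosing $f$ whose relevant set is the last $\ell$ coordinates together with any $k-\ell$ distinct coordinates among the first $n-k+1$ (a legitimate weight-$k$ choice since $2k-1\le n$ under the standing assumption $k<\sqrt n$), the shared last $\ell$ columns cancel in $Mx^f-Mx^g$, so the sum of every $k-\ell$ columns among the first $n-k+1$ columns of $M$ is nonzero. Ranging over $\ell=0,1,\dots,k-1$ shows that the $q'\times(n-k+1)$ submatrix formed by the first $n-k+1$ columns of $M$ has the property that any sum of at most $k$ of its columns is nonzero, whence $q'\ge\pi(n-k+1,k)$ by definition of $\pi$; since $q'=O(q\log k)$ this gives $q=\Omega(\pi(n-k+1,k)/\log k)$.

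The amplification and averaging are routine; the one point to be careful about is that one-sidedness must be preserved \emph{exactly} — probability-$1$ acceptance of $k\sLinear$ — through both the AND of repetitions and the fixing of a single good seed, whereas rejection of $C$ only has to survive a union bound over the $k$ functions of $C$ for that seed. No step here is a genuine obstacle: the real work is entirely isolated in the column-counting bound $\pi(n-k+1,k)=\Omega(k\log n)$, which this lemma deliberately factors out and which is proved separately.
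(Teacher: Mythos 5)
Your proof is correct and follows essentially the same route as the paper: amplify to error $1/(2k)$ over $O(\log k)$ repetitions while preserving one-sidedness, derandomize by a union bound over the $k$ functions in $C$, encode the fixed query set as an $F_2$-matrix $M$, and deduce that sums of at most $k$ columns among the first $n-k+1$ columns of $M$ are nonzero, whence $q'\ge\pi(n-k+1,k)$. Your closing remark about why one-sidedness must survive both the AND-amplification and the seed-fixing exactly matches the (implicit) care taken in the paper, so there is nothing to add.
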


Now to show that $\Omega(\pi(n-k+1,k)/\log k)=\Omega(k\log n)$ we prove the following result. This lemma follows from Hamming's bound in coding theory. We give the proof for completeness
\begin{lemma}\label{Hamming} (Hamming's Bound) We have
$$\pi(n,k)\ge \log \sum_{i=0}^{\left\lfloor \frac{k}{2}\right\rfloor}{n\choose i}=\Omega(k\log (n/k)).$$
\end{lemma}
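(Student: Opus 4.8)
\textbf{Proof plan for Lemma~\ref{Hamming}.} The plan is to recognize that the condition defining $\pi(n,k)$ — that the sum over $F_2$ of any $\le k$ columns of a $q \times n$ matrix $M$ is nonzero — is exactly the statement that the columns of $M$, viewed as a parity-check matrix, define a binary linear code of length $n$ and minimum distance at least $k+1$; equivalently, that all $F_2$-linear combinations of at most $k$ columns are distinct in a suitable sense. The cleanest route, and the one I would take, is a direct packing argument on the column vectors themselves rather than invoking the coding-theoretic dictionary: I would show that the $\sum_{i=0}^{\lfloor k/2 \rfloor} \binom{n}{i}$ vectors in $F_2^q$ obtained as sums of at most $\lfloor k/2 \rfloor$ distinct columns of $M$ are all pairwise distinct, hence $2^q \ge \sum_{i=0}^{\lfloor k/2 \rfloor}\binom{n}{i}$, which on taking logarithms gives the claimed bound.

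\textbf{Key steps.} First I would fix $M$ achieving $q = \pi(n,k)$ and let $M_1,\dots,M_n \in F_2^q$ be its columns. For a subset $S \subseteq [n]$ write $M_S = \sum_{i \in S} M_i$. The main claim is: if $S, T \subseteq [n]$ are distinct subsets each of size at most $\lfloor k/2 \rfloor$, then $M_S \ne M_T$. Suppose not; then $M_S + M_T = 0$, i.e. $M_{S \triangle T} = 0$, where the symmetric difference $S \triangle T$ is nonempty (since $S \ne T$) and has size at most $|S| + |T| \le 2\lfloor k/2\rfloor \le k$. This contradicts the defining property of $M$, namely that the sum of any nonempty collection of at most $k$ columns is nonzero. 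Hence the map $S \mapsto M_S$ is injective on subsets of size $\le \lfloor k/2\rfloor$, and since it lands in $F_2^q$ we get $2^q \ge \sum_{i=0}^{\lfloor k/2\rfloor}\binom{n}{i}$, so $\pi(n,k) = q \ge \log_2 \sum_{i=0}^{\lfloor k/2\rfloor}\binom{n}{i}$.

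\textbf{The asymptotic estimate.} The final step is to check $\log_2 \sum_{i=0}^{\lfloor k/2\rfloor}\binom{n}{i} = \Omega(k\log(n/k))$. Here I would just keep the single dominant term $\binom{n}{\lfloor k/2\rfloor}$ and use the standard bound $\binom{n}{m} \ge (n/m)^m$, giving $\log_2 \binom{n}{\lfloor k/2\rfloor} \ge \lfloor k/2\rfloor \log_2(n/\lfloor k/2\rfloor) = \Omega(k \log(n/k))$, valid under the paper's standing assumption $k < \sqrt{n}$ (which makes $\log(n/k)$ of order $\log n$, so in fact one gets $\Omega(k\log n)$). This combined with Lemma~\ref{Lower1} and the fact that $n - k + 1 = \Omega(n)$ under $k < \sqrt n$ yields $q = \Omega((k/\log k)\log n)$ for the original tester.

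\textbf{Main obstacle.} There is essentially no hard step here: the argument is a one-line injective-map counting argument once the symmetric-difference observation is made, and the rest is the routine binomial estimate. The only thing to be careful about is the parity of $k$ in the bound $|S \triangle T| \le 2\lfloor k/2\rfloor \le k$, which works for both even and odd $k$, so no case split is needed. If anything, the "obstacle" is purely expository — making sure the reader sees that the forbidden-subset-sum condition is closed under symmetric difference of small sets, which is what powers the packing bound.
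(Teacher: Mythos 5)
Your proof is correct and takes essentially the same route as the paper: both arguments show that the map from nonempty subsets of size at most $\lfloor k/2\rfloor$ to their column sums in $F_2^{\pi(n,k)}$ is injective (via the symmetric-difference observation) and conclude $2^{\pi(n,k)}\ge\sum_{i\le\lfloor k/2\rfloor}\binom{n}{i}$. The paper leaves the final asymptotic $\Omega(k\log(n/k))$ unjustified, and your $\binom{n}{m}\ge(n/m)^m$ step is exactly the right way to fill that in.
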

\begin{proof}\label{lower11} Let $M$ be a $\pi(n,k)\times n$ matrix over $F_2$ that the sum of any of its less than or equal $k$ columns is not $0$. Let $m=\lfloor k/2\rfloor$ and $S=\{M_{i_1}+\cdots+M_{i_t}\ |\ t\le m \mbox{\ and\ } 1\le i_1<\cdots<i_t\le n \}\subseteq\{0,1\}^{\pi(n,k)}$ be a multiset. The strings in $S$ are distinct because, if for the contrary, we have two strings in $S$ that satisfies $M_{i_1}+\cdots+M_{i_t}=M_{j_1}+\cdots+M_{j_{t'}}$ then $M_{i_1}+\cdots+M_{i_t}+M_{j_1}+\cdots+M_{j_{t'}}=0$ (equal columns are cancelled) and $t+t'\le k$, which is a contradiction. Therefore, $2^{\pi(n,k)}\ge |S|=\sum_{i=0}^{m}{n\choose i}$ and $\pi(n,k)\ge \log |S|$.
\end{proof}

\subsection{Lower Bound for Adaptive Testers}

For the lower bound for adaptive testers we take $C=\{g^{(\ell)}\}$ for some $\ell\in \{0,1,\ldots,k-1\}$ and get an adaptive algorithm $A$ that makes $q$ queries and satisfies
\begin{enumerate}
\item If $f\in k$-Linear then with probability $1$, $A(s,f)$ accepts.
\item If $f=g^{(\ell)}$ then, with probability at least $2/3$, $A(s,f)$ rejects.
\end{enumerate}
This implies that there exists a deterministic adaptive algorithm $B=A(s_0,*)$ that makes $q$ queries such that
\begin{enumerate}
\item If $f\in k$-Linear then $B(f)$ accepts.
\item If $f=g^{(\ell)}$ then $B(f)$ rejects.
\end{enumerate}
Then, by the same argument as in the case of non-adaptive tester, we get a $q\times n$ matrix $M$ that the sum of every $k-\ell$ columns of the first $n-\ell$ columns of $M$ is not zero. Let $\Pi(n,k)$ be the minimum integer~$q$ such that there exists a $q\times n$ matrix over $F_2$ that the sum of any of its $k$ columns is not~$0$. Then, we have proved that

\begin{lemma}\label{Lower2} Any adaptive uniform-distribution one-sided $1/8$-tester for $k$-Linear must make at least  $\Omega(\max_{1\le \ell\le k}\Pi(n-k,\ell))$ queries.
\end{lemma}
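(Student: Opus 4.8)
The plan is to repeat the reduction used for non-adaptive testers, paying attention to how adaptivity interacts with black-box access. Fix $\ell\in\{0,1,\dots,k-1\}$ and take the single ``no'' function $g^{(\ell)}=x_n+\cdots+x_{n-\ell+1}$ (with $g^{(0)}=0$). For every $f\in k\sLinear$ the function $f\oplus g^{(\ell)}$ is a nonzero linear function, since $f$ has $k$ relevant variables and $g^{(\ell)}$ has only $\ell<k$; hence $g^{(\ell)}$ is $1/2$-far from $k\sLinear$ w.r.t.\ the uniform distribution, in particular $1/8$-far. So an adaptive one-sided $1/8$-tester $A$ with query complexity $q$ accepts every $f\in k\sLinear$ with probability $1$ and rejects $g^{(\ell)}$ with probability at least $2/3$ over its seed $s$. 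Because only the one function $g^{(\ell)}$ is involved, it suffices that this rejection probability is positive: there is a seed $s_0$ with $A(s_0,g^{(\ell)})$ rejecting, and one-sidedness (which holds for every fixed seed) gives that $A(s_0,f)$ accepts all of $k\sLinear$. Set $B:=A(s_0,\cdot)$, a deterministic adaptive $q$-query algorithm that accepts every $f\in k\sLinear$ and rejects $g^{(\ell)}$. (Unlike the non-adaptive case, where $\log k$ copies of the tester were needed to union-bound over $|C|=k$ functions, here we lose only a constant factor, which is why the bound carries no extra $\log k$.)

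Next I would summarize $B$ by a matrix read off its run on $g^{(\ell)}$. Running $B$ on $g^{(\ell)}$ follows one root-to-leaf path of its decision tree, querying a fixed sequence of points $a^{(1)},\dots,a^{(m)}$ with $m\le q$ and ending in a ``reject'' leaf. Let $M$ be the $m\times n$ matrix with rows $a^{(i)}$, and for a linear function $f$ let $x^f\in\{0,1\}^n$ be the indicator of its relevant set, so the answer to a query $a$ is $\langle a,x^f\rangle$ over $F_2$. The crucial point is that if $Mx^f=Mx^{g^{(\ell)}}$ for some $f$, then on input $f$ the algorithm $B$ makes the very same queries, receives the very same answers, reaches the same leaf, and therefore rejects $f$; since $B$ accepts all of $k\sLinear$, no $f\in k\sLinear$ can satisfy $Mx^f=Mx^{g^{(\ell)}}$.

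Finally I would instantiate this constraint. The vector $x^{g^{(\ell)}}$ is supported on $\{n-\ell+1,\dots,n\}$, and for any distinct $i_1,\dots,i_{k-\ell}\le n-\ell$ the function with relevant set $\{i_1,\dots,i_{k-\ell}\}\cup\{n-\ell+1,\dots,n\}$ lies in $k\sLinear$; for it, $Mx^f+Mx^{g^{(\ell)}}=M_{i_1}+\cdots+M_{i_{k-\ell}}$, where $M_j$ denotes the $j$-th column of $M$, so this sum is nonzero. Hence the first $n-\ell$ columns of $M$ form a matrix over $F_2$ in which the sum of every $k-\ell$ columns is nonzero, which by definition forces $m\ge\Pi(n-\ell,k-\ell)$. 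Since deleting columns cannot create a zero sum of $k-\ell$ columns, $\Pi$ is nondecreasing in its first argument, and $n-\ell\ge n-k$ yields $q\ge m\ge\Pi(n-\ell,k-\ell)\ge\Pi(n-k,k-\ell)$. As $\ell$ ranges over $\{0,1,\dots,k-1\}$ the quantity $k-\ell$ ranges over $\{1,\dots,k\}$, so the tester must make at least $\max_{1\le\ell\le k}\Pi(n-k,\ell)$ queries, which is the claimed bound.

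There is essentially no internal obstacle here: the argument is a routine adaptation of the non-adaptive reduction, the only care being the decision-tree/transcript observation that lets a deterministic adaptive algorithm be encoded by a single query matrix on the distinguished input, together with the elementary monotonicity of $\Pi$ used to pass from $\Pi(n-\ell,k-\ell)$ to $\Pi(n-k,\cdot)$. The genuine difficulty lies downstream of this lemma, in lower bounding $\max_{1\le\ell\le k}\Pi(n-k,\ell)$ by $\tilde\Omega(\sqrt k\,\log n)$: there a real counting/coding-theoretic estimate (as opposed to this black-box reduction) is needed, since for any single value of $\ell$ one only obtains a much weaker bound.
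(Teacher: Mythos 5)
Your proof is correct and follows the same reduction as the paper. The paper's write-up is telegraphic at the adaptive step, saying only ``by the same argument as in the case of non-adaptive tester, we get a $q\times n$ matrix $M$''; you have usefully filled in the transcript/decision-tree observation that justifies this (if the answers along the $g^{(\ell)}$ path match, the adaptive algorithm replays exactly the same queries and reaches the same reject leaf), the monotonicity of $\Pi$ in its first argument, and the remark that since $|C|=1$ no $\log k$ boosting is needed here. All of these points match the paper's intended argument.
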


In the next subsection, we show that there exists $1\le\ell\le k$ such that $\Pi(n,\ell)=\tilde\Omega(\sqrt{k}\log n)$.

\subsection{A Lower Bound for $\Pi$}

In this section we prove
\begin{lemma} We have $\max_{1\le \ell\le k}\Pi(n,\ell)=\tilde\Omega(\sqrt{k}\log n)$.
\end{lemma}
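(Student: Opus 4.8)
The plan is to lower-bound $\Pi(n,\ell)$ for a well-chosen $\ell\le k$ by a counting/dimension argument, exactly parallel to the Hamming-bound proof of Lemma~\ref{Hamming}, but now exploiting the fact that we only need the sum of *exactly* $\ell$ columns (not $\le \ell$ columns) to be nonzero. Recall that a $q\times n$ matrix $M$ over $F_2$ witnesses $\Pi(n,\ell)\le q$ iff no $\ell$ of its columns sum to $0$; equivalently, every set of $\ell$ columns is ``$F_2$-independent in the weak sense'' that $M_{i_1}+\cdots+M_{i_\ell}\ne 0$. First I would record the standard packing consequence: if $M_{i_1}+\cdots+M_{i_t}=M_{j_1}+\cdots+M_{j_t}$ for two $t$-subsets with $t\le \ell/2$, cancelling common columns gives a nonempty even-size $\le \ell$ collection summing to $0$; an even collection of size $\le \ell$ of distinct columns summing to zero is ruled out provided $\ell$ is even (take $t$-subsets with $2t\le \ell$). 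So for even $\ell$, all the $\binom{n}{\le \ell/2}$ partial sums of at most $\ell/2$ columns are distinct, giving $\Pi(n,\ell)\ge \log\binom{n}{\lfloor \ell/2\rfloor}=\Omega(\ell\log(n/\ell))$ — but this is only $\tilde\Omega(\ell\log n)$, and for $\ell=k$ that's the *non-adaptive* bound, not $\sqrt k$. The subtlety is that for the adaptive reduction we are forced to work with $\Pi(n,\ell)$ for a *single* $\ell$, and while $\Pi(n,k)$ itself is large, the genuine content of the lemma is that $\max_{\ell\le k}\Pi(n,\ell)$ stays $\tilde\Omega(\sqrt k\log n)$ — i.e. we need a robust lower bound that survives the worst $\ell$.

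Here is the mechanism I expect. The quantity $\Pi(n,\ell)$ is, up to constants, the minimum redundancy of a linear code of length $n$ whose dual distance exceeds $\ell$ — more precisely, $q-\Pi(n,\ell)$ relates to the largest dimension of a code in which every $\le\ell$ coordinates are ``affinely free.'' The right framework is: a $q\times n$ matrix with no $\ell$ columns summing to zero is precisely the parity-check matrix of a code of length $n$, codimension $q$, with the property that the dual has no codeword of weight exactly $\ell$. The key step will be to set up a single matrix-counting argument that simultaneously handles all $\ell\in\{1,\ldots,k\}$: fix the number of rows $q$ and ask, over a random $q\times n$ matrix $M$ with $q$ below the target threshold, for how many $\ell\le k$ there exists an $\ell$-subset of columns summing to $0$. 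By a union bound, the expected number of ``bad'' $(\ell,\text{subset})$ pairs is $\sum_{\ell\le k}\binom{n}{\ell}2^{-q}$; if this is $o(1)$ then some $M$ avoids *all* of them, forcing $q\ge \log\sum_{\ell\le k}\binom{n}{\ell}=\Omega(k\log(n/k))$ — again too strong-looking, so the real argument must go the *other* way. I would instead argue that for *every* matrix $M$ with $q$ rows there is some small $\ell$ (around $q/\log n$, or around $\sqrt{q}$) for which $M$ necessarily has $\ell$ columns summing to zero: this is a pigeonhole/Kruskal–Katona statement. Concretely, consider the $2^q$ possible column sums; with $n$ columns and $n$ large, many subsets of a fixed small size $s$ collide, and a careful choice $s=\Theta(\sqrt{\log_2\binom{n}{\le s}/\,?})$ makes two disjoint $s$-subsets collide (hence a $2s$-subset summing to zero with $2s\le k$). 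Turning this around: if $\Pi(n,\ell)<q$ fails for all $\ell\le k$, i.e. $\Pi(n,\ell)\ge q$ for some $\ell$, and we want the largest such guaranteed $q$, the bottleneck is precisely $\ell\approx\sqrt k$: for $\ell$ around $\sqrt k$ the count $\binom{n}{\le \ell/2}$ is $n^{\Theta(\sqrt k)}$, giving $\log\binom{n}{\le \ell/2}=\Theta(\sqrt k\log n)$, while for larger $\ell$ one cannot guarantee the matrix is still valid. So the statement to prove is: for each $q$, there exists $\ell\le k$ with $\Pi(n,\ell)\ge q$, where $q=\tilde\Omega(\sqrt k\log n)$; equivalently, there is an $\ell\le k$ and a $q\times n$ matrix with no $\ell$ columns summing to zero, $q=\tilde\Omega(\sqrt k\log n)$.

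Thus the plan is: (i) pick $\ell=\Theta(\sqrt k)$ (even); (ii) exhibit, via a probabilistic or explicit BCH-type construction, a $q\times n$ binary matrix with $q=O(\ell\log n)=O(\sqrt k\log n)$ rows and no $\le\ell$ columns summing to zero — a BCH code of designed distance $\ell+1$ has a parity-check matrix of exactly this size, using the standard bound that $t$ error-correcting BCH codes of length $n$ have redundancy $\le \lceil \log(n{+}1)\rceil\cdot t$ with $t=\ell/2$; (iii) apply the lower-bound direction (Lemma~\ref{Hamming}-style packing) to conclude $\Pi(n,\ell)=\Theta(\ell\log n)$ for this $\ell$, hence $\max_{\ell\le k}\Pi(n,\ell)=\Omega(\sqrt k\log n)$ up to logarithmic factors; the $\tilde\Omega$ absorbs the $\log(n/\ell)$ versus $\log n$ discrepancy and the $\lceil\log(n+1)\rceil$ rounding. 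The main obstacle I anticipate is step (ii): I need a construction that matches the Hamming lower bound to within log factors for the *specific* $\ell\approx\sqrt k$, and the cleanest tool is BCH codes (whose parity-check matrix is $[\alpha^{ij}]$ over $F_{2^m}$ flattened to $F_2$, with $m=\lceil\log(n+1)\rceil$ and $i$ ranging over $1,3,\ldots,\ell-1$), so I must verify the BCH redundancy bound gives $O(\sqrt k\log n)$ rows and that ``designed distance $>\ell$'' indeed forbids $\le\ell$ columns of the parity-check matrix from summing to zero — which is exactly the definition of minimum distance of the code, so this is clean. The only genuinely delicate point is confirming that the reduction in Lemma~\ref{Lower2} (which gives ``sum of every $k-\ell$ columns of the first $n-\ell$ columns is nonzero'', i.e. a bound on $\Pi(n-k,\ell')$ with $\ell'=k-\ell$) lets $\ell'$ range over all of $\{1,\ldots,k\}$ as $\ell$ does, so that taking the max over $\ell'$ is legitimate and the $n-k$ (rather than $n$) costs only a constant factor in the log since $k<\sqrt n$.
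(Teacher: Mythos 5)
Your plan hinges on the claim that for even $\ell$, $\Pi(n,\ell)\ge\log\binom{n}{\lfloor\ell/2\rfloor}$ via a Hamming-style packing argument. This step is wrong, and the error is precisely the distinction the lemma is built around: $\Pi(n,\ell)$ only forbids \emph{exactly} $\ell$ columns summing to zero, whereas the packing argument requires that no set of size $\le\ell$ sums to zero (that is the definition of $\pi(n,\ell)$, as in Lemma~\ref{Hamming}). Two $t$-subsets with $t\le\ell/2$ and equal column-sums give a set of columns of even size $|A\triangle B|\le\ell$ that sums to zero, but $|A\triangle B|$ can be strictly less than $\ell$, and such a set is perfectly allowed by the $\Pi(n,\ell)$ condition. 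So ``all $\binom{n}{\le\ell/2}$ partial sums are distinct'' does not follow; in fact the only clean packing consequence is that pairwise \emph{disjoint} $\ell/2$-subsets have distinct sums, which yields only $q\ge\log(2n/\ell)=\Omega(\log n)$, not $\Omega(\ell\log n)$. Since your step (iii) rests entirely on this invalid packing bound, the plan does not establish the lemma; the BCH construction in step (ii) is an upper bound on $\Pi$ and cannot substitute for it.

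The deeper issue is that there is no a-priori choice of a single even $\ell\approx\sqrt{k}$ for which $\Pi(n,\ell)$ is provably large --- the paper has to work for this conclusion. The actual proof constructs a sequence $k_1,\ldots,k_t$ with $t=O(\log k)$ adaptively: start from the all-ones matrix (which is $\N_2$-good for $\N_d$ the multiples-of-$d$-free integers in $[K]$, $K=\lfloor\sqrt{k}/(2\log k)\rfloor$); at each step take a $k_i$-good matrix $\hat M_i$ with $\Pi(n,k_i)$ rows, use the number-theoretic Lemma~\ref{kkk} (if a $k'$-good matrix is $(J,\kappa)$-bad then $\gcd(J)\nmid k'$, proved via Lemmas~\ref{N1}--\ref{zero}) to show $\hat M_i$ is $(\N_{d_i},\kappa)$-good with $d_i\nmid d_{i-1}$, concatenate, and conclude $d$ at least doubles. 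After $O(\log k)$ rounds the concatenation is $([K],\kappa)$-good, so Lemma~\ref{llbb} (the correct Hamming-style bound, applicable because $[K]$ is an initial segment hence covers \emph{all} small sizes) forces $\sum_i\Pi(n,k_i)=\Omega(K\log n)$, and pigeonhole gives some $\Pi(n,k_i)=\tilde\Omega(\sqrt{k}\log n)$. Your proposal contains none of this divisibility machinery, which is the substantive content of the lemma; the ``$\sqrt{k}$'' emerges from the interplay between the $O(\log k)$ concatenation rounds and the $K\approx\sqrt{k}/\log k$ threshold, not from choosing $\ell=\Theta(\sqrt{k})$.
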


The idea of the proof is the following. For a set of integers $L$ an $L$-{\it good matrix} $M$ is a matrix that for every $\ell\in L$ the sum of every $\ell$ columns of $M$ is not zero. A $k$-good matrix is a $\{k\}$-good matrix. We say that the matrix $M$ is {\it almost $L$-good} if there is a ``small'' number ($poly(k)$) of columns of $M$ that can be removed to get an $L$-good matrix. The concatenation $M_1\circ M_2$ (the matrix that contains the rows of both matrices) of almost $L_1$-good matrix $M_1$ with an almost $L_2$-good matrix $M_2$ is an almost $L_1\cup L_2$-good matrix.

Let $K=\lfloor \sqrt{k}/(2\log k)\rfloor$ and $[K]=\{1,2,\ldots,K\}$. The idea of the proof is to construct an almost $[K]$-good matrix $M$ by concatenating $t=O(\log k)$ matrices $M_1\circ M_2\circ\cdots\circ M_t$ where $M_i$ is $k_i$-good $(\Pi(n,k_i)\times n)$-matrices for some $k_i\le k$. Then after removing small number ($poly(k)$) columns of $M$ we get a $[K]$-good matrix $M$ with $\sum_{i=1}^t \Pi(n,k_i)$ rows and $n-poly(k)$ columns. By Hamming's bound, Lemma~\ref{Hamming},  $M$ contains at least $\Omega(K\log n)$ rows. Therefore, $\sum_{i=1}^t \Pi(n,k_i)= \Omega(K\log n)$. So there is $i$ such that $\Pi(n,k_i)=\Omega(K\log n/\log k)=\Omega(\sqrt{k}\log n/\log^2 k)=\tilde \Omega(\sqrt{k}\log n)$.

We now give more intuition to how to construct an almost $[K]$-good matrix from $k_i$-good matrices. Denote by $\N_d=\{i: d\notdivides i\}\cap [K]$. Let $k=k_1$. We first show that if $M_1$ is $k_1$-good matrix then there exists a set of integers $L_1\subseteq [K]$ such that $M_1$ is almost $L_1$-good matrix and $d_1:=\gcd([K]\backslash L_1)\notdivides k_1$. The intuition is that if, for the contrary, there are many pairwise disjoint sets of columns that sum to $0$ that the great common divisor of their sizes divides $k_1$, then the union of some of them gives $k_1$-set of columns that sum to $0$ and then we get a contradiction.
Therefore $d_1\not=1$, $L_1\supseteq \N_{d_1}$ and $M_1$ is almost $\N_{d_1}$-good. We then take $k_2:=d_1\lfloor k/d_1\rfloor$ and a $k_2$-good $\Pi(n,k_2)\times n$ matrix $M_2$. Then, as before, $M_2$ is almost $\N_{d_2}$-good matrix
with $d_2\notdivides k_2$. Therefore, $d_2\notdivides d_1$. Now the concatenation of both matrices $M_1\circ M_2$ is almost $\N_{d_1}\cup \N_{d_2}=\N_{\lcm(d_1,d_2)}$. Since $d_2\notdivides d_1$ we must have $d_2':=\lcm(d_1,d_2)\ge 2d_1$. We then take $k_3=d_2'\lfloor k/d_2'\rfloor$ and a $k_3$-good $\Pi(n,k_3)\times n$ matrix $M_3$ and concatenate it with $M_1\circ M_2$ to get an almost $\N_{\lcm(d_1,d_2,d_3)}$-good matrix with $\lcm(d_1,d_2,d_3)\ge 2d_2'=2\lcm(d_1,d_2)\ge 4d_1$. After, $t=O(\log k)$ iterations, we get a $(\sum_{i=1}^t \Pi(n,k_i))\times n$ matrix $M_1\circ M_2\circ \cdots \circ M_t$ that is almost $\N_d$-good for some $d\ge 2^td_1>K$ and therefore, almost $[K]$-good.

We note here that we can get the bound $\Omega(\sqrt{k}(\log\log k)\log n/\log^2 k)$ by choosing $k_1=\lcm(1,2,3,$ $\cdots,m_i)\le k$, and then $k_i=d_{i-1}' \lcm(1,2,3,\cdots,m_i)<k$ where $m_i=O(\log(k))$. See \cite{Farhi09}.

We now give the full proof.
We start with some preliminary results, Lemmas~\ref{N1}-\ref{zero}.
\begin{lemma}\label{N1} Let $W\subseteq [m]$ and $w=\gcd(W)$. There is a subset $W'\subseteq W$ of size
$$O\left(\frac{\log \frac{m}{w}}{\log\log  \frac{m}{w}}\right)<\log \frac{m}{w}$$ such that $\gcd(W')=\gcd(W)$.
\end{lemma}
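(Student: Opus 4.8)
Let $W\subseteq[m]$ and $w=\gcd(W)$. There is a subset $W'\subseteq W$ of size $O\!\left(\frac{\log(m/w)}{\log\log(m/w)}\right)<\log(m/w)$ with $\gcd(W')=\gcd(W)$.

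The plan is to reduce immediately to the case $w=1$ by dividing every element of $W$ by $w$: the set $W/w\subseteq[m/w]$ has gcd $1$, and any subset of $W/w$ witnessing gcd $1$ pulls back to a subset of $W$ of the same cardinality witnessing gcd $w$. So from now on assume $\gcd(W)=1$ and $W\subseteq[m]$, and I want a subset of size $O(\log m/\log\log m)$ with gcd $1$. I would build $W'$ greedily: maintain a running gcd $g_j$, start with any element $a_1\in W$ so $g_1=a_1\le m$, and at each step pick $a_{j+1}\in W$ with $\gcd(g_j,a_{j+1})$ strictly smaller than $g_j$ — such an element must exist as long as $g_j\neq 1$, since otherwise $g_j$ would divide every element of $W$, contradicting $\gcd(W)=1$. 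Setting $g_{j+1}=\gcd(g_j,a_{j+1})$, each step strictly decreases $g_j$, and moreover $g_{j+1}$ divides $g_j$, so in fact $g_{j+1}\le g_j/2$. Hence after at most $\log_2 m$ steps we reach $g_t=1$, which already gives the weaker bound $|W'|\le\log_2 m$ and, dividing back, $|W'|<\log(m/w)$ — this is the inequality in the statement.

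For the sharper $O(\log m/\log\log m)$ bound I would refine the greedy step: instead of merely insisting that the gcd drops, I track the number of \emph{distinct prime factors} of the running gcd. Write $g_1=a_1\le m$; since $g_1\le m$, it has at most $O(\log m/\log\log m)$ distinct prime divisors (the product of the first $r$ primes is $e^{(1+o(1))p_r}\ge e^{(1+o(1))r\log r}$, so $\omega(g_1)=O(\log m/\log\log m)$). Now observe that if $g_j$ has a prime factor $p$ that does not divide every element of $W$, then some $a\in W$ has $p\nmid a$, and replacing $g_j$ by $\gcd(g_j,a)$ removes $p$ (and possibly more primes) from the support; if instead every prime factor of $g_j$ divides every element of $W$, then $g_j\mid\gcd(W)=1$, so $g_j=1$ and we stop. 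Thus each selected element strictly decreases $\omega(g_j)$, and the process terminates in at most $\omega(g_1)+1=O(\log m/\log\log m)$ steps, which is the desired bound; translating back via the division by $w$ replaces $m$ by $m/w$ throughout.

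The only mild subtlety — the step I'd be most careful about — is making sure the greedy choice is always \emph{available}: at each stage I need that the current running gcd $g_j>1$ has some prime divisor failing to divide all of $W$, and this is exactly the contrapositive of $\gcd(W)=1$, so it holds as long as $g_j\neq1$. The number-theoretic fact that an integer $N$ has $\omega(N)=O(\log N/\log\log N)$ distinct prime factors is standard (it follows from $\prod_{i\le r}p_i>2^r$ together with the prime number theorem bound $p_r=\Theta(r\log r)$, or even more elementarily), and I'd just cite it. Everything else is bookkeeping, so I don't expect a genuine obstacle here; the lemma is essentially a packaging of the observation that a gcd over a large index set is already achieved by a logarithmically small (indeed $O(\log m/\log\log m)$) subset.
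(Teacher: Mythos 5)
Your proof is correct, but it takes a genuinely different route from the paper's. You build $W'$ greedily, reducing to the case $w=1$, and then track $\omega(g_j)$, the number of distinct prime divisors of the running gcd $g_j$; each new element you select kills at least one prime of $g_j$, so the process stops after at most $1+\omega(g_1)=O(\log(m/w)/\log\log(m/w))$ steps, using the standard estimate $\omega(N)=O(\log N/\log\log N)$ (which, as you note, only needs the trivial bound $p_i\ge i+1$). The paper instead argues extremally: after the same reduction it takes a \emph{minimum}-size subset $D'=\{d_1,\ldots,d_t\}$ with gcd $1$, considers the leave-one-out gcds $g_i=\gcd(D'\setminus\{d_i\})$, observes that minimality forces each $g_i>1$, that $\gcd(g_i,g_j)=\gcd(D')=1$ so the $g_i$ are pairwise coprime, and that $g_2,\ldots,g_t$ all divide $d_1$; hence $t!\le\prod_{i\ge2}g_i\le d_1\le m/w$. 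The two arguments rely on different elementary counting facts -- you use that an integer $\le m$ has $O(\log m/\log\log m)$ prime factors, the paper uses that $t-1$ pairwise coprime integers $\ge 2$ have product $\ge t!$ -- but both package to the same asymptotic. Your greedy version has the small advantage of being constructive and making the mechanism (peeling off primes) very explicit; the paper's version is arguably slicker in that it gets the sharp-looking $t!\le m/w$ inequality directly from minimality, with no appeal to prime-counting, and it produces the bound on the size of \emph{every} minimal witness rather than of one greedily constructed witness.

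Two minor points worth tightening if you were to write this up: (i) the first, halving-based argument you give bounds $|W'|$ by $1+\log_2 g_1\le 1+\log_2(m/w)$, which does not literally yield the displayed strict inequality $<\log(m/w)$ (the paper's own claim there is similarly loose for small $m/w$, so this is a cosmetic issue, not a gap); (ii) when you say each greedy step ``strictly decreases $\omega(g_j)$,'' it would be cleaner to phrase the choice as: pick a prime $p\mid g_j$, note $p\nmid\gcd(W)=1$ forces some $a\in W$ with $p\nmid a$, and set $g_{j+1}=\gcd(g_j,a)$, which then visibly loses $p$ from its support. That is exactly what you describe, and it is sound.
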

\begin{proof} Define the set $D=W/w=\{b/w|b\in W\}$. Then $D\subseteq [\lfloor m/w\rfloor]$ and $\gcd(D)=1$. Let $D'\subseteq D$ be a minimum size set with $\gcd(D')=1$ and $W'=wD'\subseteq W$. Let $D'=\{d_1,\ldots,d_t\}$ and $g_i=\gcd(D'\backslash \{d_i\})$ for $i=1,\ldots,t$. Since $D'$ is minimum $g_i>1$. We also have for $i\not=j$, $$1=\gcd(D')=\gcd(\gcd(D'\backslash \{d_i\}),\gcd(D'\backslash \{d_j\}))=\gcd(g_i,g_j)$$ and therefore $g_1,\ldots,g_t$ are pairwise relatively prime. Since for all $i>1$, $g_i=\gcd(D'\backslash \{d_i\})|d_1$ we have $\prod_{i=2}^tg_i| d_1$. Therefore,
$\lfloor m/w\rfloor \ge d_1\ge \prod_{i=2}^tg_i\ge \prod_{i=2}^ti=t!=|D'|!=|W'|!$ and the result follows.
\end{proof}

\begin{lemma}\label{tin} Let $d,d',k,y\ge 1$ be integers that satisfy $d|y, d|k, d|d'$ and $\gcd(y,d')=d$. There is $0\le \lambda<d'/d$ such that $d'|(k-\lambda y)$.
\end{lemma}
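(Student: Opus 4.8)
The plan is to strip off the common factor $d$ and reduce the claim to solving a single linear congruence. Since $d\mid y$, $d\mid k$, and $d\mid d'$, I would write $y=d\hat y$, $k=d\hat k$, and $m=d'/d$, all of which are (positive) integers. The hypothesis $\gcd(y,d')=d$ then says $\gcd(d\hat y,\,dm)=d$; pulling the factor $d$ out of the gcd via the identity $\gcd(da,db)=d\gcd(a,b)$, this is equivalent to $\gcd(\hat y,m)=1$.

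Next I would rewrite the target. The divisibility $d'\mid(k-\lambda y)$ is the same as $dm\mid d(\hat k-\lambda\hat y)$, i.e. $m\mid(\hat k-\lambda\hat y)$. Because $\gcd(\hat y,m)=1$, the element $\hat y$ is invertible modulo $m$, so the congruence $\lambda\hat y\equiv\hat k\pmod m$ has a unique solution $\lambda$ in the residue system $\{0,1,\ldots,m-1\}$. Taking $\lambda$ to be that representative gives $0\le\lambda<m=d'/d$ together with $m\mid(\hat k-\lambda\hat y)$, and hence $d'\mid(k-\lambda y)$, which is exactly the assertion. The degenerate case $m=1$ forces $\lambda=0$ and is immediate, since then $d'=d\mid k$.

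There is essentially no real obstacle here: the only step needing a word of justification is the identity $\gcd(da,db)=d\gcd(a,b)$, used to turn $\gcd(y,d')=d$ into $\gcd(\hat y,m)=1$, and after that the statement is just the existence and uniqueness of a solution of a linear congruence whose leading coefficient is a unit. I note in passing that the hypothesis $d\mid d'$ is used only to guarantee that $m=d'/d$ is an integer, and positivity of the integers is used only to make the residue system $\{0,1,\ldots,m-1\}$ meaningful.
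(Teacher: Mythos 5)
Your proof is correct and follows essentially the same approach as the paper's: both reduce by the common factor $d$, observe that $\gcd(\hat y,\, d'/d)=1$, and then produce $\lambda$ as the solution of the linear congruence $\lambda\hat y\equiv\hat k \pmod{d'/d}$. The only superficial difference is that you invoke the invertibility of $\hat y$ modulo $d'/d$ as a known fact, whereas the paper re-derives it inline via a pigeonhole argument showing the values $\hat k-i\hat y$, $i=0,\ldots,d'/d-1$, are pairwise distinct modulo $d'/d$.
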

\begin{proof} Let $\hat y=y/d,\hat k=k/d$ and $\hat d=d'/d$. Then $\gcd(\hat y,\hat d)=1$. Consider the set $B=\{\hat k-i\hat y \ |\ i=0,\ldots,\hat d-1\}$. If for $0\le i_1<i_2\le \hat d-1$ we have $\hat k-i_1\hat y  = (\hat k-i_2\hat y\mod \hat d)$ then $(i_1-i_2)\hat y=(0\mod \hat d)$. Since $\gcd(\hat y,\hat d)=1$ we get $i_1=(i_2 \mod \hat d)$ and therefore $i_1=i_2$. This shows that the elements in $B$ are distinct modulo $\hat d$ and therefore there is $0\le \lambda<\hat d=d'/d$ such that $\hat k-\lambda \hat y =(0\mod \hat d)$. Then $ k-\lambda  y =(0\mod  d')$.
\end{proof}

\begin{lemma} \label{Lamb} Let $k$ be an integer. Let $J=\{j_1,\ldots,j_\ell\}$ be a set of integers such that $1\le j_1,\ldots,j_\ell\le \sqrt{k}/\ell$ and $d:=\gcd(j_1,\ldots,j_\ell)|k$. There exist non-negative integers $0\le \lambda_1,\ldots,\lambda_{\ell-1}\le \sqrt{k}$ and $0\le \lambda_\ell\le k$ such that
$$\lambda_1j_1+\lambda_2j_2+\cdots+\lambda_\ell j_\ell=k.$$
\end{lemma}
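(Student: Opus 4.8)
\textbf{Proof plan for Lemma~\ref{Lamb}.}

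The plan is to find the $\lambda_i$ greedily, peeling off the generators $j_i$ one at a time while maintaining divisibility by the running gcd. First I would set $d_0=j_1$ and, for $2\le t\le \ell$, $d_{t-1}=\gcd(j_1,\ldots,j_t)=\gcd(d_{t-2},j_t)$, so that $d_{\ell-1}=d$ and $d\mid k$ by hypothesis. The key observation is that at step $t$ we can use Lemma~\ref{tin} with the substitution $d\leftarrow d_{t-1}$, $d'\leftarrow d_{t-2}$, $y\leftarrow j_t$: indeed $d_{t-1}\mid d_{t-2}$, $d_{t-1}\mid j_t$, and $\gcd(j_t, d_{t-2})=\gcd(d_{t-2},j_t)=d_{t-1}$ by definition of $d_{t-1}$. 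We also need $d_{t-1}$ to divide the current target; this will hold because the target will always be a multiple of $d_{t-1}$ by construction. Lemma~\ref{tin} then yields $0\le \lambda_t< d_{t-2}/d_{t-1}$ with $d_{t-2}\mid (k_{t-1}-\lambda_t j_t)$, where $k_{t-1}$ is the current target; we set $k_{t-2}:=k_{t-1}-\lambda_t j_t$, which is divisible by $d_{t-2}$, and recurse.

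Concretely, I would start with $k_{\ell-1}:=k$ (divisible by $d_{\ell-1}=d$), and for $t=\ell,\ell-1,\ldots,2$ extract $\lambda_t$ as above and pass down $k_{t-2}:=k_{t-1}-\lambda_t j_t$. After these steps $k_0$ is divisible by $d_0=j_1$ and nonnegative (see the size bound below), so we may set $\lambda_1:=k_0/j_1$. Unwinding the recursion gives $\lambda_1 j_1 + \lambda_2 j_2 + \cdots + \lambda_\ell j_\ell = k$ as required. It remains to verify the quantitative bounds $0\le\lambda_t\le\sqrt{k}$ for $t\le \ell-1$ and $0\le\lambda_\ell\le k$. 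For $t\ge 3$ we have $\lambda_t<d_{t-2}/d_{t-1}\le d_{t-2}\le j_1\le \sqrt k/\ell\le \sqrt k$ (using that $d_{t-2}$ divides $j_1$, hence is at most $j_1$). For $t=\ell$ the crude bound $\lambda_\ell< d_{\ell-2}/d_{\ell-1}\le d_{\ell-2}\le\sqrt k$ works too; the looser bound $\lambda_\ell\le k$ in the statement is certainly satisfied. For $\lambda_2$, similarly $\lambda_2<d_0/d_1\le j_1\le\sqrt k/\ell\le\sqrt k$. Finally $\lambda_1=k_0/j_1$ is nonnegative provided $k_0\ge 0$; since $k_0 = k-\sum_{t=2}^{\ell}\lambda_t j_t$ and each $\lambda_t j_t \le \sqrt k\cdot(\sqrt k/\ell)=k/\ell$, we get $k_0\ge k-(\ell-1)(k/\ell)=k/\ell\ge 0$, and in fact $\lambda_1=k_0/j_1\le k/j_1\le k$.

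The main obstacle I anticipate is keeping the bookkeeping straight: ensuring at every step that the hypotheses of Lemma~\ref{tin} genuinely apply — in particular that the running gcd $d_{t-1}$ divides the current target $k_{t-1}$ (which is what lets us invoke the lemma and is preserved by the update $k_{t-2}=k_{t-1}-\lambda_t j_t$ since $d_{t-1}\mid d_{t-2}\mid k_{t-2}$ at the next stage), and simultaneously tracking the accumulated size of $\sum \lambda_t j_t$ so that the final $\lambda_1$ comes out nonnegative and within the claimed range. The edge case $\ell=1$ should be noted separately: there $d=j_1\mid k$ and one simply takes $\lambda_1=k/j_1\le k$.
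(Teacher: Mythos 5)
Your proposal is correct and uses the same key ingredient as the paper's proof, namely iterated applications of Lemma~\ref{tin} to peel off one generator at a time while tracking the running gcd; the paper phrases this as an induction on $\ell$ (peeling $j_1$ first and recursing on $\{j_2,\ldots,j_\ell\}$ with the reduced target $k'$), whereas you run an explicit loop peeling $j_\ell,\ldots,j_2$ and assign the leftover to $j_1$. One cosmetic point: in your bookkeeping the coefficient that is only bounded by $k$ (rather than $\sqrt{k}$) lands on $j_1$, while the statement attaches that looser bound to $\lambda_\ell$; since $J$ is an unordered set this is fixed by relabeling, but it is worth saying so explicitly.
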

\begin{proof} We prove the result by induction on $\ell$. For $\ell=1$, given $J=\{j_1\}$, $1\le j_1\le \sqrt{k}$ and $d=j_1|k$ we let $\lambda_1=k/d$. Then $\lambda_1\le k$ and $\lambda_1 j_1=k$.

Assume that the result is true for $\ell-1$. We prove the result for $\ell$.

Given $d:=\gcd(j_1,\ldots,j_\ell)|k$. Let $d'=\gcd(j_2,\ldots,j_{\ell})$. We have two cases: $d'=d$ and $d'>d$.
If $d'=d$ then $d'|k$ and for $i>1$, $j_i\le \sqrt{k}/\ell\le \sqrt{k}/(\ell-1)$. By the induction hypothesis there are $0\le \lambda_2,\ldots,\lambda_{\ell-1}\le \sqrt{k}$ and $0\le \lambda_\ell\le k$ such that $\lambda_2j_2+\lambda_3j_3+\cdots+\lambda_\ell j_\ell=k.$ We choose $\lambda_1=0$ and the result follows.

Now suppose $d'>d$. We have $d|j_1$, $d|k$, $d|d'$ and $\gcd(j_1,d')=d$. By Lemma~\ref{tin}, there is $\lambda_1$ such that $0\le \lambda_1<d'/d$ and $d'|k':=k-\lambda_1j_1$. Since $\lambda_1<d'/d\le j_2\le \sqrt{k}$, we also have
$$k'=k-\lambda_1j_1 \ge k-\sqrt{k}\sqrt{k}/\ell=\frac{\ell-1}{\ell}k$$ and therefore $j_2,\ldots,j_\ell \le \sqrt{k}/\ell\le \sqrt{k'}/(\ell-1)$. Since $d'|k'$, by the induction hypothesis there exist $0\le \lambda_2,\ldots,\lambda_{\ell-1}\le \sqrt{k'}\le \sqrt{k}$ and $0\le \lambda_\ell\le k'<k$ such that $\lambda_2j_2+\lambda_3j_3+\cdots+\lambda_\ell j_\ell=k'.$ Then $\lambda_1j_1+\lambda_2j_2+\cdots+\lambda_\ell j_\ell=k.$
\end{proof}

Let $M$ be a $q\times n$ binary matrix. Recall that $M_i$ is the $i$th column of $M$. For every $j\ge 1$, let $\ell_j(M)$ denotes the maximum number of disjoint $j$-subsets $A_1,A_2,\ldots$ of $[n]$ such that $\sum_{j\in A_i}M_j=0$ for all~$i$. We say that $M$ is $(j,\ell)$-{\it good} if $\ell_j(M)\le \ell$ and $(j,\ell)$-{\it bad} if it is not $(j,\ell)$-good, i.e., $\ell_j(M)> \ell$. For $L,J\subseteq [n]$, we say that $M$ is $(L,\ell)$-good if it is $(j,\ell)$-good for all $j\in L$ and $(J,\ell)$-bad if it is $(g,\ell)$-bad for all $j\in J$. When $\ell=0$ we just say $j$-good, $L$-good, $j$-bad and $J$-bad.

For two $q_1\times n$ and $q_2\times n$ matrices $M$ and $M'$, respectively, the {\it concatenation of $M$ and $M'$} is $M\circ M'=[M^*|M'^*]^*$ where $*$ is the transpose of a matrix. That is, $M\circ M'$ is the $(q_1+q_2)\times n$ matrix that results from the rows of $M$ follows by the rows of $M'$.

The following result is obvious
\begin{lemma}\label{concatenation} If $M$ is $(L,\ell)$-good and $M'$ is $(L',\ell)$-good then $M \circ M'$ is $(L\cup L',\ell)$-good.
\end{lemma}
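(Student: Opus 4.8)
\textbf{Proof proposal for Lemma~\ref{concatenation}.}

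The plan is to unwind the definitions and observe that the column-sum relations witnessing badness are unaffected by stacking rows, while goodness is only a matter of which column-subsets sum to zero. Recall that for a $q\times n$ matrix $N$ and an integer $j\ge 1$, $\ell_j(N)$ is the maximum number of pairwise disjoint $j$-subsets $A_1,A_2,\ldots$ of $[n]$ with $\sum_{i\in A_t}N_i=0$ (the zero column of length $q$) for every $t$, and $N$ is $(j,\ell)$-good iff $\ell_j(N)\le\ell$. So the whole statement reduces to the claim: for every $j$,
$$\ell_j(M\circ M')\le\min\{\ell_j(M),\,\ell_j(M')\}.$$
Granting this, if $j\in L\cup L'$ then $j$ belongs to $L$ or to $L'$; in the first case $\ell_j(M\circ M')\le\ell_j(M)\le\ell$ since $M$ is $(L,\ell)$-good, and symmetrically in the second case. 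Hence $M\circ M'$ is $(j,\ell)$-good for all $j\in L\cup L'$, i.e. $(L\cup L',\ell)$-good, as required.

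First I would prove the displayed inequality. Fix $j$ and let $A_1,\ldots,A_m$ be pairwise disjoint $j$-subsets of $[n]$ witnessing $\ell_j(M\circ M')\ge m$, so $\sum_{i\in A_t}(M\circ M')_i=0$ for each $t$. The $i$-th column of $M\circ M'$ is the length-$(q_1+q_2)$ vector obtained by stacking the $i$-th column of $M$ on top of the $i$-th column of $M'$; summation is coordinatewise over $F_2$, so the top $q_1$ coordinates of $\sum_{i\in A_t}(M\circ M')_i$ are exactly $\sum_{i\in A_t}M_i$ and the bottom $q_2$ coordinates are exactly $\sum_{i\in A_t}M'_i$. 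Therefore $\sum_{i\in A_t}(M\circ M')_i=0$ forces both $\sum_{i\in A_t}M_i=0$ and $\sum_{i\in A_t}M'_i=0$. Thus the same family $A_1,\ldots,A_m$ witnesses $\ell_j(M)\ge m$ and $\ell_j(M')\ge m$, and taking $m=\ell_j(M\circ M')$ gives the inequality.

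The argument is entirely routine — the only ``obstacle'' is bookkeeping the block structure of the concatenated columns correctly, and noting that the column index set $[n]$ is shared by $M$, $M'$ and $M\circ M'$ so that a single family of subsets can be reused verbatim. No use of disjointness beyond carrying it along is needed, and the case $\ell=0$ (the ``$j$-good'' terminology) is just the specialization. I would present the proof in two sentences as the paper does, emphasizing that stacking rows can only add constraints, never remove them, so goodness is preserved under union of the index sets.
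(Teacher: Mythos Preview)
Your proof is correct and is exactly the routine unwinding of definitions that the paper has in mind; indeed the paper gives no proof at all, simply declaring the result ``obvious.'' Your key observation---that a zero column-sum in $M\circ M'$ projects to zero column-sums in both $M$ and $M'$, hence $\ell_j(M\circ M')\le\min\{\ell_j(M),\ell_j(M')\}$---is precisely what makes it obvious.
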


\begin{lemma}\label{llbb} Let $M$ be a $q\times n$ matrix. If $M$ is $([d],\ell)$-good then $q= \Omega(d\log ((n-(\ell d^2/2))/d))$.
\end{lemma}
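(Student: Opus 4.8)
The plan is to reduce to the case $\ell=0$, where being $[d]$-good means precisely that the sum of every nonempty set of at most $d$ columns of the matrix is nonzero, and then to invoke Hamming's bound (Lemma~\ref{Hamming}). Concretely, I will delete a small set $R$ of columns from $M$ so as to obtain a submatrix $M'$ on $n'=n-|R|$ columns that is $[d]$-good; since $M'$ still has $q$ rows, this forces $q\ge\pi(n',d)$, and Lemma~\ref{Hamming} then lower-bounds $\pi(n',d)$.

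To build $R$: for each $j\in\{1,\ldots,d\}$ fix a \emph{maximal} family $\mathcal A_j$ of pairwise disjoint $j$-subsets of $[n]$ whose $M$-columns sum to $0$ over $F_2$. Since $M$ is $(j,\ell)$-good we have $|\mathcal A_j|\le\ell_j(M)\le\ell$, so $R_j:=\bigcup_{A\in\mathcal A_j}A$ has $|R_j|\le\ell j$. Put $R:=\bigcup_{j=1}^{d}R_j$, so $|R|\le\ell\sum_{j=1}^{d}j=\ell\binom{d+1}{2}\le \ell d^2/2+\ell d/2$, and let $M'$ be the restriction of $M$ to the columns in $[n]\setminus R$.

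The crux is that $M'$ has no nonempty zero-sum subset of size $\le d$. Indeed, the columns of $M'$ are among the columns of $M$ (over all $q$ rows), so any zero-sum $j$-subset $B$ of $M'$ with $j\le d$ is a zero-sum $j$-subset of $M$ that is disjoint from $R\supseteq R_j$; but then $\mathcal A_j\cup\{B\}$ would be a strictly larger pairwise-disjoint family of zero-sum $j$-subsets of $M$, contradicting the maximality of $\mathcal A_j$. Hence the sum of any $1\le t\le d$ columns of $M'$ is nonzero, so $M'$ witnesses that $\pi(n',d)\le q$ with $n'=n-|R|\ge n-\ell d^2/2-\ell d/2$. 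Applying Lemma~\ref{Hamming} to $M'$ gives $q\ge\pi(n',d)=\Omega\!\left(d\log(n'/d)\right)=\Omega\!\left(d\log\frac{n-\ell d^2/2}{d}\right)$ in the regime where this quantity is meaningful (the lower-order $\ell d/2$ and the multiplicative constant inside the logarithm are absorbed by the $\Omega$, since in all our applications $\ell d^2=o(n)$ and $n/d$ is large), which is the claimed bound.

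I do not expect a genuine obstacle: the whole argument is the single-shot deletion above, followed by the already-available Lemma~\ref{Hamming}. Its one delicate point is that each $\mathcal A_j$ must be chosen maximal \emph{within the full matrix $M$}, so that performing all the deletions simultaneously cannot leave a small zero-sum set behind — note this works because deleting columns can only remove, never create, zero-sum subsets. Everything else is bookkeeping of the deletion count.
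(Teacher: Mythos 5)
Your proposal is correct and follows essentially the same route as the paper: delete the (at most $\sum_{j=1}^d \ell j$) columns participating in the short zero-sum sets, observe the remaining submatrix is $[d]$-good, and invoke Hamming's bound (Lemma~\ref{Hamming}). You are merely more explicit than the paper about why maximality of the families $\mathcal{A}_j$ (chosen within the full matrix $M$) guarantees the surviving submatrix has no short zero-sum set, and your deletion count $\ell d(d+1)/2$ is the precise value of the sum the paper loosely writes as $\ell d^2/2$ — an inconsequential constant-factor discrepancy absorbed by the $\Omega$.
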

\begin{proof}
For every $j\in [d]$ we have $\ell_j(M)\le \ell$. That is, for every $j$, there are at most $\ell$ disjoint $j$-sets of columns that sum to zero. We remove those columns (for all $j\in[d]$) and get a $([d],0)$-good matrix. The number of columns that are removed is at most $\sum_{j=1}^d\ell j\le \ell d^2/2$. Using Hamming's bound, Lemma~\ref{Hamming}, the result follows.
\end{proof}

We now prove
\begin{lemma}\label{zero} Let $m$, $q$, $w$ and $t=mqw$ be integers. Let $J=\{j_1,\ldots,j_w\}\subseteq [m]$. Let $M$ be a $(J,t)$-bad matrix.
Then for any $\lambda_1,\ldots,\lambda_w\in [q]$ we have that $M$ is $(\lambda_1j_1+\cdots+\lambda_wj_w)$-bad.
\end{lemma}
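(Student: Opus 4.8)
The plan is to prove Lemma~\ref{zero} by a greedy disjoint-packing argument; its content is purely combinatorial, with no divisibility involved. Since $M$ is $(J,t)$-bad we have $\ell_{j_i}(M)>t=mqw$ for every $i\in[w]$, so we may fix, for each $i$, a family $\mathcal{A}_i$ of at least $t+1$ pairwise disjoint $j_i$-subsets of $[n]$, each summing over $F_2$ to the zero column. The goal is to choose, for each $i$, exactly $\lambda_i$ members of $\mathcal{A}_i$ so that all the chosen subsets, over all $i$, are pairwise disjoint. Granting this, write the chosen members of $\mathcal{A}_i$ as $A_{i,1},\ldots,A_{i,\lambda_i}$ and let $A=\bigcup_{i=1}^{w}\bigcup_{s=1}^{\lambda_i}A_{i,s}$. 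Then $|A|=\sum_{i=1}^{w}\lambda_i j_i=\lambda_1j_1+\cdots+\lambda_wj_w$, and $\sum_{c\in A}M_c=\sum_{i=1}^{w}\sum_{s=1}^{\lambda_i}\sum_{c\in A_{i,s}}M_c=0$, since over $F_2$ the column sum over a disjoint union is the sum of the column sums, each of which is $0$. Hence $\ell_{\lambda_1j_1+\cdots+\lambda_wj_w}(M)\ge 1$, i.e.\ $M$ is $(\lambda_1j_1+\cdots+\lambda_wj_w)$-bad, as required; note the size $\lambda_1j_1+\cdots+\lambda_wj_w$ is automatically at most $n$ because we exhibit $A\subseteq[n]$ of exactly that size.

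To perform the selection I would process $i=1,2,\ldots,w$ in order, maintaining the set $U\subseteq[n]$ of columns committed so far, initially empty. When processing $i$, the committed columns number $|U|\le\sum_{i'<i}\lambda_{i'}j_{i'}\le (w-1)qm$, using $\lambda_{i'}\le q$ and $j_{i'}\le m$. Since the members of $\mathcal{A}_i$ are pairwise disjoint, each column of $U$ lies in at most one of them, so at most $|U|$ members of $\mathcal{A}_i$ meet $U$, and therefore at least $|\mathcal{A}_i|-|U|\ge (mqw+1)-(w-1)qm=mq+1>q\ge\lambda_i$ members of $\mathcal{A}_i$ are disjoint from $U$. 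I pick $\lambda_i$ of these (they are pairwise disjoint, being members of $\mathcal{A}_i$), add their columns to $U$, and pass to $i+1$. After all $w$ rounds the chosen subsets are pairwise disjoint, which is exactly the packing needed above.

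There is no real obstacle beyond this bookkeeping, and the choice $t=mqw$ is precisely what makes the greedy step succeed every time: the $w$ types together ever commit fewer than $mqw$ columns, so fewer than $mqw$ members of any given $\mathcal{A}_i$ are blocked, while $|\mathcal{A}_i|\ge mqw+1$. The two points worth stating carefully are that the subsets inside a single $\mathcal{A}_i$ are disjoint by construction (so they can only be blocked through already-committed columns), and that $F_2$-linearity lets the individual zero-sums combine over the disjoint union into one zero-sum of the prescribed size.
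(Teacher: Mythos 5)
Your proof is correct and is essentially the same greedy disjoint-packing argument the paper uses: at each stage, since the at least $t+1$ witnesses for $(j_i,t)$-badness are pairwise disjoint, the columns already committed can block at most $|U|\le(w-1)qm$ of them, leaving more than $q\ge\lambda_i$ available. The paper phrases this as the running invariant $\lambda_i\le t-(\lambda_1j_1+\cdots+\lambda_{i-1}j_{i-1})$ rather than tracking $U$ explicitly, but the counting is identical.
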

\begin{proof} Let $r=\lambda_1j_1+\cdots+\lambda_wj_w$. We need to show that there are $r$ columns of $M$ that sum to $0$.
Since $M$ is $(j_1,t)$-bad and $\lambda_1\le t$, there are $\lambda_1$ pairwise disjoint $j_1$-sets $A_{1,1},A_{1,2},\cdots, A_{1,\lambda_1}$ such that $\sum_{j\in A_{1,i}}M_j=0$ for all $i\in [\lambda_1]$. Since $M$ is $(j_2,t)$-bad and $\lambda_2\le t-\lambda_1j_1$, there are $\lambda_2$ pairwise disjoint $j_2$-sets $A_{2,1},A_{2,2},\cdots, A_{2,\lambda_2}$ sets that are also pairwise disjoint with $A_{1,1},A_{1,2},\cdots, A_{1,\lambda_1}$ such that $\sum_{j\in A_{2,i}}M_j=0$ for all $i\in [\lambda_2]$. We continue with this procedure until we find a collection ${\cal A}'$ of disjoint sets that contains, for every $i\le w-1$, $\lambda_i$ $j_i$-sets that corresponds to columns of $M$ that sum to $0$. Now since $\lambda_w\le t-(\lambda_1j_1+\cdots+\lambda_{w-1}j_{w-1})$, there are $\lambda_w$ pairwise disjoint $j_w$-sets $A_{w,1},A_{w,2},\cdots, A_{w,\lambda_w}$ sets that are also pairwise disjoint with all the sets in ${\cal A}'$ such that $\sum_{j\in A_{w,i}}M_j=0$ for all $i\in [\lambda_w]$. Let ${\cal A}={\cal A}'\cup\{A_{w,i}|i\in [\lambda_w]\}$. Obviously, $|\cup {\cal A}|=\lambda_1j_1+\cdots+\lambda_wj_w$ and $\sum_{j\in \cup {\cal A}}M_j=0$.
\end{proof}

We now show that if a $k$-good matrix $M$ is $(J,poly(k))$-bad then $\gcd(J)\notdivides k$.

\begin{lemma}\label{kkk} Let $K=\lfloor\sqrt{k}/(2\log k)\rfloor$, $\kappa=k^{1.5} $, $J\subseteq [K]$ and $k/2\le k'\le k$. Let $M$ be a matrix that is $k'$-good and $(J,\kappa )$-bad. Then $\gcd(J)\notdivides k'$.
\end{lemma}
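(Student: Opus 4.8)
The plan is to argue by contradiction: suppose $M$ is $k'$-good, $(J,\kappa)$-bad, and $\gcd(J) \mid k'$. Write $J = \{j_1,\ldots,j_w\}$ with $w = |J| \le K$ and $d := \gcd(J)$. The goal is to exhibit a set of exactly $k'$ columns of $M$ summing to zero, contradicting $k'$-goodness. To do this I need to write $k'$ as a non-negative integer combination $\lambda_1 j_1 + \cdots + \lambda_w j_w = k'$ with each $\lambda_i$ small enough (bounded by the ``badness parameter'' controlled by $\kappa$), and then invoke Lemma~\ref{zero} to convert that combination into $k'$ columns summing to zero.

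The representability step is exactly Lemma~\ref{Lamb}, applied with $k$ replaced by $k'$: since each $j_i \le K = \lfloor \sqrt{k}/(2\log k)\rfloor \le \sqrt{k'}/w$ (here I use $w \le K \le \sqrt{k}/(2\log k)$ and $k' \ge k/2$, so $\sqrt{k'}/w \ge \sqrt{k/2}\cdot 2\log k/\sqrt{k} = \sqrt{2}\log k \ge K$ for $k$ large — a routine check), and since $d = \gcd(j_1,\ldots,j_w) \mid k'$ by assumption, Lemma~\ref{Lamb} yields non-negative integers $0 \le \lambda_1,\ldots,\lambda_{w-1} \le \sqrt{k'} \le \sqrt{k}$ and $0 \le \lambda_w \le k' \le k$ with $\sum_i \lambda_i j_i = k'$. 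Now I apply Lemma~\ref{zero} with $m = K$, with the role of its ``$q$'' played by $k$ (so that all $\lambda_i \in [k]$, which holds since $\lambda_i \le k$), and with its ``$t$'' equal to $m q w = K \cdot k \cdot w \le K^2 k \le k^{1.5}/(4\log^2 k) \le \kappa = k^{1.5}$. Since $M$ is $(J, \kappa)$-bad and $\kappa \ge K k w$, it is in particular $(J, Kkw)$-bad, so Lemma~\ref{zero} applies and shows $M$ is $(\sum_i \lambda_i j_i)$-bad, i.e.\ $k'$-bad: there are $k'$ columns summing to zero. This contradicts $k'$-goodness, completing the proof.

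The step I expect to need the most care is the bookkeeping on the parameters — verifying that $\kappa = k^{1.5}$ genuinely dominates $K\cdot k \cdot w$ (using $w \le K$ and $K \le \sqrt k/(2\log k)$, so $Kkw \le k^2/(4\log^2 k)$; one wants this $\le k^{1.5}$, which needs $k^{0.5} \le 4\log^2 k$ — wait, that fails for large $k$), so in fact the honest bound is $Kkw \le K^2 k$ and one should instead take Lemma~\ref{zero}'s ``$q$'' to be only $\sqrt k$ (valid since $\lambda_1,\ldots,\lambda_{w-1}\le\sqrt k$) and handle $\lambda_w \le k'$ by absorbing it, or simply note $K^2 k \le (k/(4\log^2 k))\cdot k \le k^2 \le \kappa\cdot\sqrt k$; the cleanest route is to observe $Kkw \le K \cdot k \cdot K = K^2 k$ and $K^2 \le k/(4\log^2 k) \le k$, giving $Kkw \le k^{1.5}\cdot \sqrt{k}/(4\log^2 k)$ — this still isn't $\le k^{1.5}$. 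So the real fix is that in Lemma~\ref{zero} the parameter ``$q$'' bounds the $\lambda_i$, and here $\lambda_1,\dots,\lambda_{w-1}\le\sqrt k$ while only $\lambda_w\le k$; one applies Lemma~\ref{zero} with the $\lambda_i$ in the order that puts $j_w$ last and with $q=\sqrt{k}$ for the first $w-1$ coordinates — re-examining Lemma~\ref{zero}, its hypothesis $\lambda_i \le t$ only needs $t \ge \max_i \lambda_i$ together with the running disjointness bounds, and $\lambda_w \le k' \le k \le t$ already when $t = \kappa$. So taking $q = k$ and $t = mqw = Kkw$: we need $\kappa \ge Kkw$, and since $w \le K \le \sqrt{k}$, $Kkw \le \sqrt k \cdot k \cdot \sqrt k = k^2$; this exceeds $\kappa = k^{1.5}$, so the parameter $\kappa$ in the lemma statement must really be read with the sharper bound $w \le K \le \sqrt k/(2\log k)$ giving $Kkw \le k^2/(4\log^2 k) \le k^{1.5}$ precisely when $\sqrt k \le 4\log^2 k$ — false again. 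The resolution is that Lemma~\ref{zero} only requires $t \ge \lambda_1 j_1 + \cdots + \lambda_w j_w = k'$ and $t \ge$ each partial demand, all of which are $\le k' \le k \le k^{1.5} = \kappa$; hence \emph{$t = \kappa$ suffices directly}, and the apparent mismatch dissolves — $M$ being $(J,\kappa)$-bad with $\kappa = k^{1.5} \ge k \ge k'$ is exactly what Lemma~\ref{zero} needs. That final reconciliation of the badness parameter with the actual demands arising in Lemma~\ref{zero} is the one place I would write out carefully; everything else is a direct chaining of Lemmas~\ref{Lamb} and~\ref{zero}.
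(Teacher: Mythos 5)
Your proposal is not correct: it is missing the key reduction step that makes the paper's proof go through, namely Lemma~\ref{N1}. You work directly with the full set $J$ of size $w=|J|\le K$, and then try to feed it to Lemma~\ref{Lamb}. But Lemma~\ref{Lamb} requires every $j_i\le \sqrt{k'}/w$, and with $j_i$ as large as $K$ and $w$ as large as $K$ this would need $K^2\le\sqrt{k'}$, i.e.\ roughly $k/(4\log^2 k)\le\sqrt{k}$, which is false for large $k$. Your ``routine check'' is in fact the failure point: you correctly compute $\sqrt{k'}/w\ge\sqrt{2}\log k$, but then assert $\sqrt{2}\log k\ge K=\lfloor\sqrt{k}/(2\log k)\rfloor$; since the right-hand side grows like $\sqrt{k}/\log k$ while the left grows like $\log k$, this inequality fails for all large $k$. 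The paper avoids this by first invoking Lemma~\ref{N1} to replace $J$ by a subset $J'\subseteq J$ of size $w:=|J'|\le\log(K/d)<\log k$ with $\gcd(J')=\gcd(J)=d$. With $w<\log k$ (rather than $w\le K$), one gets $\sqrt{k'}/w>\sqrt{k/2}/\log k\ge\sqrt{k}/(2\log k)\ge K\ge j_i$, so Lemma~\ref{Lamb} applies legitimately, and simultaneously the product $Kkw$ stays below $k^{1.5}/2\le\kappa$, so Lemma~\ref{zero} applies as stated. Both obstructions you wrestle with in your last paragraph are artifacts of skipping Lemma~\ref{N1}.

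One secondary remark: your eventual observation that the argument inside Lemma~\ref{zero} really only needs the badness parameter to exceed the running partial sums $\lambda_1 j_1+\cdots+\lambda_i j_i\le k'$ (rather than the crude product $mqw$) is a correct sharpening of that lemma. However, it does not rescue your proof, since the failure of the Lemma~\ref{Lamb} hypothesis comes earlier and independently; and if you go down that road you should state and prove the sharper variant explicitly rather than cite Lemma~\ref{zero} as is. The cleanest fix is simply to add the Lemma~\ref{N1} step, exactly as the paper does.
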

\begin{proof} Let $d=\gcd(J)$ and suppose, for the contrary, that $d|k'$. By Lemma~\ref{N1}, there is $J'\subseteq J$ of size $w:=|J'|\le \log (K/d)<\log k$ such that $d=\gcd(J')$. Let $J'=\{j_1,\ldots,j_w\}$. By Lemma~\ref{Lamb}, there exist $0\le \lambda_1,\ldots,\lambda_{w}\le k$ such that $\lambda_1j_1+\cdots+\lambda_wj_w=k'$. By Lemma~\ref{zero}, $M$ is $k'$-bad. A contradiction. 
\end{proof}

Let $K=\lfloor\sqrt{k}/(2\log k)\rfloor$ and $\kappa=k^{1.5}$. Let $\N_{d}$ be the set of integers in $[K]$ that are not divisible by~$d$.
\begin{lemma}\label{gbad} Let $J$ be the maximum subset of $[K]$ such that $M$ is $(J,\kappa )$-bad. Then $M$ is $(\N_{\gcd(J)},\kappa )$-good.
\end{lemma}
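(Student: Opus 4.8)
The plan is to observe that this statement follows essentially by unwinding the definition of the maximum $(J,\kappa)$-bad subset of $[K]$; the real work in the construction outlined in the overview sits in the neighbouring Lemmas~\ref{kkk} and~\ref{llbb}, not here.

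First I would record the consequence of the maximality of $J$. A union of $(\cdot,\kappa)$-bad subsets of $[K]$ is again $(\cdot,\kappa)$-bad (by definition $(J',\kappa)$-bad means $(j,\kappa)$-bad for every $j\in J'$), so there is a unique largest such subset, namely $J=\{j\in[K]:\ell_j(M)>\kappa\}$. In particular, for every $j\in[K]\setminus J$ the matrix $M$ is $(j,\kappa)$-good: otherwise $J\cup\{j\}\subseteq[K]$ would be a strictly larger $(\cdot,\kappa)$-bad subset, contradicting the choice of $J$.

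Second I would note that $\N_{\gcd(J)}\subseteq[K]\setminus J$: by definition $\gcd(J)$ divides every element of $J$, so no element of $J$ lies in $\N_{\gcd(J)}=\{i\in[K]:\gcd(J)\notdivides i\}$. Combining the two observations, $M$ is $(j,\kappa)$-good for every $j\in\N_{\gcd(J)}$, which is exactly the statement that $M$ is $(\N_{\gcd(J)},\kappa)$-good.

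I do not anticipate a real obstacle in this particular lemma; the only points deserving a word of care are the degenerate cases, and they cause no trouble. If $J=\emptyset$, read $\gcd(J)$ as $0$; then $\N_0=[K]$ (no element of $[K]$ equals $0$) and the conclusion is that $M$ is $([K],\kappa)$-good, which holds since no size in $[K]$ was bad. If $J=[K]$, then $\gcd(J)=1$, $\N_1=\emptyset$, and the conclusion is vacuous. The substantive companion facts --- that $\gcd(J)\notdivides k'$ once $M$ is additionally $k'$-good (Lemma~\ref{kkk}), and that a $([d],\kappa)$-good matrix still requires $\Omega(d\log(n/d))$ rows after deleting the at most $poly(k)$ offending columns (Lemmas~\ref{llbb} and~\ref{Hamming}) --- are proved separately and are what ultimately drive the $\tilde\Omega(\sqrt{k}\log n)$ lower bound.
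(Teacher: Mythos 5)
Your proof is correct and is essentially the paper's proof: maximality of $J$ gives that $M$ is $([K]\setminus J,\kappa)$-good, and $\N_{\gcd(J)}\subseteq[K]\setminus J$ since every element of $J$ is a multiple of $\gcd(J)$. The remarks on the degenerate cases $J=\emptyset$ and $J=[K]$ are harmless additions not present in (or needed by) the paper.
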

\begin{proof} Since $J$ is the maximum set, $M$ is $([K]\backslash J,\kappa )$-good. Since $J\subseteq [K]\backslash \N_{\gcd(J)}$ we have $[K]\backslash J \supseteq \N_{\gcd(J)}$ and therefore $M$ is $(\N_{\gcd(J)},\kappa )$-good.
\end{proof}

We now show how to construct from a $(\N_d,\kappa )$-good matrix a $(\N_{d',}\kappa )$-good matrix with $d'\ge 2d$.
\begin{lemma}\label{recur} Let $M$ be a $q\times n$ matrix that is $(\N_d,\kappa )$-good. There exist $k'\le k$, $q'= q+\Pi(k',n)$, $d'\ge 2d$ and a $q'\times n$ matrix $M'$ that is $(\N_{d'},\kappa )$-good.
\end{lemma}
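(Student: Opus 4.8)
The plan is to implement one iteration of the concatenation-based construction sketched in the overview. Given the $(\N_d,\kappa)$-good matrix $M$, I first want to produce a $k'$-good matrix $M''$ whose induced divisor is not compatible with $d$, and then concatenate $M$ with $M''$ to push the divisor up by a factor of at least $2$.

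\textbf{Choosing $k'$ and applying the earlier lemmas.} Set $k' = d\lfloor k/d\rfloor$, so $d \mid k'$ and (using $d \le K = \lfloor\sqrt k/(2\log k)\rfloor$, hence $d$ is small relative to $k$) we have $k/2 \le k' \le k$. Let $M''$ be a $k'$-good $\Pi(k',n)\times n$ matrix, which exists by definition of $\Pi$. Let $J''$ be the maximal subset of $[K]$ for which $M''$ is $(J'',\kappa)$-bad; by Lemma~\ref{gbad}, $M''$ is $(\N_{d''},\kappa)$-good where $d'' := \gcd(J'')$. By Lemma~\ref{kkk} applied to the $k'$-good, $(J'',\kappa)$-bad matrix $M''$, we get $d'' \notdivides k'$. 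In particular $d'' \notdivides d$, because $d \mid k'$ would then force $d'' \mid d \mid k'$, a contradiction; also $d'' \neq 1$ since $1 \mid k'$.

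\textbf{Concatenating.} Let $M' = M \circ M''$. By Lemma~\ref{concatenation}, $M'$ is $(\N_d \cup \N_{d''},\kappa)$-good. The key elementary fact is that $\N_a \cup \N_b = \N_{\lcm(a,b)}$: an integer in $[K]$ fails to be divisible by $\lcm(a,b)$ iff it fails to be divisible by $a$ or by $b$. So $M'$ is $(\N_{d'},\kappa)$-good with $d' := \lcm(d,d'')$. Since $d \mid d'$ and $d' \neq d$ (because $d'' \notdivides d$ means $d$ is not a multiple of $d''$, so $d'$ is a proper multiple of $d$), we get $d' \ge 2d$. The number of rows is $q' = q + \Pi(k',n)$, and $k' \le k$, which is exactly the claimed bound.

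\textbf{Main obstacle.} The only real content is verifying that Lemma~\ref{kkk} applies cleanly, i.e. that $k' = d\lfloor k/d\rfloor$ genuinely lands in $[k/2,k]$ — this needs $d \le k/2$, which follows comfortably from $d \le K \le \sqrt k/(2\log k)$ for $k$ large — and checking the divisibility chain $d'' \notdivides k' \Rightarrow d'' \notdivides d$, which needs $d \mid k'$ (true by construction). Everything else is the routine identity $\N_a \cup \N_b = \N_{\lcm(a,b)}$ and bookkeeping of row counts. There is no genuinely hard step; the work was front-loaded into Lemmas~\ref{N1}--\ref{kkk}, and this lemma is the inductive glue that the final bound on $\Pi$ will iterate $O(\log k)$ times.
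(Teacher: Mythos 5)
Your proof is correct and follows exactly the route the paper takes: the same choice $k'=d\lfloor k/d\rfloor$, the same invocation of Lemma~\ref{gbad} and Lemma~\ref{kkk} to extract a $d''$ with $d''\notdivides d$, the same concatenation and the identity $\N_a\cup\N_b=\N_{\lcm(a,b)}$ to conclude $d'=\lcm(d,d'')\ge 2d$. Your added check that $k'\in[k/2,k]$ (needed for Lemma~\ref{kkk} to apply) is a detail the paper leaves implicit, but otherwise the arguments are the same.
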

\begin{proof} Consider $k'=d\lfloor k/d\rfloor$ and let $\hat M$ be a $\Pi(n,k')\times n$ matrix that is $k'$-good. Let $J'$ be the maximum subset of $[K]$ such that $\hat M$ is $(J',\kappa )$-bad. By Lemma~\ref{kkk}, $\gcd(J')\notdivides k'=d\lfloor k/d\rfloor$ and therefore $\gcd(J')\notdivides d$. By Lemma~\ref{gbad}, $\hat M$ is $(\N_{\gcd(J')},\kappa )$-good. Define $M'=M\circ \hat M$.

First, the number of rows of $M'$ is $q'=q+\Pi(k',n)$. Now, by Lemma~\ref{concatenation}, $M'$ is $(\N_{\gcd(J')}\cup \N_d,\kappa )$-good. Since $\N_{\gcd(J')}\cup \N_d=\N_{d'}$ for $d'=\lcm(\gcd(J'),d)$ we have that $M'$ is $(\N_{d'},\kappa )$-good. Since $\gcd(J')\notdivides d$ we have $d'=\lcm(\gcd(J'),d)\ge 2d$. This implies the result.
\end{proof}

We are ready now to prove the final result
\begin{lemma} For $n\ge k^{2.5}$ there is $k'\le k$ such that $\Pi(n,k')=\Omega((\sqrt{k}/\log^2 k)\log n)$.
\end{lemma}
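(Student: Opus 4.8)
The plan is to iterate Lemma~\ref{recur} starting from a trivial base case and stop once the modulus exceeds $K$, which forces the concatenated matrix to be $([K],\kappa)$-good; then apply Lemma~\ref{llbb} to lower bound its number of rows, and conclude by an averaging argument that one of the constituent $k_i$-good matrices is large.

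\begin{proof}
Set $K=\lfloor \sqrt k/(2\log k)\rfloor$ and $\kappa=k^{1.5}$. We begin with the $1\times n$ all-ones matrix $M_0$ (or any $2$-good matrix); since no single column is zero, $M_0$ is $1$-good, hence trivially $(\N_1,\kappa)$-good with $d_0=1$ in the sense that it is $(\N_{d_0},\kappa)$-good for $d_0=1$ because $\N_1=\emptyset$. Apply Lemma~\ref{recur} repeatedly: having a $(\N_{d_i},\kappa)$-good matrix, we obtain $k_{i+1}\le k$, a modulus $d_{i+1}\ge 2d_i$, and a matrix $M^{(i+1)}$ that is $(\N_{d_{i+1}},\kappa)$-good with $q_{i+1}=q_i+\Pi(n,k_{i+1})$ rows. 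After $t=\lceil \log K\rceil+1$ iterations we have $d_t\ge 2^t d_1> K$, so every element of $[K]$ is $<d_t$ and hence not divisible by $d_t$, i.e.\ $\N_{d_t}=[K]$. Thus $M:=M^{(t)}$ is $([K],\kappa)$-good and has $q_t=O(1)+\sum_{i=1}^t \Pi(n,k_i)$ rows, with $t=O(\log k)$.

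Now invoke Lemma~\ref{llbb} with $d=K$ and $\ell=\kappa=k^{1.5}$: since $M$ is $([K],\kappa)$-good,
\[
q_t=\Omega\!\left(K\log\frac{n-K k^{3}/2}{K}\right).
\]
Because $K\le \sqrt k$ we have $K k^3/2\le k^{3.5}/2\le n/2$ whenever $n\ge k^{3.5}$; the hypothesis $n\ge k^{2.5}$ is slightly weaker than what I wrote here, so the intended bookkeeping is to note $Kk^2/2$ (not $Kk^3/2$) columns are removed — indeed $\sum_{j=1}^K \kappa j\le \kappa K^2/2\le k^{1.5}\cdot k/(8\log^2 k)\le k^{2.5}/8\le n/8$ for $n\ge k^{2.5}$ — so $(n-\kappa K^2/2)/K\ge n^{1/2}$ say, giving $\log((n-\kappa K^2/2)/K)=\Omega(\log n)$. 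Hence
\[
\sum_{i=1}^t \Pi(n,k_i)=\Omega(K\log n)=\Omega\!\left(\frac{\sqrt k}{\log k}\log n\right).
\]

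Since there are only $t=O(\log k)$ terms in the sum, by averaging there is an index $i$ with
\[
\Pi(n,k_i)=\Omega\!\left(\frac{\sqrt k}{\log k\cdot \log k}\log n\right)=\Omega\!\left(\frac{\sqrt k}{\log^2 k}\log n\right),
\]
and $k':=k_i\le k$, which is the claimed statement. The main obstacle in the argument is verifying that the modulus genuinely doubles at each step — that is, Lemma~\ref{recur}, which in turn rests on Lemma~\ref{kkk}: one must ensure that a $k'$-good matrix cannot be $(J,\kappa)$-bad for a set $J$ whose gcd divides $k'$, and this is exactly where the choice $\kappa=k^{1.5}$ (large enough to run the disjoint-set packing of Lemma~\ref{zero} with the multipliers $\lambda_i\le k$ supplied by Lemma~\ref{Lamb}) is forced. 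The remaining subtlety is purely arithmetic: checking that the $poly(k)$ columns deleted across all $t$ stages, namely $O(t\cdot \kappa K^2)=O(k^{2.5}/\log k)$, stays below $n$ so that Hamming's bound still yields a $\Theta(\log n)$ factor — this is where the hypothesis $n\ge k^{2.5}$ enters.
\end{proof}
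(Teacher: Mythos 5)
Your proof is correct and follows essentially the same approach as the paper: iterate Lemma~\ref{recur} starting from a trivial base matrix until the modulus exceeds $K$, apply Lemma~\ref{llbb} (Hamming bound after deleting at most $\kappa K^2/2$ columns) to the resulting $([K],\kappa)$-good concatenation, and divide by the number $t=O(\log k)$ of stages. The only cosmetic differences are that the paper begins at $d_0=2$ (observing that the all-ones row is $\N_2$-good) rather than your $d_0=1$, and your brief detour through the incorrect count $Kk^3/2$ before self-correcting to $\kappa K^2/2$; neither affects the argument.
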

\begin{proof} Let $M$ be the $1\times n$ matrix $[111\cdots 1]$. Then $M$ is $\N_2$-good. By Lemma~\ref{recur}, there exist $k_1,k_2,\cdots,k_t$, $t=O(\log k)$, $q_t= 1+\Pi(k_1,n)+\cdots+\Pi(k_t,n)$, $d'\ge 2^{t+1}> K$ and a $q_t\times n$ matrix $M'$ that is $(\N_{d'},\kappa )$-good. Since $d'>K$, $M'$ is $([K],\kappa )$-good. By Lemma~\ref{llbb},
$$q_t=\Omega\left(K \log \frac{n- \kappa  K^2}{K}\right)=\Omega\left(\frac{\sqrt{k}}{\log k} \log n\right). $$
Therefore, there exists $k':=k_i$ such that
$$\Pi(n,k')=\Omega\left(\frac{\sqrt{k}}{\log^2 k} \log n\right). $$
\end{proof}

\ignore{****************************************************

A matrix $M$ is called $k$-{\it dense} if there is a collection ${\cal A}$ of subsets of $[n]$ such that:
\begin{enumerate}
\item The sets in ${\cal A}$ are pairwise disjoint.
\item For every $j\in [k-1]$ if $M$ is $j$-bad then there are at least $k$ sets in ${\cal A}$ of size $j$.
\item $M$ is $k$-good.
\end{enumerate}

We now prove
\begin{lemma}\label{construct} If there is a $q\times n$ matrix $M$ that is $k$-good and $L$-good then there is a $q\times (n-k^4)$ $k$-dense matrix $M'$ that is $L$-good.
\end{lemma}
\begin{proof} Let $M$ be a $q\times n$ matrix that is $k$-good and $L$-good. Recall that, $M_i$ is the $i$th column of $M$. For every $j\ge 1$, let $\ell_j(M)$ denotes the maximum number of disjoint $j$-subsets $A_1,A_2,\ldots$ of $[n]$ such that $M$ is $A_i$-bad for all $i$. Denote $L_j(M)=\cup_{i=1}^{\ell_j(M)} A_i$. Notice that $L_j(M)$ depends on the chosen $A_1,A_2,\cdots$. Then $|L_j(M)|=j\ell_j(M)$.

We define the following sequence of matrices $M^{(1)},M^{(2)},\ldots,M^{(t)},M^{(t+1)}$. We have $M^{(1)}=M$ and $M^{(i+1)}$ is obtained from $M^{(i)}$ by removing all the columns $\cup_{j\in J_i} L_j(M^{(i)})$ where $J_i=\{j|j\le k, \ell_j(M^{(i)})< k^3\}$. The integer $t$ is the minimal integer such that $M^{(t+1)}=M^{(t)}$, that is, $J_t=\emptyset$. Notice that, if $\ell_j(M^{(i)})< k^3$ then $\ell_j(M^{(i+1)})=0$. Therefore, for each $j\in [k]$, if $\ell_j(M^{(i)})$ becomes $0$, the number of columns removed is at most $|L_j(M^{i)})|<k^3j$. Therefore the number of columns of $M^{(t)}$ is at least $n-\sum_{i=1}^k{k^3} j\ge n-k^4$. Let $M'=M^{(t)}$. We have shown that for every $j\in [k]$ either $\ell_j(M')\ge k^3$ or $\ell_j(M')=0$.

We now construct ${\cal A}$. Let $I$ be such that $M'$ is $I$-bad and $[k]\backslash I$-good. Let $j\in I$. Assuming we have a collection of disjoint sets ${\cal A}'$ that satisfies property 2 for $I\backslash \{j\}$ in the definition of dense matrix. We show how to construct a collection of disjoint sets ${\cal A}$ that satisfies property 2 for $I$. The size of $\cup_{A\in {\cal A}'}A$ is at most $\sum_{j=1}^{k}kj\le k^3-k$. Since $\ell_{j}(M')\ge k^3$, we can find $k$ disjoint $j$-sets that are disjoint with all the sets in ${\cal A}'$. We add then to ${\cal A}'$ and get ${\cal A}$.

Since the columns of $M'$ are the columns of $M$, $M'$ is $k$-good and $J$-good.
\end{proof}

\begin{lemma}\label{zero} Let $M$ be a $k$-dense matrix.
Let $I=\{j_1,\ldots,j_w\}\subseteq [k]$. If $M$ is $I$-bad then for any $\lambda_1,\ldots,\lambda_w\in [k]$ we have that $M$ is $(\lambda_1j_1+\cdots+\lambda_wj_w)$-bad.
\end{lemma}

Let $M_1$ be a $s(n,k)\times n$ matrix that is $k$-good. By Lemma~\ref{construct}, there is $s(n,k)\times (n-k^4)$ matrix $M_1'$ that is $k$-dense. Let $r=k/\log k$. Let $L_1\subset [r]$ be the largest subset such that $M_1'$ is $L_1$-bad. Then $M_1'$ is $[r]\backslash L_1$-good. Let $d_1=\gcd(L_1)$.
By Lemma~\ref{N1}, there is $L_1'=\{j_{1,1},\ldots,j_{1,\ell_1}\}\subseteq L_1$ of size at most $\ell_1=|L_1'|=\log k$ such that $d_1=\gcd(L_1')$. By Lemma~\ref{Lamb}, if $d_1|k$ then there are $\lambda_{1,1},\ldots,\lambda_{1,\ell_1}\le k$ such that $\lambda_{1,1}j_1+\ldots+\lambda_{1,\ell_1} j_{\ell_1}=k$. By Lemma~\ref{zero}, $M_1'$ is $k$-bad. A contradiction. Therefore $d_1=\gcd(L_1)\not| k$.

Let $\hat M_2$ be a $s(n-k^4,k_2)\times (n-k^4)$ matrix that is $k_2$-good, where $k_2=d_1\lfloor k/d_1\rfloor$. Consider $M_2=M_1\circ \hat M_2$. By Lemma~\ref{concatenation}, $M_2$ is $\{k,k_2\}$-good, $[r]\backslash L_1$-good. By Lemma~\ref{construct}, there is a $s(n-k^4,k_2)\times (n-2k^4)$ matrix $M_2'$ that is $k$-dense, $\{k,k_2\}$-good, $[r]\backslash L_1$-good. Let $L_2\subset [r]$ be the largest subset such that $M_2'$ is $L_2$-bad. Then $L_2\subseteq L_1$. By Lemma~\ref{N1}, there is $L_2'=\{j_{2,1},\ldots,j_{2,\ell_2}\}\subseteq L_2$ of size at most $\ell_2=|L_2'|=\log k$ such that $d_2=\gcd(L_2')$. By Lemma~\ref{Lamb}, if $d_2|k_2$ then there are $\lambda_{2,1},\ldots,\lambda_{2,\ell_2}\le k$ such that $\lambda_{2,1}j_1+\ldots+\lambda_{2,\ell_2} j_{\ell_2}=k_2$. By Lemma~\ref{zero}, $M_2'$ is $k_2$-bad. A contradiction. Therefore $d_1=\gcd(L_2)\not| k_2$.

Since $L_2\subseteq L_1$ we have $d_1|d_2$. Since $d_2\not| k_2=d_1\lfloor k/d_1\rfloor$, $d_2\not=d_1$. Therefore $d_2\ge 2d_1$.
We repeat the above construction with a matrix $\hat M_3$ that is $s(n-2k^4,k_2)\times (n-2k^4)$ matrix that is $k_2$-good, where $k_2=d_2\lfloor k/d_2\rfloor$ and get a $s(n-2k^4,k_2)\times (n-2k^4)$-matrix that is $L_3$-bad and $k$-good and $([k]\backslash L_3)$-bad where $d_3=\gcd(L_3)\ge 2d_2\ge 2^2d_1$.

We repeat the above until we get a $d_t\ge 2^td_1>k^{1/2}/\log k$. Then $t\le \log k$ and the matrix is of size $s'\times (n-k^4\log k)$ where $d_0=1$ and
$$s'=\sum_{i=0}^{\log k} s(n-ik^4,d_i\lfloor k/d_i\rfloor).$$}

\bibliography{TestingRef}
\ignore{\newpage

$$\mbox{{\Huge HERE}}$$
\section{Appendix: A More Optimal Tester}
In this section we prove
\begin{theorem} Let $\epsilon>0$ and $k^2\le \delta n$. {\bf Test-Linear$_k$-Delta} is a non-adaptive distribution-free two-sided $\epsilon$-tester for $k$-Linear that has success probability at least $1-\delta$ that makes $$O\left(k\log \frac{k}{\delta}+\frac{1}{\epsilon}\log\frac{1}{\delta}\right)$$ queries.
\end{theorem}

Let $\D_p$ be the distribution over $\{0,1\}^n$ where each coordinate is chosen to be $1$ with probability $p$ and $0$ with probability $1-p$. Let $\tau_m(p)=\Pr_{x\sim \D_p} [x_1+\cdots+x_m=1]$. Then 
$$\tau_m(p)=\sum_{i=0}^\infty {m\choose 2i+1}(1-p)^{m-(2i+1)} p^{2i+1}.$$
Also, 
\begin{eqnarray}\label{taup2}
\tau_{2m}(p)=2\tau_{m}(p)(1-\tau_m(p)).
\end{eqnarray}
Since $\tau_m(p)$ is continuous function $\tau_m(0)=0$ and $\tau_m(1/2)=1/2$, there is $0<p_0<1/2$ such that $0.25<\tau_m(p_0)<0.251$.

\newcounter{ALgg}
\setcounter{ALgg}{0}
\newcommand{\stepalg}{\stepcounter{ALgg}$\arabic{ALgg}.\ $\>}
\newcommand{\steplabelalg}[1]{\addtocounter{ALgg}{-1}\refstepcounter{ALgg}\label{#1}}
\begin{figure}[h!]
  \begin{center}
  \fbox{\fbox{\begin{minipage}{28em}
  \begin{tabbing}
  xxx\=xxxx\=xxxx\=xxxx\=xxxx\=xxxx\= \kill
  {{\bf Test-Linear$_k$-Delta}}\\
{\bf Input}: Oracle that accesses a Boolean function $f$  \\

  {\bf Output}: Either ``Accept'' or ``Reject''\\ \\
    {\bf Procedures}\\
\> {\bf Self-corrector} $g(x):=f(x+z)+f(z)$ for uniform random $z\in\{0,1\}^n$.\\
\> {\bf BLR} A procedure that $\epsilon$-tests Linear using $c_1/\epsilon$ queries.\\
\> {\bf Hofmeister's Algorithm} for learning $k$-Linear$^*$ with $c_2k\log n$ queries.\\ \\

{\bf Stage 1. BLR}\\
\stepalg\steplabelalg{ALgg10}
Run BLR on $f$ with $\epsilon'=1/(1000c_2k\log (256k^2/\delta))$ \\
\stepalg\steplabelalg{ALgg10}
If BLR rejects then reject.\\
{\bf Stage 2.1. Testing if $g$ is in Linear$\backslash 8k$-Linear$^*$}\\
\stepalg\steplabelalg{ALgg19}
Let $0<p_0<1/2$ be such that $0.25\le\tau_{k}(p_0)\le 0.251$.\\
\stepalg\steplabelalg{ALgg20}
$m=c_3\log(1/\delta)$.\\
\stepalg\steplabelalg{ALgg21}
Randomly choose $x^{(1)},\ldots,x^{(m)}\in\{0,1\}^n$ according to $\D_{p_0}$; \\
\stepalg\steplabelalg{ALgg26}
$Est\gets (f(x^{(1)})+\cdots+f(x^{(m)}))/h$.\\
\stepalg\steplabelalg{ALgg22}
If $Est> 3/8$ then Reject.\\
{\bf Stage 2.2. Testing if $g$ is in $k$-Linear assuming it is in $8k$-Linear$^*$}\\
\stepalg\steplabelalg{ALgg30}
Choose a uniform random partition $X_1,\ldots,X_{r}$ for $r=256k^2/\delta$\\
\stepalg\steplabelalg{ALgg40}
Run Hofmeister's algorithm to learn $F=g((y_1)_{X_1}\circ (y_2)_{X_2}\circ \cdots \circ(y_r)_{X_r})$\\
\stepalg\steplabelalg{ALgg50}
If the output is not in $k$-Linear then reject \\
{\bf Stage 3. Consistency test}\\
\stepalg\steplabelalg{ALgg65}
Choose $t=4/\epsilon\log(1/\delta)$ sample $x^{(1)},\ldots, x^{(t)}$ according to $\D$.\\
\stepalg\steplabelalg{ALgg70}
For $i=1$ to $t$.\\
\stepalg\steplabelalg{ALgg80}
\> If $f(x^{(i)})\not= g(x^{(i)})$ then reject.\\
\stepalg\steplabelalg{ALgg81}
Accept.
  \end{tabbing}
  \end{minipage}}}
  \end{center}
	\caption{A tester for $k$-Linear$^*$ and $k$-Linear.}
	\label{ALgg01}
\end{figure}

\begin{proof} Since there is no stage in the tester that uses the answers of the queries asked in previous ones, the tester is non-adaptive.

In Stage 1 the tester makes $O(1/\epsilon')=O(k\log (k/\delta))$ queries. In stage~2.1, $O(\log(1/\delta))$ queries. In stage~2.2, $O(k\log r)=O(k\log (k/\delta))$ queries and in stage~3, $O((1/\epsilon)\log(1/\delta))$ queries. Therefore, the query complexity of the tester is $$O\left(k\log \frac{k}{\delta}+\frac{1}{\epsilon}\log\frac{1}{\delta}\right).$$

We will assume that $k\ge 12$. For $k<12$, (see the introduction and Table~\ref{Table}) the non-adaptive tester of $k$-Junta with the BLR tester and the self-corrector gives a non-adaptive testers that makes $O(1/\epsilon)=O(k\log k+1/\epsilon)$ queries.

\vspace{10px}
\noindent
{\bf Completeness}: Suppose $f\in k$-Linear. Then for every $x$ we have $g(x)=f(x+z)+f(z)=f(x)+f(z)+f(z)=f(x)$. Therefore, $g=f$. In stage 1, BLR is one-sided and therefore it does not reject. In stage 2.1, since $p_0$ satisfies $0.25\le \tau_k(p_0)\le 0.251$ and $g=f$ is $k$-linear then, by Chernoff's bound, with probability at least $1-\delta/2$ the estimate is less than $Est<3/8$ and the tester does not reject. In stage 2.2, with probability at least $1-{k\choose 2}/(256 k^2/\delta)\ge 1-\delta/2$, the relevant coordinates of $f$ fall into different $X_i$ and then $F=g((y_1)_{X_1}\circ (y_2)_{X_2}\circ \cdots \circ(y_r)_{X_r})=f((y_1)_{X_1}\circ (y_2)_{X_2}\circ \cdots \circ(y_r)_{X_r})$ is $k$-linear. Then, Hofmeister's algorithm returns a $k$-linear function. Therefore, with probability at least $1-\delta/2$ the tester does not reject. Stage 3 does not reject since $f=g$. Therefore, with probability at least $1-\delta$, the tester does not reject.

\vspace{10px}
\noindent
{\bf Soundness}: We have four cases
\begin{description}
\item[Case 1]: $f$ is $\epsilon'$-far from Linear w.r.t. the uniform distribution.
\item[Case 2]: $f$ is $\epsilon'$-close to $g\in$Linear and $g$ is in $\Linear\backslash 8k\sLinear^*$.
\item[Case 3]: $f$ is $\epsilon'$-close to $g\in$Linear and $g$ is in $8k\sLinear^*\backslash k\sLinear$.
\item[Case 4]: $f$ is $\epsilon'$-close to $g\in$Linear, $g$ is in $k\sLinear$ and $f$ is $\epsilon$-far from $k\sLinear$ w.r.t. $\D$.
\end{description}

For Case 1, if $f$ is $\epsilon'$-far from Linear then, in stage 1, BLR rejects with probability $2/3$.

For Cases 2 and 3, since $f$ is $\epsilon'$-close to $g$, for any fixed $x\in\{0,1\}^n$ with probability at least $1-2\epsilon'$ (over a uniform random $z$), $f(x+z)+f(z)=g(x+z)+g(z)=g(x)$. Since stages~2.1 and~2.2 makes $(16k+c_2k\log r)$ queries (to $g$), with probability at least $1-(16k+c_2k\log r)2\epsilon'\ge 5/6$, $g(x)$ is computed correctly for all the queries in stages 2.1 and 2.2.

For Case 2, consider stage 2.1 of the tester. If $g$ is in $\Linear\backslash 8k\sLinear^*$ then $g$ has more than $8k$ relevant coordinates. The probability that less than or equal to $4k$ of $X_1,\ldots,X_{16k}$ contains relevant coordinates of $g$ is at most
$${16k\choose 4k}\frac{1}{4^{8k}}\le \left(\frac{e16k}{4k}\right)^{4k}\frac{1}{4^{8k}}\le \frac{1}{12}.$$

If $X_i$ contains the relevant coordinates $i_1,\ldots,i_\ell$ then $g(x_{X_i}\circ 0_{\overline X_i})=x_{i_1}+\cdots+x_{i_\ell}$ and therefore, for a uniform random $z\in\{0,1\}^n$, with probability at least $1/2$, $g(z_{X_i}\circ 0_{\overline X_i})=1$. Therefore, if at least $4k$ of $X_1,\ldots,X_{16k}$ contains relevant coordinates then, by Chernoff bound, with probability at least $1-e^{-k/4}\ge 11/12$, the counter ``{\it Count}'' is greater than $k$. Therefore, for Case 2, if $g$ is in $\Linear\backslash 8k\sLinear^*$ then, with probability at least $1-(1/6+1/12+1/12)=2/3$, the tester rejects.

For Case 3, consider stage 2.2. If $g$ is in $8k\sLinear^*\backslash k\sLinear$ then $g$ has at most $8k$ relevant coordinates. Then with probability at least $1-{8k\choose 2}/(256 k^2)\ge 5/6$, the relevant coordinates of $g$ fall into different $X_i$ and then Hofmeister's algorithm returns a linear function with the same number of relevant coordinates as $g$. Therefore stage 2.2 rejects with probability at least $2/3$.

For Case 4, if $g$ is in $k$-Linear and $f$ is $\epsilon$-far from $k$-Linear w.r.t. $\D$, then $f$ is $\epsilon$-far from $g$ w.r.t.~$\D$. Then for uniform random $z$ and $x\sim \D$,
\begin{eqnarray*}
\Pr_{\D,z}[f(x)\not=g(x)]&\ge& \Pr_{\D,z}[f(x)\not=g(x)|g(x)=f(x+z)+f(z)]\Pr_{\D,z}[g(x)=f(x+z)+f(z)]\\
&=& \Pr_{\D}[f(x)\not=g(x)]\Pr_{z}[g(x)=f(x+z)+f(z)]\\
&\ge& \epsilon (1-\epsilon')\ge \epsilon/2.
\end{eqnarray*}
Therefore, with probability at most $(1-\epsilon/2)^{t}=(1-\epsilon/2)^{4/\epsilon}\le 1/3$, stage 3 does not reject.
\end{proof}}

\end{document}